\definecolor{codegreen}{rgb}{0,0.6,0}
\definecolor{codegray}{rgb}{0.5,0.5,0.5}
\definecolor{codepurple}{rgb}{0.58,0,0.82}
\definecolor{backcolour}{rgb}{1.0,1.0,1.0}
\lstdefinestyle{mystyle}{
    backgroundcolor=\color{backcolour},   
    commentstyle=\color{codegreen},
    keywordstyle=\color{magenta},
    numberstyle=\scriptsize\color{codegreen},
    stringstyle=\color{codepurple},
    basicstyle=\ttfamily\small,
    breakatwhitespace=false,         
    breaklines=true,                 
    captionpos=b,                    
    keepspaces=true,                 
    numbers=left,                    
    numbersep=5pt,                  
    showspaces=false,                
    showstringspaces=false,
    showtabs=false,                  
    tabsize=2
}
\newtheorem{theorem}{Theorem}
\newtheorem{lemma}{Lemma}
\newtheorem{definition}{Definition}
\newtheorem{example}{Example}
\begin{document}
\title{Applying consensus and replication securely with FLAQR}
\author{
\IEEEauthorblockN{Priyanka Mondal}
\IEEEauthorblockA{\textit{University of California, Santa Cruz} \\
pmondal@ucsc.edu}
\and
\IEEEauthorblockN{Maximilian Algehed}
\IEEEauthorblockA{\textit{Chalmers University of Technology} \\
 algehed@chalmers.se}
\and
\IEEEauthorblockN{Owen Arden}
\IEEEauthorblockA{\textit{University of California, Santa Cruz} \\
oarden@ucsc.edu }
}
\maketitle
\bstctlcite{MyBSTcontrol}

\begin{abstract}
Availability is crucial to the security of distributed systems, but
guaranteeing availability is hard, especially when participants in the
system may act maliciously.  Quorum replication protocols provide
both integrity and availability: data and
computation is replicated at multiple independent hosts, and a quorum
of these hosts must agree on the output of all operations applied to
the data.
Unfortunately, these protocols have high overhead and can be difficult to
calibrate for a specific application's needs.
Ideally, developers could use high-level abstractions for
consensus and replication to write fault-tolerant code by
that is secure by construction.

This paper presents Flow-Limited Authorization for Quorum Replication
(FLAQR), a core calculus for building distributed applications with
heterogeneous quorum replication protocols while enforcing end-to-end
information security. Our type system ensures that well-typed FLAQR
programs cannot _fail_ (experience an unrecoverable error) in ways
that violate their type-level specifications.  We present
noninterference theorems that characterize FLAQR's confidentiality,
integrity, and availability in the presence of consensus, replication,
and failures, as well as a liveness theorem for the class of majority
quorum protocols under a bounded number of faults.
\end{abstract}

\section{Introduction} \label{sec:Intro}

Failure is inevitable in distributed systems, but its consequences
may vary.  The consequences of failure are particularly
severe in centralized system designs, where single points-of-failure
can render the entire system inoperable.  Even distributed systems are
sometimes built using a single, centralized authority to execute
security-critical tasks.  If this trusted entity is compromised, the
security of the entire system may be compromised as well.

Building reliable _decentralized systems_, which have no single
point-of-failure, is a complex task. Quorum replication protocols such
as Paxos~\cite{paxos} and PBFT~\cite{pbft}, and blockchains such as
Bitcoin~\cite{bitcoin} replicate state
and computation at independent nodes and use consensus protocols to
ensure the integrity and availability of operations on system state.
In these protocols, there is neither centralization of function nor
centralization of trust: all honest nodes work to replicate the same
computation on the same data, and this redundancy helps the system
tolerate a bounded number of node failures and corruptions.

Within a single trust domain such as a corporate data center, replicas
likely have uniform trust relationships and may be treated
interchangeably.  However, many
large-scale systems depend on services hosted by multiple external
services. Even when a system's internal components are replicated, 
developers must take into account the failure properties of external
dependencies when considering their own robustness. 

Information flow control (IFC) has been used to enforce decentralized
security in distributed systems for confidentiality and integrity
(e.g., Fabric~\cite{jfabric} and DStar~\cite{dstar}).  Less
attention has been paid to enforcing decentralized availability
policies with IFC. In particular, no language (or protocol) we are
aware of addresses systems that compose multiple quorums or consider
quorum participants with arbitrary trust relationships.

To build a formal foundation for such languages, we
present FLAQR, a core calculus for Flow-Limited
Authorization~\cite{flam} for Quorum Replication.
FLAQR uses
high-level abstractions for replication and consensus that help
manage tradeoffs between the availability and integrity of
computation and data.

\begin{figure*}
  \centering
  \begin{subfigure}[b]{0.23\textwidth}
  \centering
  \includegraphics[scale=0.28]{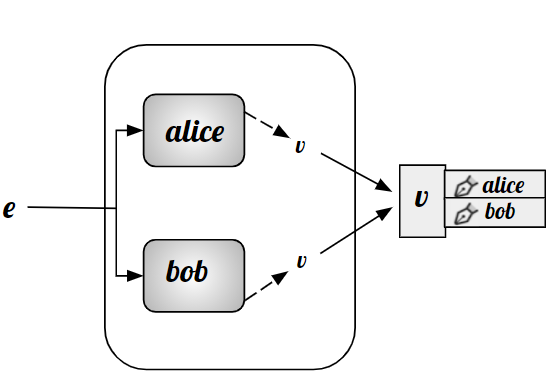}
  \caption{More Integrity}
  \label{fig:compare}
  \end{subfigure}
  \begin{subfigure}[b]{0.23\textwidth}
  \centering
  \includegraphics[scale=0.28]{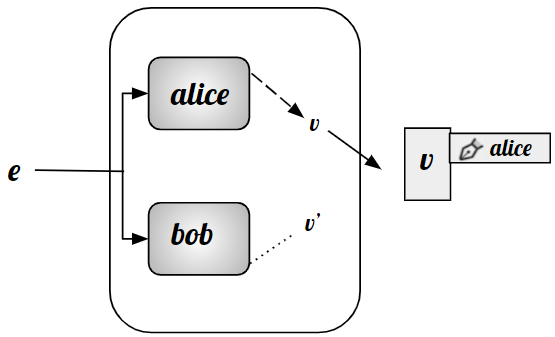}
  \caption{More Availability}
  \label{fig:select}
  \end{subfigure}
  \begin{subfigure}[b]{0.23\textwidth}
  \centering
  \includegraphics[scale=0.48]{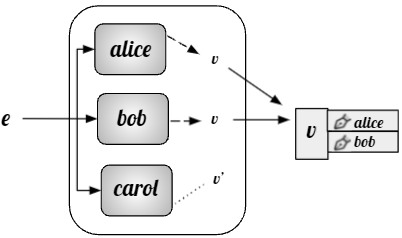}
  \caption{More Integrity and Availability}
  \label{fig:majority}
  \end{subfigure}
  \begin{subfigure}[b]{0.23\textwidth}
  \centering
  \includegraphics[scale=.28]{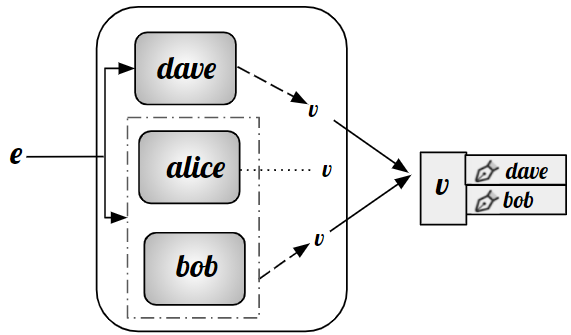}
  \caption{Heterogeneous trust}
  \label{fig:hetero}
  \end{subfigure}
  \caption{Integrity-Availability Trade-off}
  \label{fig:tradeoff}
\end{figure*}

Consider the scenarios in
Figure~\ref{fig:tradeoff}.  Shaded boxes represent hosts in a distributed
system.  Dashed lines denote outputs that contribute to the final
result, a value $v$.  Dotted lines denote ignored outputs and solid
lines indicate the flow of data from an initial expression $e$
distributed to hosts to the collected result.  Results are accompanied
by labels that indicate which hosts influenced the final result.

In Figure~\ref{fig:compare}, $e$ is distributed to hosts \texttt{alice}
and \texttt{bob}. The hosts' results are compared and, if they match,
the result is produced.  Since a value is output only
if the values match, we can treat the output of this protocol as
having \emph{more integrity} than just \texttt{alice} or 
\texttt{bob}.
While both \texttt{alice} and
\texttt{bob} technically influence the output, 
neither host can unilaterally control its value.
However, either host can cause the protocol to fail.   

By contrast, the protocol in Figure~\ref{fig:select} prioritizes
availability over integrity: if either \texttt{alice} or \texttt{bob}
produce a value, the protocol outputs a value—in this case
\texttt{alice}'s.  Here, neither host can unilaterally cause a
failure; the protocol only fails if both \texttt{alice} and
\texttt{bob} fail.  Either \texttt{alice} or \texttt{bob} (but not
both) has complete control over the result in the event of the other's
failure, so we should treat the output as having \emph{less integrity}
than just \texttt{alice} or \texttt{bob}.

With an adequate number of hosts, we can combine these two techniques
to form the essential components of a quorum system.  In Figure~\ref{fig:majority}, $e$ is
replicated to \texttt{alice}, \texttt{bob}, and \texttt{carol}.  This protocol outputs a
value if any two hosts have matching outputs.  Since \texttt{alice} and \texttt{bob}
both output $v$, the protocol outputs $v$ and attaches \texttt{alice} and
\texttt{bob}'s signatures.  The non-matching value $v'$ from \texttt{carol} is
ignored.  Hence, this protocol prevents any single host from
unilaterally controlling the failure of the protocol or its output.

Figure~\ref{fig:majority} is similar in spirit to consensus protocols
such as Paxos or PBFT where quorums of independent replicas are used to
tolerate a bounded number of failures.  FLAQR also permits us to write
protocols where principals have differing trust relationships.
Figure~\ref{fig:hetero} illustrates a protocol that tolerates failure
(or corruption) of either \texttt{alice} or \texttt{bob}, but requires \texttt{dave}'s
output to be part of any quorum.  This protocol will fail if both
\texttt{alice} and \texttt{bob} fail to produce matching outputs, but can also fail
if \texttt{dave} fails to produce a matching output. This example illustrates
the distributed systems where the hosts do not have homogeneous trust.

The main contributions of this paper are as follows:
\begin{itemize}
\item An extension of the static fragment of the Flow Limited Authorization 
  Model (FLAM) \cite{flam} with availability policies 
  and algebraic operators representing the effective authority of
  consensus and replication protocols (§\ref{sec:FLAMalgebra}-§\ref{sec:types}).

\item A formalization of the FLAQR language (§\ref{sec:FLAQRprimitives}) and accompanying results:
\begin{itemize}
    \item A liveness theorem for majority-quorum FLAQR protocols (§\ref{blameproofs}) which
      experience a bounded number of faults using a novel proof technique:
      a _blame semantics_ that associates failing executions of a FLAQR
      program with a set of principals who may have caused the failure.
      \item Noninterference theorems for confidentiality, integrity, and availability (§\ref{sec:secPropNI}). 
\end{itemize}
\end{itemize}

\paragraph*{Non-goals} The design of FLAQR is motivated by application-agnostic
consensus protocols such as Paxos~\cite{paxos} and PBFT~\cite{pbft}, but our present goal is not to develop a framework for
_verifying_ implementations of such protocols (although it would be interesting future work).
Rather, the goal is to develop
security abstractions that make it easier to create 
components with application-specific integrity and availability guarantees,
and compose them in a secure and principled way.

In particular, the
FLAQR system model lacks some features that a protocol verification model
would require, most notably a concurrent semantics.
Although this simplifies some aspects of consensus protocols, our
model retains many of the core challenges present in fault tolerance
models. For example, perfect fault detection is impossible and faulty
nodes can manipulate data to cause failures to manifest at other
hosts.

\section{Motivating Examples}\label{sec:section2.0}

In this section we present two motivating examples.  The first example
highlights the trade-off between integrity and availability.
The second example highlights the need for availability policies in distributed systems.

\subsection{Tolerating failure and corruption}
\label{sec:section2.1}
If a bank's deposit records are stored in a single node,
then customers will be unable to access their accounts if that
node is unavailable or is compromised.  To eliminate this single
point-of-failure, banks can replicate their records on multiple hosts
as illustrated in Figure~\ref{fig:majority}.  If a majority of nodes
agree on an account balance, then the system can tolerate 
the remaining minority
of nodes failing or returning corrupted results.

Consider a quorum system with three nodes: \texttt{alice},
\texttt{bob}, and \texttt{carol}.  To tolerate the failure of a single
node, balance queries attempt to contact all three nodes and compare
the responses.  As long as the client receives two responses with the
same balance, the client can be confident the balance is correct even
if one node is compromised or has failed.

\begin{figure}
\begin{lstlisting} 
getBalance($acct$):
    $bal_a$ = fetch $bal(acct)$ $@$ $alice$;$\label{fetchb}$
    $bal_b$ = fetch $bal(acct)$ $@$ $bob$;
    $bal_c$ = fetch $bal(acct)$ $@$ $carol$;$\label{fetche}$ 
    
    $bal_{ab}$ = ($bal_a$==$bal_b$) ? $bal_a$ : $\fail{}$;$\label{compb}$ 
    $bal_{bc}$ = ($bal_b$==$bal_c$) ? $bal_b$ : $\fail{}$;$\label{compbb}$
    $bal_{ca}$ = ($bal_c$==$bal_a$) ? $bal_c$ : $\fail{}$; $\label{compe}$  
    
    if $bal_{ab}$ != $\fail{}$ then $\label{retb}$  
       return $bal_{ab}$;
    else if $bal_{bc}$ != $\fail{}$ then
       return $bal_{bc}$;
    else if $bal_{ca}$ != $\fail{}$ then
       return $bal_{ca}$;
    else return $\fail{}$;$\label{rete}$   
\end{lstlisting}
\caption{Majority quorum}
\label{fig:majex}
\end{figure}

Figure~\ref{fig:majex} illustrates pseudocode for 
the \texttt{getBalance} function for this system.
The code fetches balances from the three nodes (lines
\ref{fetchb}-\ref{fetche}) and compares them (lines
\ref{compb}-\ref{compe}). If two balances match, then the balance
is returned, otherwise the function returns $\fail{}$ (lines
\ref{retb}-\ref{rete}).

The downside of this approach is that it is quite verbose and 
has repetitive checks. 
Small mistakes in any of these lines could have significant
consequences. For example, suppose a programmer wrote 
$bal_b$ instead of $bal_c$ on line \ref{compbb}. This small change
gives $\texttt{bob}$ (or an attacker in control of \texttt{bob}'s node)
the ability to unilaterally choose the return
value of the function, even when $\texttt{alice}$ and $\texttt{carol}$
agree on a different value. 

\subsection{Using best available services}\label{sec:section2.2}
Real world applications often consist of 
communication between entities with mutual distrust. 
The pseudocode in Figure \ref{fig:example2} 
communicates with two banks, represented by 
$b$ and $b'$,  
during a distributed computation.
A user has two accounts
${{acc}_1}$, and ${{acc}_2}$
with $b$ and $b'$ 
respectively.
The user links both 
accounts to an online service and 
\begin{enumerate}
\item wants to use the account with the 
        highest balance to pay the bill 
\item does not want to miss a payment
        if the account with maximum 
        balance is unavailable
\end{enumerate}

The pseudocode in Figure \ref{fig:example2} attempts to fetch the
balances of both accounts. Lines \ref{retbal1}-\ref{availe} selects
the fallback balance if one balance is unavailable. If both balances
are available, then the comparison on line $\ref{compbal12}$ selects
the largest balance.

\begin{figure}
\begin{lstlisting}
$bal_1$ =  fetch $bal(acc_1)$ $@$ $b$;
$bal_2$ =  fetch $bal(acc_2)$ $@$ $b'$;

if $bal_1$ == $\fail{\relax}$ && $bal_2$ == $\fail{\relax}$ then $\label{avails}$
   return $\fail{}$;
else if $bal_1$ == $\fail{\relax}$ then $\label{retbal1}$
   return $bal_2$;
else if $bal_2$ == $\fail{\relax}$ then
   return $bal_1$; $\label{availe}$

if $bal_1$ > $bal_2$ then $\label{compbal12}$
    return $bal_1$;
else
    return $bal_2$; $\label{retbal2}$
\end{lstlisting}
\caption{Available largest balance}
\label{fig:example2}
\end{figure}

This example demonstrates that the availability of data can affect a
program's output, making it an important security property in
distributed computations.  Unfortunately, ensuring the availability of
a distributed application can be a tedious and error-prone process.

\section{Specifying availability policies} \label{sec:FLAMalgebra}
FLAQR policies are specified using an extension of the
FLAM~\cite{flam,jflac} principal algebra that includes availability
policies.\footnote{Specifically, we extend the
  static fragment of FLAM's principal algebra defined by FLAC~\cite{jflac}.}
FLAM principals represent both the
_authority_ of entities in a system as well as bounds on the
_information flow policies_ that authority entails. For example,
Alice's authority is represented by the principal \texttt{alice}.
_Authority projections_ allow us to refer to specific categories of
Alice's authority.  The principal $\texttt{alice}^{\confid}$
 refers to
Alice's confidentiality authority: what Alice can read. Principal
$\texttt{alice}^{\integ}$ refers to Alice's integrity authority: what
Alice can write or influence.\footnote{Prior FLAM-based formalizations have used $\rightarrow$ and $\leftarrow$ for confidentiality and integrity, respectively.} Principal $\texttt{alice}^{\avail}$
refers to her availability authority: what Alice can cause to _fail_.
We refer to the set of all _primitive principals_ such as \texttt{alice} and \texttt{bob}
as $\N$. 

We can write conjunction of two principals
as $\texttt{alice} \wedge \texttt{bob}$, 
the combined authority of Alice and Bob. Put another way, $\texttt{alice} \wedge \texttt{bob}$
is a principal both Alice and Bob \emph{trust}.
The disjunction of two
principals' authority is written $\texttt{alice} ∨ \texttt{bob}$.  
This is a principal whose authority is less than both Alice and Bob; either Alice or Bob
can act on behalf of the principal $\texttt{alice} ∨ \texttt{bob}$. 

The confidentiality, integrity, and availability authorities
make up the totality of a principal's authority, so writing
$\texttt{alice}^{\confid} ∧ \texttt{alice}^{\integ}  ∧ \texttt{alice}^{\avail}$
is equivalent to writing $\texttt{alice}$. For brevity, we sometimes write
$\texttt{alice}^{\confid \integ}$ as a shorthand for $\texttt{alice}^{\confid} ∧
\texttt{alice}^{\integ}$ when we wish to include all but one kind of authority.

In addition to conjunctions and disjunctions of authority, FLAQR also
introduce two new operators: _partial conjunction_ ($\comand{}{}$),
and _partial disjunction_ ($\selor{}{}$).  These operations are
necessary to represent the tradeoffs between integrity and
availability mediated by consensus and replication.  Consider the
``more integrity'' protocol from Figure~\ref{fig:compare}.  It is
reasonable to think of the consensus value $v$ as having more
integrity than (or at least, ``not less integrity than'') Alice or Bob
alone, but it is important that we distinguish between this authority
and the combined integrity authority of Alice and Bob -
$(\texttt{alice} ∧ \texttt{bob})^{\integ}$.  A principal with
integrity authority $(\texttt{alice} ∧ \texttt{bob})^{\integ}$ may act
arbitrarily on behalf of both Alice and Bob since it is trusted by them.
In contrast, the integrity authority of the protocol in Figure~\ref{fig:compare}
is _not_ fully trusted by Alice and Bob. Instead, Alice and Bob only trust
the protocol when Alice and Bob agree on the value $v$.  If they do not agree,
then the protocol has failed and no value should be produced.  For this reason,
we describe the integrity of consensus values such as $v$ as the _partial conjunction_
of Alice and Bob, written ${(\comand{\texttt{alice}}{\texttt{bob}})}^{\integ}$.

Similarly, for replication protocols like that in
Figure~\ref{fig:select}, we want to distinguish the integrity of
values that may have been received from either Alice or Bob due to
failure (or arbitrarily choosing one over the other), from the
integrity of values that may have been influenced by both Alice and
Bob: ${(\texttt{alice} ∨ \texttt{bob})}^{\integ}$.  The integrity of value
replicated by Alice and Bob, written as the _partial disjunction_
${(\selor{\texttt{alice}}{\texttt{bob}})}^{\integ}$, does not have more integrity
than Alice or Bob alone since we cannot guarantee which host's value
will be used in the event of a failure.  The replicated value does,
however, have more integrity than ${(\texttt{alice} ∨ \texttt{bob})}^{\integ}$,
since this policy permits both principals to influence associated
values.

We can compare the authority of principals using the _acts-for_ relation
$≽$, which partially orders principals by increasing authority.
We form the set of all principals $\P$ as the closure.
We
say Alice _acts for_ Bob (or equivalently, Bob trusts Alice) and write
$\texttt{alice} ≽ \texttt{bob}$ when Alice has at least as much
authority as Bob.  The $≽$ relation forms a lattice with join
$\wedge$, meet $\vee$, greatest element $\top$, and least element
$\bot$.

\begin{figure*}
  \begin{mathpar}
\Rule{PAndL}
{\rafjudge{\Pi}{p_i}{p} 
}
{\rafjudge{\Pi}{\comor{p_1}{p_2}}{p}}

\Rule{PAndR}
{\rafjudge{\Pi}{p}{p_1}\\\\
 \rafjudge{\Pi}{p}{p_2}}
{\rafjudge{\Pi}{p}{\comor{p_1}{p_2}}}

\Rule{AndPAnd}
{}
{\rafjudge{\Pi}{p \wedge q}{\comor{p}{q}}}


\Rule{PAndPOr}
{}
{\rafjudge{\Pi}{\comor{p}{q}}{\selor{p}{q}}}

\Rule{ProjPAndL}
{}
{\rafjudge{\Pi}{\comor{p^{\pi}}{q^{\pi}}}{(\comor{p}{q})^{\pi}}}

\Rule{ProjPAndR}
{}
{\rafjudge{\Pi}{(\comor{p}{q})^{\pi}}{\comor{p^{\pi}}{q^{\pi}}}}

\Rule{ProjPOrL}
{}
{\rafjudge{\Pi}{\selor{p^{\pi}}{q^{\pi}}}{(\selor{p}{q})^{\pi}}}

\Rule{ProjPOrR}
{}
{\rafjudge{\Pi}{(\selor{p}{q})^{\pi}}{\selor{p^{\pi}}{q^{\pi}}}}
\hfill

\Rule{POrOr}
{}
{\rafjudge{\Pi}{\selor{p}{q}}{p \vee q}}

\end{mathpar}
\caption{Selected acts-for rules for partial conjunction and disjunction.}
\label{fig:partialAF}
\end{figure*}

\begin{figure}
\centering\includegraphics[height=0.4\textwidth,width=0.4\textwidth]{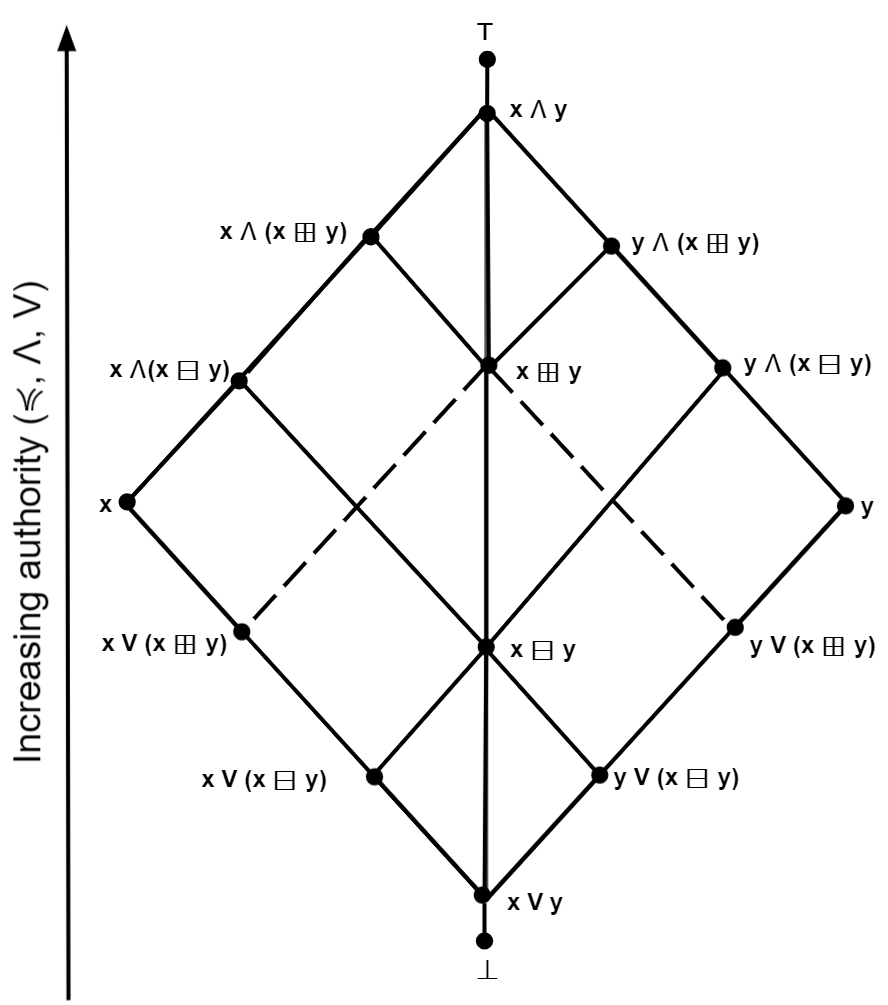}
\caption{The FLAQR authority lattice for the principal set $\{\bot, x, y, \top \}$.}
\label{fig:flaqrlat}
\end{figure}

We extend the acts-for relation defined by Arden et al.~\cite{jflac}
with new rules for availability authority and partial conjunction and
disjunction.  Figure~\ref{fig:partialAF} presents a selection of these
rules—we have omitted the distributivity rules for brevity. The
complete rule set is presented in Figure~\ref{fig:partialactsforfull}
in the appendices.

As a consequence of these new acts-for rules we
have additional points on the authority lattice.
Figure \ref{fig:flaqrlat} illustrates 
the authority sublattice over 
elements $\{\bot, x, y, \top\}$, with $\wedge$ 
as join and $\vee$ as meet. 
Figure \ref{fig:flaqrlat} shows
the trust ordering of all 
possible combination of elements
that can be formed on the set 
$\{\bot, x, y, \top\}$ with
operations $\wedge$, $\vee$,
$\comor{}{}$ and $\selor{}{}$ over them.
The relationship between principals 
$\bot, x, y, x \wedge y , x \vee y,$ and $\top$
is the same as in FLAM, but Figure~\ref{fig:flaqrlat}
also includes principals constructed using partial conjunctions
and disjunctions.
For example, $x \wedge (\comor{x}{y})$ is the least upper bound of
$x \wedge (\selor{x}{y})$ and $\comor{x}{y}$.
This is the case because of the newly introduced 
rule \ruleref{PAndPOr} in Figure~\ref{fig:partialAF},
which lets us simplify $x \wedge (\selor{x}{y}) ∧ \comand{x}{y}$
to $x \wedge (\comor{x}{y})$.

To compare the restrictiveness of information flow policies, we use
the _flows-to_ relation $⊑$, which partially orders principals by
increasing policy restrictiveness, rather than by authority. For
example, we say Alice's integrity flows to Bob's integrity and write
$\texttt{alice}^{\integ} ⊑ \texttt{bob}^{\integ}$ if Bob trusts
information influenced by Alice at least as much as information he
influenced himself.  Likewise, we write $\texttt{alice}^{\confid} ⊑
\texttt{bob}^{\confid}$ if Alice trusts Bob to protect the
confidentiality of her information, and 
$\texttt{alice}^{\avail} ⊑ \texttt{bob}^{\avail}$ if
Bob is trusted to keep Alice's data available. 
The flows-to relation behaves
similarly to a sub-typing relation.  Treating information 
labeled $\texttt{alice}^{\confid\integ\avail}$ 
(i.e. $\texttt{alice}$) as though it was labeled
$\texttt{bob}^{\confid\integ\avail}$ (i.e. $\texttt{bob}$)  
is only safe (doesn't violate anyone's
policies) if $\texttt{alice}^{\confid\integ\avail} ⊑
\texttt{bob}^{\confid\integ\avail}$ 
(i.e. $\texttt{alice} ⊑ \texttt{bob}$). 

An advantage of using FLAM principals is that we can define the 
flows-to relation in terms
of the acts-for relation, simplifying our formalism.
$$ p ~ \flowsto ~q\ \text{ if and only if }
q^{\confid} ≽ p^{\confid}
\text{ and }\ p^{\integ} ≽ q^{\integ}
\text{ and } p^{\avail} ≽ q^{\avail}$$
Based on this, a flows-to relation exists between 
any two points in Figure \ref{fig:flaqrlat} and a flow from $p$ to $q$ is secure only when $q^{\confid}$ is at least as
confidential as $p^{\confid}$, $q^{\integ}$ trusts information influenced by $p^{\integ}$, and $q^{\avail}$
cannot cause failures that $p^{\avail}$ cannot. 
The flows-to relation also forms a (distinct) lattice with joins $⊔$ and meet $⊓$ defined
in terms of their authority lattice counterparts.
\begin{align*}
  p \join q &\triangleq (p^{\confid} \wedge q^{\confid}) \wedge (p^{\integ} \vee q^{\integ}) \wedge (p^{\avail} \vee q^{\avail})\\
  p \meet q &\triangleq (p^{\confid} \vee q^{\confid}) \wedge (p^{\integ} \wedge q^{\integ}) \wedge (p^{\avail} \wedge q^{\avail})
\end{align*}


\section{Syntax and security abstractions}\label{sec:FLAQRprimitives}
\begin{figure}
  {\small
  \[
    \begin{array}{rcl}
      \multicolumn{3}{l}{ π ∈ {\{"c","i","a"\}}  \text{  (projections)}} \\
      \multicolumn{3}{l}{ n ∈ \N \text{  (primitive principals)}} \\
      \multicolumn{3}{l}{ x ∈ \mathcal{V} \text{  (variable names)}} \\
      \\
     \multicolumn{3}{l}{
      p,ℓ,\pc \quad ::=\quad   n \sep \top \sep \bot \sep p^{π} \sep p ∧ p\sep p ∨ p}\\[0.4em]
          &\sep&  p ⊔ p \sep p ⊓ p \sep \selor{p}{p} \sep \comor{p}{p}\\[0.4em] 
      τ &::=& \voidtype \sep X \sep \sumtype{τ}{τ} \sep \prodtype{τ}{τ} \\[0.4em]
         & \sep & \func{τ}{\pc}{τ} \sep \tfunc{X}{\pc}{τ} \sep \says{ℓ}{τ}  \\[0.4em]
      v &::=& \void \sep \inji{v} \sep \pair{v}{v} \sep 
              \underline{\returnv{\ell}{v}} \\[0.4em]
        & \sep & \lamc{x}{τ}{\pc}{e} \sep \tlam{X}{\pc}{e}  \\[0.8em]
      f &::=& v \sep  \underline{\faila{\tau}} \\[0.8em]
      e &::=& f \sep x \sep e~e \sep e~τ \sep \return{\ell}{e} \sep 
             \pair{e}{e} \sep \proji{e} \sep \inji{e} \\[0.4em]
        & \sep & \bind{x}{e}{e} \sep \casexp{e}{x}{e}{e} \\[0.4em]
         & \sep &  \runa{\tau}{e}{p} \sep \underline{\ret{e}{p}} \sep \underline{\expecta{\tau}} \\[0.4em]  
        & \sep & \select{e}{e} \sep \compare{e}{e} 
    \end{array}
  \]
  }
  \caption{FLAQR Syntax. Underlined terms are generated during evaluation and are not available at the source level.}
  \label{fig:syntax}
\end{figure}

\begin{figure}
  \begin{subfigure}{0.5\textwidth}
  {\small
  \begin{flushleft}
  \begin{mathpar}
    \erule{E-Sealed}{}{\return{ℓ}{v}}{\returnv{ℓ}{v}}

    \erule{E-BindM}{}{ {\bind{x}{\returnv{ℓ}{v}}{e}}}{ {\subst{e}{x}{v}}}

    \erule{E-Compare}{}{\compare{\returnv{ℓ₁}{v}}{\returnv{ℓ₂}{v}}}{\returnv{\txcmp{ℓ₁}{ℓ₂}}{v}} 

    \erule{E-CompareFail}{v_1 \neq v_2 \\ τ=\says{(\txcmp{\ell_1}{\ell_2})}{\tau'}} {\compare{\returnv{ℓ₁}{v_1}}{\returnv{ℓ₂}{v_2}}}
       {\faila{τ}} 


    \erule{E-CompareFailL*}{\tau₁=\says{\ell_1}{\tau} \\ τ'=\says{(\txcmp{\ell_1}{\ell_2})}{\tau} }{\compare{(\faila{τ₁})} {\returnv{\ell_2}{v}}}
      {\faila{τ'}}
  
 

    \erule{E-Select}{}{\select{\returnv{\ell_1}{v_1}}
    {\returnv{\ell_2}{v_2}}}
    {\returnv{\txsel{\ell_1}{\ell_2}}{v_1}} 
    
    \erule{E-SelectL}{}{\select{\returnv{\ell_1}{v}}
    {(\faila{\says{\ell_1}{\tau}})}}
    {\returnv{\txsel{\ell_1}{\ell_2}}{v}} 
    %
    
    \erule{E-SelectFail}{τᵢ=\says{\ell_i}{\tau} \\ τ'=\says{(\txsel{\ell_1}{\ell_2})}{\tau}} {\select{(\faila{τ₁})}
    {(\faila{τ₂})}}
    {\faila{τ'}} 
    
    \erule{E-RetStep}{e \stepsone e'}{\ret{e}{c}}{\ret{e'}{c}}
    \erule{E-Step}{ {e} \stepsone  {e'}}{{E[e]}}{{E[e']}}
\hfill
\\
\end{mathpar}
\end{flushleft}
}
\end{subfigure} 

\begin{subfigure}{0.5\textwidth}
  \[
    \begin{array}{rcl}
      E & ::= & [\cdot] \sep E~e \sep v~E  \sep \return{ℓ}{E}  \sep \bind{x}{E}{e} \\[0.4em]
        & \sep & \ret{E}{p} \sep \select{E}{e} \sep \select{f}{E} \\[0.4em]
        &\sep  & \compare{E}{e} \sep \compare{f}{E} \\[0.4em]
    \end{array}
  \]
\end{subfigure} 
\caption{FLAQR local semantics and evaluation context}
\label{fig:semantics}
\end{figure}

\begin{figure}
  \small
  \begin{mathpar}
\erule{E-AppFail}{}{\lamc{x}{\tau}{pc}{e}~{\faila{\tau}}}{\subst{e}{x}{\faila{\tau}}}

\erule{E-SealedFail}{}{{\returnp{ℓ }{\faila{\tau}}}}{{\faila{\says{\ell}{\tau}}}}

\erule{E-InjFail}{}{\injia{\sumtype{\tau_1}{\tau_2}}{\faila{\tau_i}}}{\faila{\sumtype{\tau_1}{\tau_2}}}

\erule{E-ProjFail}{}{\proji{\faila{\prodtype{\tau_1}{\tau_2}}}}{\faila{\tau_i}} 
\hfill
\end{mathpar}
\caption{Propagation of fail terms}
\label{fig:failprop}
\end{figure}

\begin{figure*}

\begin{mathpar}
\derule{E-DStep} {e \stepsone e'}{\distcon{E[e]}{c}{t}}{\distcon{E[e']}{c}{t}}

   \derule{E-Run}{}{\distcon{E[\runa{\tau}{e}{c'}]}{c}{t}}{\distcon{\ret{e}{c}}{c'}
{\stackapp{E[\expecta{\tau}]}{c}{t}}}

   \derule{E-RetV}{}
    {\distcon{\ret{v}{c}}{c'}{\stackapp{E[\expecta{\says{pc^{{\integ}{\avail}}}{\tau'}}]}{c}{t}}}
    {\distcon{E[\returnv{pc^{{\integ}{\avail}}}{v}]}{c}{t}}

   \derule{E-RetFail}{}
{\distcon{\ret{(\faila{\tau'})}{c}}{c'}
{\stackapp{E[\expecta{\says{pc^{{\integ}{\avail}}}{\tau'}}]}{c}{t}}}
{\distcon{E[\faila{\says{pc^{{\integ}{\avail}}}{\tau'}}]}{c}{t}}
\hfill  
\end{mathpar}
\caption{Global semantics}
\label{fig:globsem}
\end{figure*}

\begin{figure}
  \[
    \begin{array}{rcl}
    s & ::= & \distcon{e}{c}{t} \\[0.4em]
    t & ::=  & \emptystack \sep 
                  \stackapp{E[\expecta{\tau}]}{c}{t} \\[0.4em]
    \end{array}
  \]
\caption{Global configuration stack}
\label{fig:sstack}
\end{figure}
Figures~\ref{fig:syntax} and~\ref{fig:semantics} present the FLAQR syntax
and selected evaluation rules. For exposition purposes, we omit some term
annotations and standard lambda calculus rules in order to focus on
FLAQR's contributions, but the complete, annotated FLAQR syntax and semantics are
found in Figures~\ref{fig:Annotatedsyntax}
and~\ref{fig:annotatedlocsem} in Appendix~\ref{sec:fulllang}.

FLAQR is based on FLAC~\cite{flac,jflac}, a monadic calculus in the
style of Abadi's Polymorphic DCC~\cite{abadi06}.  In addition to
standard additions to System F~\cite{girard71,girard72,reynolds74}
such as pairs and tagged unions, an Abadi-style calculus supports
monadic operations on values in a monad indexed by a lattice of
security labels.  Such a value has a type of the form
$\says{\ell}{\tau}$, meaning that it is a value of type $τ$,
_protected_ at level $ℓ$, where $ℓ$ is an element of the security
lattice.

FLAQR builds on FLAC's expressive principal algebra and type system to
model distributed security policies for applications that use
replication and consensus to enforce integrity and availability
policies.  FLAC supports policy downgrades through dynamic
delegations of authority between principals, but for simplicity we
omit these features in FLAQR.  

The monadic unit or return term $\return{ℓ}{e}$ protects the value
that $e$ evaluates to at level $ℓ$
(\ruleref{E-Sealed}).\footnote{Polymorphic DCC does not define a term
similar to $\returnv{ℓ}{v}$ and thus does not have an rule equivalent to
\ruleref{E-Sealed}.  The presence of side-effects in FLAC and FLAQR
require protected runtime values $\returnv{ℓ}{v}$ to be typed
differently from the source terms that created them $(\return{ℓ}{v})$. \ref{sec:types}}
Protected values, $\returnv{ℓ}{v}$ cannot be operated on
directly. Instead, a bind expression must be used to bind the
protected value to a variable whose scope is limited to the body of
the bind term (\ruleref{E-BindM}). The body performs the desired
computation and ``returns'' the result to the monad, ensuring the
result is protected.

The primary novelty in the FLAQR calculus is the introduction of
"compare" and "select" terms for expressing consensus and replication
workflows.  We represent the consensus problem as a comparison of two
values with the same underlying type but distinct outer security
labels.  In other words, we want to 
check the equality of values produced by two different principals.
If the
values match, we can treat them as having the (partially) combined integrity of
the principals.  If not, then the principals failed to reach
consensus.

Rule \ruleref{E-Compare} defines the former case: two syntactically equal
values protected at different labels evaluate to a value that combines
 labels using the _compare action_ on labels $\txcmp{}{}$. Intuitively,
$\txcmp{ℓ₁}{ℓ₂}$ determines the increase in integrity and the corresponding
decrease in availability inherent in requiring a consensus. We define $\txcmp{}{}$ 
formally in Definition~\ref{def:compare}.
\begin{definition}[Compare action on principals]
  \label{def:compare}
\[
  \txcmp{ℓ₁}{ℓ₂} \triangleq (ℓ₁^{\confid} ∧ ℓ₂^{\confid}) ∧ (\comor{ℓ₁^{\integ}}{ℓ₂^{\integ}}) ∧ (\ell_1^{\avail} \vee \ell_2^{\avail})
\]
\end{definition}
We also lift this notation to \says{}{} types by defining
$$\txcmp{\says{ℓ₁}{τ}}{\says{ℓ₂}{τ}} \triangleq \says{(\txcmp{ℓ₁}{ℓ₂})}{τ}$$

As discussed in Section~\ref{sec:FLAMalgebra}, the integrity authority
of a "compare" protocol is not as trusted as the conjunction of $ℓ₁$
and $ℓ₂$'s integrity.  Instead, we represent the limited ``increase''
in integrity authority\footnote{Strictly speaking, $\comand{x}{y}$ is
not an increase in integrity over $x$ (or $y$); $\comand{x}{y}$ and
$x$ are incomparable.} using a partial conjunction in
Definition~\ref{def:compare}.  In contrast, the decrease in
availability is represented by a (full) conjunction $\ell_1^{\avail}
\vee \ell_2^{\avail}$ since either $ℓ₁$ or $ℓ₂$ could unilaterally
cause the compare expression to fail.

The decreased availability resulting from applying "compare" is more
apparent in rules~\ruleref{E-CompareFail}
and~\ruleref{E-CompareFailL*}.  In~\ruleref{E-CompareFail}, two
unequal values are compared, resulting in a failure.  Failure is
represented syntactically using a $\faila{τ}$ term.  We use a type
annotation $τ$ on many terms in our formal definitions so that our
semantics is well defined with respect to failure terms, but we omit
most of these annotations in Figure~\ref{fig:semantics}.  In practice, we
expect a FLAQR compiler would infer these annotations automatically.

A "compare" term may also result in failure if either subexpression
fails. Rule~\ruleref{E-CompareFailL*} (simplified for exposition),
defines how failure of an input propagates to the output.  In fact,
most FLAQR terms result in failure when a subexpression fails.
Figure~\ref{fig:failprop} presents selected failure propagation
rules (complete failure propagation rules are presented in Figure
\ref{fig:Fullfailprop}).  
Note that "fail" terms are treated similarly to values, but
are distinct from them.  For example, in \ruleref{E-AppFail}, applying a lambda
term to a fail term substitutes the failure as it would a value, but in
\ruleref{E-SealedFail} the failure is propagated beyond the monadic
unit term.  This latter behavior captures the idea that failures cannot be
hidden or isolated in the same way as secrets or untrusted data.

Failures are tolerated using replication. A "select" term will
evaluate to a value as long as at least one of its subexpressions
does not fail. For example, rule~\ruleref{E-SelectL} returns its
left subexpression when the right subexpression fails.
In contrast to "compare", applying "select" increases availability
since either subexpression can be used, but reduces integrity
since influencing only one of the subexpressions is 
potentially sufficient
to influence the result of evaluating "select".
The effect of a "select" statement on the labels of its 
sub-expressions is captured with the
_select action_ $\txsel{}{}$.
\begin{definition}[Select action on principals]
  \label{def:select}
\[
  \txsel{ℓ₁}{ℓ₂}  \triangleq (ℓ₁^{\confid} ∧ ℓ₂^{\confid}) ∧ (\selor{ℓ₁^{\integ}}{ℓ₂^{\integ}}) ∧ (ℓ₁^{\avail} ∧ ℓ₂^{\avail})
\]
\end{definition}
We define the select action on types similarly to compare:
$$\txsel{\says{ℓ₁}{τ}}{\says{ℓ₂}{τ}} =\says{(\txsel{ℓ₁}{ℓ₂})}{τ} $$

The end result of a _select_ statement, 
$\select{\returnv{\ell_1}{v}}{\returnv{\ell_2}{v}}$, 
will have integrity of either $\ell_1^{\integ}$ or $\ell_2^{\integ}$
since only one of the two possible values will be used.  We
use a partial disjunction to represent this integrity since the result
does not have the same integrity as $ℓ₁$ or $ℓ₂$, but does have
more integrity than $ℓ₁ ∨ ℓ₂$ since it is never the case that _both_
principals influence the output.

\section{Global semantics}
We capture the distributed nature of quorum replication by embedding
the local semantic rules within a global distributed semantics in Figure~\ref{fig:globsem}.  This
semantics uses a _configuration stack_ $s=\distcon{e}{c}{t}$ (Figure~\ref{fig:sstack})
to keep track of the currently executing expression $e$, the host on which it is
executing $c$, and the remainder of the stack $t$. We also make explicit use of the
evaluation contexts from Figure~\ref{fig:semantics} to identify the reducible subterms across stack
elements. 

The core operation for distributed computation is 
$\runa{τ}{e}{p}$ which runs the computation $e$
of type $τ$ on node $p$.
Local evaluation steps are captured in the global semantics
via rule \ruleref{E-DStep}.  This rule says that if $e$ steps to $e'$ 
locally, then $E[e]$ steps to $E[e']$ globally.

Rule \ruleref{E-Run} takes an expression $e$ at host $c$, pushes a new
configuration on the stack containing $e$ at host $c'$ and places an
"expect" term at $c$ as a place holder for the return value.  

Once the remote expression is fully evaluated, rule \ruleref{E-RetV} 
pops the top configuration off the stack and replaces the "expect" term
at $c$ with the protected value $\returnv{\pc^{{\integ}{\avail}}}{v}$.
Rule \ruleref{E-RetFail} serves the same purpose for "fail" terms, but is necessary since
"fail" terms are not considered values (see Figure~\ref{fig:syntax}).
The label $\pc^{{\integ}{\avail}}$ reflects both the integrity and availability context
of the caller ($c$) as well as the integrity and availability of the remote host ($c'$).
We discuss this aspect of remote execution in more detail in Section~\ref{sec:types}.

\section{FLAQR typing rules}\label{sec:types}
\begin{figure*}
\begin{flushleft}
  \boxed{\TValGpcw{e}{τ}} \\
  \begin{mathpar}
    \Rule{Unit}{\rafjudge{\Pi}{c}{pc}}{\TValGpcw{\void}{\voidtype}}
    
    \Rule{Fail}{ \rafjudge{\Pi}{c}{pc}}
     {\TValGpcw{\faila{\tau}}{\tau}}

     \Rule{Expect}{ \rafjudge{\Pi}{c}{pc}}
      {\TValGpcw{\expecta{\tau}}{\tau}}
    
    \Rule{Lam}{%
     \TVal{\Pi;Γ,x\ty τ_1;\pc';u}{e}{τ_2} \\ \rafjudge{\Pi}{c}{pc} \\\\
     u = \UB{\func{τ_1}{\pc'}{τ_2}} \\
     \rafjudge{Π}{c}{u}
    }{\TValGpcw{\lamc{x}{τ_1}{\pc'}{e}}{\func{τ_1}{\pc'}{τ_2}}}

    \Rule{App}{%
      \TValGpcw{e_1}{\func{τ'}{\pc'}{τ}}\\\\
      \TValGpcw{e_2}{τ'} \\ 
      \drflowjudge{\Pi}{\pc}{\pc'}\\\\
      \rafjudge{\Pi}{c}{pc}
    }{\TValGpcw{e_1~e_2}{τ}}

    \Rule{UnitM}{
      \TValGpcw{e}{τ} \\ 
      \drflowjudge{\Pi}{\pc}{ℓ} \\\\ 
      \rafjudge{\Pi}{c}{pc}  
    }{\TValGpcw{\return{ℓ}{e}}{\says{ℓ}{τ}}}   

     \Rule{Sealed}{
      \TValGpcw{v}{τ} \\
      \rafjudge{\Pi}{c}{pc}   
    }{\TValGpcw{\returnv{ℓ}{v}}{\says{ℓ}{τ}}} 

    \Rule{BindM}{%
      \TValGpcw{e'}{\says{ℓ}{τ'}} \\ 
      \drflowjudge{\Pi}{\ell \sqcup pc}{\tau} \\\\
      \TVal{\Pi;Γ,x\ty τ';ℓ\sqcup pc;\worker}{e}{τ} \\ 
      \rafjudge{Π}{c}{pc} 
    }{\TValGpcw{\bind{x}{e'}{e}}{τ}}

\Rule{Run}{
\TVal{\Pi;\Gamma;\pc';c'}{e}{τ'} \\
\drflowjudge{\Pi}{pc}{pc'} \\\\
\rafjudge{\Pi}{c}{pc} \\
\rafjudge{\Pi}{c}{\UB{\tau'}}\\\\
\tau = \says{pc'^{\integ \avail}}{\tau'} \\
}
{\TValGpcw{\runa{\tau}{e}{c'}}{τ}}

\Rule{Ret}{
\TValGpcw{e}{τ} \\
\rafjudge{\Pi}{c'}{\UB{\tau}} \\\\
\rafjudge{\Pi}{c}{pc} \\
}
{\TValGpcw{\ret{e}{c'}}{\says{pc{^{\integ \avail}}}{\tau}}}

\Rule{Compare}{      
\forall i \in \{1,2\}.\TValGpcw{e_i}{\says{\ell_i}{τ}} \\\\
\readjudge{\Pi}{c}{\says{\ell_i}{\tau}} \\
\rafjudge{\Pi}{c}{pc}
}
{\TValGpcw{\compare{e_1}{e_2}}{\says{(\txcmp{\ell_1}{\ell_2})}{\tau}}}

\Rule{Select}{
\forall i \in \{1,2\}.
\TValGpcw{e_i}{\says{\ell_i}{τ}} \\\\
\rafjudge{\Pi}{c}{pc} \\ 
}
{\TValGpcw{\select{e_1}{e_2}}{\says{(\txsel{\ell_1}{\ell_2})}{\tau}}}

%
%
\hfill
\end{mathpar}
\end{flushleft}
\caption{Typing rules for expressions}
\label{fig:types}
\end{figure*}

There are two forms of typing judgements in FLAQR: local typing
judgments for expressions and global typing judgments for the stack.
Local typing judgments have the form
$\TVal{\Pi;\Gamma;pc;c}{e}{\tau}$.  $Π$ is the program's delegation
context and is used to derive acts for relationships with the rules in
Figures~\ref{fig:partialAF} and~\ref{fig:partialactsforfull}. $Γ$ is
the typing context containing in-scope variable names and their types.
The $pc$ label tracks the information flow policy on the program
counter (due to control flow) and on unsealed protected values such as
in the body of a "bind".  

Figure~\ref{fig:types} presents a selection of local typing rules.
Each typing rule includes an acts-for premise of the form $\rafjudge{\Pi}{c}{pc}$. 
This captures the invariant that each host principal $c$ has control of the
program it executes locally. Thus the $\pc$ an expression is typed at should never
exceed the authority of the principal executing the expression.
Rules~\ruleref{Fail} and~\ruleref{Expect} type "fail" and "expect" terms according to
their type annotation $τ$.
Rule~\ruleref{Lam} types lambda abstractions.  Since functions are first-class values,
we have to ensure that the $\pc$ annotation on the lambda term preserves the
invariant $\rafjudge{\Pi}{c}{pc}$.  The _clearance_ of a type $τ$, written $\UB{\tau}$,
is an upper bound on the $\pc$ annotations of the function types in $τ$. By checking
that $\rafjudge{Π}{c}{\UB{\func{\tau_1}{pc'}{\tau_2}}}$ holds (along with similar checks in
\ruleref{Run} and \ruleref{Ret}), we ensure the contents of the lambda term is protected when
sending or receiving lambda expressions, and that hosts never receive a function they
cannot securely execute.
Due to space constraints, the definition of $\UB{·}$ is presented in
Appendix~\ref{sec:ubrules}, Figure~\ref{fig:UBFunction}.

The \ruleref{App} rule requires the \textsl{pc} label at any function
application to flow to the function's \textsl{pc} label annotation.
Hence the premise $\drflowjudge{\Pi}{pc}{pc'}$.

Protected terms $\return{\ell}{e}$ are typed by rule \ruleref{UnitM}
as $\says{\ell}{τ}$ where $τ$ is the type of $e$. Additionally, it
requires that $\drflowjudge{\Pi}{pc}{\ell}$.  This ensures that any
unsealed values in the context are adequately protected by policy $ℓ$
if they are used by $e$.
The \ruleref{Sealed} rule types protected values $\returnv{\ell}{v}$.
These values are well-typed at any host that $v$ is, and does not require
$\drflowjudge{\Pi}{pc}{\ell}$ since no unsealed values in the context could be
captured by the (closed) value $v$.
 
Computation on protected values occurs in bind terms
$\bind{x}{e'}{e}$.  The policy protecting $e$ must be at least as
restrictive as the policy on $e'$ so that the occurrences of $x$ in
$e$ are adequately protected.  Thus, rule \ruleref{BindM} requires
$\drflowjudge{\Pi}{\ell \join pc}{\tau}$, and furthermore $e$ is typed
at a more restrictive program counter label $\ell \join \pc$ to reflect
the dependency of $e$ on the value bound to $x$.

Rule \ruleref{Run} requires that the $\pc$ at the local host flow to
the $\pc'$ of the remote host, and that $e$ be well-typed at $c'$,
which implies that $c'$ acts for $\pc'$. Additionally, $c$ must act
for the clearance of the remote return type $τ'$ to ensure $c$ is
authorized to receive the return value. The type of the run expression
is $\says{pc'^{{\integ}{\avail}}}{τ'}$, which reflects the fact that
$c'$ controls the availability of the return value and also has some
influence on which value of type $τ'$ is returned. Although $c'$ may
not be able to \emph{create} a value of type $τ'$ unless
$pc'^{{\integ}{\avail}}$ flows to $τ'$, if $c'$ has \emph{access} to
more than one value of type $τ'$, it could choose which one to return.
Rule \ruleref{Ret} requires that expression $e$ is welltyped at $c$ and
that $c'$ is authorized to receive the return value based on the clearance of $τ$.

The \ruleref{Compare} rule gives type $\says{(\txcmp{\ell_1}{\ell_2})}{\tau}$ to
the expression $\compare{e_1}{e_2}$
where $e_1$ and $e_2$ have types $\says{\ell_1}{\tau}$ and 
$\says{\ell_2}{\tau}$ respectively. 
Additionally, it requires that $c$, the host executing the "compare",
is authorized to fully examine the results of evaluating $e₁$ and $e₂$
so that they may be checked for equality.\footnote{Assuming a more
sophisticated mechanism for checking equality that reveals less
information to the host such as zero-knowledge proofs or a trusted
execution environment could justify relaxing this constraint.}
This requirement is captured by the premise
$\readjudge{\Pi}{c}{\says{\ell_i}{\tau}}$, pronounced ``$c$ _reads_ $\says{\ell_i}{\tau}$''.
The inference rules for the reads judgment are found in Figure~\ref{fig:readjudgment} in
Appendix~\ref{sec:fullrules}.

Finally, the \ruleref{Select} rule gives type 
$\says{(\txsel{\ell_1}{\ell_2})}{\tau}$ to the expression $\select{e_1}{e_2}$ 
where $e_1$ and $e_2$ have types $\says{\ell_1}{\tau}$ and 
$\says{\ell_2}{\tau}$ respectively. 

The typing judgment for the global 
configuration is presented in Figure~\ref{fig:stacktypes}
and consists of three rules.
Rule \ruleref{Head} shows that 
the global configuration $\distcon{e}{c}{t}$,
is well-typed if the 
expression $e$ is well-typed at host 
$c$ with program counter $pc'$
where $\drflowjudge{\Pi}{pc}{pc'}$ and the  
tail $t$ is well-typed.  
$[\tau']\tau$ means that 
the tail of the stack is of type $\tau$ 
while the expression in the 
head of the configuration is
of type $\tau'$. We introduced rules
\ruleref{Tail}(when $t \neq \emptystack$) and 
\ruleref{Emp}(when $t = \emptystack$) to
typecheck the tail $t$.

$\stackapp{E[\expecta{{\tau'}}]}{c}{t}$
is well-typed with type $[\tau']\tau$, 
if expression $E[\expecta{{\tau'}}]$
is well-typed with type $\hat{\tau}$ at host c.
And, the rest of the stack $t$ needs to be well-typed 
with type $[\hat{\tau}]\tau$.
Rule \ruleref{Emp} says the tail is empty
and the type of the expression in the head of the  
configuration is $\tau$, 
in which case the type of the whole stack is $[\tau]\tau$.

\begin{figure}
\begin{flushleft}
  \boxed{\TValP{\Gamma;\pc}{\distcon{e}{c}{t}}{τ}} \\
\begin{mathpar}
\Rule{Head}{
\TVal{\Pi;\Gamma;pc';c}{e}{\tau'} \\ \TValGpcS{t}{[\tau']\tau} \\\\
\drflowjudge{\Pi}{\pc}{\pc'} \\ 
\rafjudge{\Pi}{c}{pc} 
}
{\TValGpcS{\distcon{e}{c}{s}}{\tau}}
\end{mathpar}
 \boxed{\TValP{\Gamma;\pc}{\distS{e}{c}{t}}{[{\tau'}]τ}} \\
\begin{mathpar}
\Rule{Tail}{
\TVal{\Pi;\Gamma;pc';c}{E[\expecta{{\tau'}}]}{\hat{\tau}} \\ 
\TValGpcS{t}{[\hat{\tau}]\tau} \\\\
 \drflowjudge{\Pi}{\pc}{\pc'} \\
\rafjudge{\Pi}{c}{pc} \\
}
{\TValGpcS{\distS{E[\expecta{{\tau'}}]}{c}{t}}{[{\tau'}]\tau}} 

\Rule{Emp}{}
{\TValGpcS{\emptystack}{[\tau]\tau}}
\end{mathpar}
\end{flushleft}
\caption{Typing rules for configuration stack}
\label{fig:stacktypes}
\end{figure}

\section{Availability Attackers} \label{sec:availAttacks}
Availability attackers are fundamentally different from traditional integrity
and confidentiality attackers.  While an integrity attacker's goal is to
manipulate data and a confidentiality 
attacker's goal is to learn secrets, an
availability attacker's goal is to cause failures. 
An availability attacker can substitute a value 
only with a "fail" term.
Integrity attackers may also cause failures in consensus
based protocols when consensus is not reached because 
of their data manipulation.
In FLAQR this scenario is relevant during executing 
a "compare" statement: if one of the values 
in the "compare" statement is substituted with a wrong value,
by an integrity attacker,
then a "fail" term is returned.
Thus we need to consider an 
availability attacker's
integrity authority as well while reasoning about its power to "fail" a 
program. 
Which means the maximum authority 
of principal $\ell$ as an availability
attacker is $\reachn{\ell}$.

We consider a
static but active attacker model 
similar to those used in Byzantine consensus protocols. 
By static we mean which principal or
collection of principals can 
act maliciously is fixed prior 
to program execution. By active we mean that
the attackers may manipulate 
inputs (including higher-order functions) 
during run time.  We formally define the
power of an availability attacker with respect to quorum systems.

Availability attackers in FLAQR 
are somewhat different than 
integrity and confidentiality attackers 
because we want to represent
multiple possible attackers but limit which
attackers are active for a particular execution.
This goal supports the bounded fault assumptions
found in consensus protocols where system configurations
assume an upper bound on the number of faults possible.

A quorum system $\quo$ is represented as set of sets of hosts (or
principals) e.g. $\quo = \{q_1, q_2, \ldots, q_n\}$. Here each $q_i$
represents a set of principals whose consensus is adequate for the
system to make progress. 
We define availability attackers in terms
of the _toleration set_ $\trans{\quo}$ of a quorum system $\quo$.
The toleration set is a set of principals where
each principal represents an upper bound
on the authority of an attacker the quorum can tolerate without failing.

\begin{example}\hfill
\begin{enumerate}
\item The toleration set for quorum 
$\quo_1=\{q_1:=\{a,b\};q_2:=\{b,c\};q_3:=\{a,c\}\}$ is
${\transi{\quo_1}{}} = \{\reachn{a}, \reachn{b}, \reachn{c}\}$,
\item For heterogeneous quorum system 
$\quo_2 = \{q_1 := \{p,q\}; q_2 := \{r\}\}$
the toleration set is
${{\transi{\quo_2}{}}}= 
\{ \reachn{p} \wedge \reachn{q}, \reachn{r}\}$ 

\item For $\quo_3 = \{q := \{alice\}\}$
the toleration set is
${{\transi{\quo_3}{}}}=\{\}$, i.e. $\quo_3$ can
not tolerate any fault.
\end{enumerate}
\end{example}

An availability attacker's authority
is at most equivalent to a (single) principal's authority in the toleration set.
We define the set of all such attackers for a quorum  
$\mathcal{Q}$ as
$$\mathcal{A_{\transi{\mathcal{Q}}{}}} 
= \{\ell \mid \exists \ell' \in {\transi{\quo}{}}.
                             \rafjudge{\Pi}{\ell'}{\ell}\}.$$
which includes weaker attackers who a principal in the toleration set may act on behalf of.

\begin{figure}
{\footnotesize
\begin{flalign*}
& \boxed{\recrafjudge{\Pi}{ℓ}{τ}} &
\end{flalign*}
\begin{mathpar}

	\Rule{A-Pair}
	{\recrafjudge{\delegcontext}{\ell}{\tau_i} \quad i \in \{1,2\}
	}
	{\recrafjudge{\delegcontext}{\ell}{\prodtype{\tau_1}{\tau_2}}} 
	\Rule{A-Sum}
	{\recrafjudge{\delegcontext}{\ell}{\tau_i} \quad i \in \{1,2\}
	}
	{\recrafjudge{\delegcontext}{\ell}{\sumtype{\tau_1}{\tau_2}}} 

        \Rule{A-Fun}{
            \recrafjudge{\delegcontext}{ℓ}{\tau_2}}
          {\recrafjudge{\Pi}{\ell}{\func{τ_1}{\pc'}{\tau_2}}}

	\Rule{A-Type}
	{
        \recrafjudge{\Pi}{\ell}{\tau}
	}
	{\recrafjudge{\delegcontext}{\ell}{\says{\ell'}{\tau}}}
%
%

        \Rule{A-Avail}
        {
        \rafjudge{\Pi}{\ell^{\avail}}{\ell'^{\avail}}
        }
   	{\recrafjudge{\delegcontext}{\ell}{\says{\ell'}{\tau}}}
	\Rule{A-IntegCom}
	{\rafjudge{\delegcontext}{\ell^{\integ}}{{{\ell_j}^{\integ}}} 
	, j\in \{1,2\}}
	{\recrafjudge{\delegcontext}{\ell}
        {\says{(\txcmp{\ell_1}{\ell_2})}{\tau}}}
	
\end{mathpar}
}
\caption{$fails$ judgments.}
\label{fig:avail-actsfor}
\end{figure}
The fails relation ($\gtrdot$)
determines whether a principal can cause a program of a particular type to
evaluate to "fail".  Similar to the reads judgment, the fails judgment not only considers the
outermost "says" principal, but also any nested "says" principals whose propagated
failures could cause the whole term to fail.
Figure~\ref{fig:avail-actsfor} defines the fails judgment, written $\recrafjudge{\Pi}{l}{\tau}$,
which describes when a principal $l$ can fail an expression of type $\tau$ in delegation
context $\Pi$.

Consider an expression 
 $\return{\ell}{(\return{\ell'}{e})}$ and an attacker 
principal $l_a$. If
$\rafjudge{\Pi}{l_a^{\confid}}{\ell'^{\confid}}$, and
$\notrafjudge{\Pi}{l_a^{\confid}}{\ell^{\confid}}$, 
then the attacker learns nothing by evaluating 
$\return{\ell}{(\return{\ell'}{e})}$.
Similarly, if $\rafjudge{\Pi}{l_a^{\integ}}{\ell'^{\integ}}$ and
$\notrafjudge{\Pi}{l_a^{\integ}}{\ell^{\integ}}$, then the attacker 
cannot influence the value $\return{\ell}{(\return{\ell'}{e})}$.

In contrast, if $\rafjudge{\Pi}{{l_a}^{\avail}}{{\ell'}^{\avail}}$, and
$\notrafjudge{\Pi}{l_a^{\avail}}{{\ell}^{\avail}}$, 
an availability attacker may cause
$\return{\ell'}{e}$ to evaluate to $\faila{\says{\ell'}{\tau}}$,
which steps to $\faila{\says{\ell}{(\says{\ell'}{\tau})}}$ by
\ruleref{E-SealedFail}.
The fails relation reflects this possibility.
Using \ruleref{A-Type} and 
\ruleref{A-Avail} ( or \ruleref{A-IntegCom} if $\ell'$
was of form $(\txcmp{\ell_1}{\ell_2})$ ) we get
$\recrafjudge{\Pi}{l_a}{\says{\ell}{(\says{\ell'}{\tau})}}$.

We use the fails relation and the attacker set to define which
availability policies a particular quorum system is capable of
enforcing. We say $\quo$ _guards_ $τ$ if the following rule applies:
\[
\Rule{Q-Guard}
        {
         \forall \ell \in \A_{\transi{\quo}{}}.
          \notrecrafjudge{\Pi}{\ell}{\tau}
        }
        {\protsA{\quo}{\tau}}
\]

\begin{definition}[Valid quorum type]
A type $\tau$ is a _valid quorum type_
with respect to quorum system $\quo$
and delegation set $\Pi$ if the condition 
$\protsA{\quo}{\tau}$ is satisfied. 
\end{definition}

\begin{example}
If
$\quo=\{q_1:=\{a,b\};q_2:=\{b,c\};q_3:=\{a,c\}\}$ and 
$\ell_{\quo}= \txsel{(\txcmp{a}{b})}{\txsel{(\txcmp{b}{c})}{(\txcmp{a}{c})}}$
then $\says{\ell_{\quo}}{(\says{a}{\tau})}$ is not 
a valid quorum type because 
$\nprotsA{\quo}{(\says{\ell_{\quo}}{(\says{a}{\tau})})}$ as 
$\recrafjudge{\Pi}{a^{{\integ}{\avail}}}{\says{\ell_{\quo}}{(\says{a}{\tau})}}$
and $a^{{\integ}{\avail}} \in \A_{\trans{\quo}}$.      
But it is a valid quorum type for heterogeneous quorum system  
$\quo' = \{q_1:=\{a,b\}; q_2:=\{a,c\}\}$ as 
$a^{{\integ}{\avail}} \notin \A_{\trans{\quo'}}$.
\end{example}

 
\section{Security Properties}\label{sec:secProp}

To evaluate the formal properties of FLAQR, we prove that FLAQR
preserves noninterference for confidentiality, integrity, and
availability (section \ref{sec:secPropNI}).  
These theorems state that attackers cannot learn secret
inputs, influence trusted outputs, or control the failure behavior of
well-typed FLAQR programs. In addition, we also prove additional
theorems that formalize the soundness of our type system with
respect to a program's failure behavior.

\subsection{Soundness of failure}
\label{blameproofs}

FLAQR's semantics uses the "compare" and "select" security
abstractions and the failure propagation rules to model failure and
failure-tolerance in distributed programs, and FLAQR's type
system lets us reason statically about this failure behavior.  To verify
that such reasoning is _sound_, we prove two related theorems
regarding the type of a program and the causes of potential failures.

In pursuit of this goal, this section introduces our _blame semantics_
which reasons about failure-causing (faulty) principals during program
execution. 
The goal is to record the set of 
principals which may cause run-time failures as
a constraint on the set of faulty nodes $\mathbf{\FN}$.
Figure~\ref{fig:blame} presents the syntax of _blame constraints_, which are boolean 
formulas representing a lower bound on the contents of $\FN$.
Atomic constraints $\inF{\ell}{\FN}$ denote that label $\ell$ is in faulty set $\FN$.
This initial blame constraint ($\blame_{init}$) is represented 
using the toleration set of
the implied quorum system.  

\begin{figure}
  \small
  \[
    \begin{array}{rcl}
      \blame &::=& \FN = \emptyset \sep \blam \\[0.4em]
      \blam &::=& \inF{\ell}{\FN} \sep \blam_1 \OR \blam_2 \sep 
                   \blam_1 \AND \blam_2 \\[0.4em] 
    \end{array}
  \]
  \caption{Blame constraint syntax}
  \label{fig:blame}
\end{figure}

\begin{figure}
{\footnotesize
\begin{flalign*}
& \boxed{\entails{\mathcal{C}_1}{\inF{\ell}{\FN}}} &
\end{flalign*}
\begin{mathpar}
        \Rule{C-In}{\rafjudge{\Pi}{\ell'}{\ell}}
        {\entails{\inF{\ell'}{\FN}}{\inF{\ell}{\FN}}}

	\Rule{C-Or}{\entails{\blame_1}{\inF{\ell}{\FN}} \\ \entails{\blame_2}{\inF{\ell}{\FN}}}
	{\entails{\blame_1 \OR \blame_2}{\inF{\ell}{\FN}}}

        \Rule{C-AndL}{\exists i \in \{1,2\}. ~\entails{\blame_i}{\inF{\ell}{\FN}}}
        {\entails{\blame_1 \AND \blame_2}{\inF{\ell}{\FN}}} 
\end{mathpar}
}
\caption{Blame membership}
\label{fig:bmmain}
\end{figure}

\begin{definition}[Initial blame constraint]
  \label{bist}
  For toleration set $\trans{\quo}$ of the form
  $\{\reachn{(p¹₁ ∧ ... ∧ p¹_{m₁})}, ..., \reachn{(p^{k}₁ ∧ ... ∧ p^{k}_{m_{k}})} \}$ 
  the initial blame constraint $\blame_{init}$ is defined as a (logical) disjunction
  of conjunctions:
  \begin{align*}
    \blame_{init} ≜ ~& (\inF{p¹₁}{\FN} \AND ... \AND  \inF{p¹_{m₁}}{\FN}) \OR ... \\
                            & \qquad \OR (\inF{p^{k}₁}{\FN} \AND ... \AND \inF{p^{k}_{m_{k}}}{\FN}) 
  \end{align*}
\end{definition}
Each disjunction represents a minimal subset of a possible satisfying assignment
for the faulty set $\FN$.  For brevity, we will refer to these subsets
as the _possible faulty sets_ implied by a particular blame constraint.
Observe that for quorum system $\quo$, there is a one-to-one
correspondence between every  $tᵢ \in {\transi{\quo}{}}$
and every possible faulty set $\FN₁, ..., \FN_{k}$ in $\blame_{init}$
where $\FN_{i}$ is the set implied by the $i^{th}$ disjunction in $\blame_{init}$
such that $tᵢ = \reachn{bᵢ}$, where $bᵢ = \bigwedge_{p \in \FNᵢ} p$.

Evaluation rule \ruleref{C-CompareFail}, 
in Figure \ref{fig:ccomparefail},
shows how function $\last$ (discussed below) 
updates the blame constraint from $\blame$ to $\blame'$.  
We omit the blame-enabled versions of other evaluation
rules since they simply propagate the blame constraint without
modification.

\begin{figure*}
  {\small
\begin{mathpar}
\derule{C-CompareFail}{v_1 \neq v_2 \quad \blame':= \last({v_1},{v_2}, \blame,
\ell_1,\ell_2)}
{\concon{\compare
{\returnv{ℓ₁}{v_1}}{\returnv{ℓ₂}{v_2}}}{c}{s}{\blame}}
{\concon{\faila{\says{(\txcmp{ℓ₁}{ℓ₂})}{τ}}}{c}{s}{\blame'}}
\end{mathpar}
}
\caption{\ruleref{E-CompareFail} with Blame Semantics.}
\label{fig:ccomparefail}
\end{figure*}

\begin{example}\hfill
\label{ex:fn1}

\begin{enumerate}
\item Quorum system
$\quo_1 = \{ q_1=\{a,b\}; q_2=\{b,c\}; q_3=\{a,c\} \}$ 
has toleration set
${\transi{\quo_1}{}}= \{ \reachn{a}, \reachn{b}, \reachn{c} \}$
and three possible faulty sets in $\blame_{init}$:
$\FN = \{a\}$ or $\FN = \{b\}$ or
$\FN = \{c\}$
\item Quorum system
$\quo_2 = \{q_1:=\{p,q\};q_2:=\{r\}\}$
has toleration set  
${{\transi{\quo_2}{}}}= 
\{ \reachn{p} \wedge \reachn{q}, \reachn{r}\}$ 
and two possible faulty sets in $\blame_{init}$:
$\FN = \{p,q\}$ or $\FN = \{r\}$.
\end{enumerate}
\end{example}

While, $\blame_{init}$ is defined statically according to
the type of the program, rule \ruleref{C-CompareFail}
updates these constraints according to actual failures
that occur during the program's execution. This approach
identifies ``unexpected'' failures not implied by
$\blame_{init}$.

For example, 
$\quo_2 =\{q_1:=\{p,q\};q_2:=\{r\}\}$ has two possible faulty sets
$\FN=\{p,q\}$ or $\FN=\{r\}$. 
The initial blame constraint is
$\blame_{init}  ::=  (\inF{p}{\FN} \AND \inF{q}{\FN})
            \OR (\inF{r}{\FN})  $

Placing blame for a specific failure in a distributed system is
challenging, (and often impossible!). For example, when a comparison
of values signed by $ℓ₁$ and $ℓ₂$ fails, it is unclear who to blame
since either principal (or a principal acting on their behalf) could
have influenced the values that led to the failure. We do know, however,
that at least one of them is faulty; recording this information
helps constrain the contents of possible faulty sets.

We can reason about principals that _must_ be in
$\FN$ by considering all possible faulty sets
implied by the blame constraints. We write
$\entails{\blame}{\inF{\ell}{\FN}}$ 
(read as $\blame$ \emph{entails} $\inF{\ell}{\FN}$), 
when every possible faulty set in $\blame$, 
has the $\inF{\ell}{\FN}$ clause. 
Figure~\ref{fig:bmmain} presents inference rules
for the $\entails{}{}$ relation.

For example, since
$\ell_1$ is included in all satisfying choices of $\FN$ below, 
we can say $\entails{\blame}{\inF{\ell_1}{\FN}}$.
\begin{align*}
\blame  = & (\inF{\ell_1}{\FN} \AND \inF{\ell_2}{\FN}) 
            \OR (\inF{\ell_1}{\FN} \AND \inF{\ell_3}{\FN})  \\
          & \OR (\inF{\ell_1}{\FN} \AND \inF{\ell_4}{\FN}) 
            \OR (\inF{\ell_1}{\FN} \AND \inF{\ell_5}{\FN})
\end{align*}

The $\last$ function 
(full definition in Appendix~\ref{sec:blameproof},
Figure~\ref{fig:Blameconst})
is used by rule \ruleref{C-CompareFail} 
to update $\blame$.
For an expression:
$$\comparea{}{\returnv{\ell_1}{v_1}}{\returnv{\ell_2}{v_2}}$$
with $v₁ ≠ v₂$, $\last({v_1},{v_2}, \blame, \ell_1,\ell_2)$  
updates the formulas in $\blame$ to reflect that either
$ℓ₁$ or $ℓ₂$ is faulty.
If $ℓ₁$ or $ℓ₂$ already _must_ be faulty, specifically if
$\entails{\blame}{\inF{\ell_1}{\FN}}$ 
or $\entails{\blame}{\inF{\ell_2}{\FN}}$, 
then the function does not update any formulas.  This
approach avoids blaming honest principals when the
other principal is already known to be faulty.

If neither $ℓ₁$ nor $ℓ₂$ are known to be faulty.  then function
$\last$ is called recursively on inner layers (i.e., nested
$\returnv{}{}$ expressions) of $v_1$ and $v_2$ until a subexpression
protected by a known-faulty principal is found. If no such layer is
present, then the principal protecting the innermost layer is added to
$\blame$ (or the outer principals if there are no inner layers).  Only
this principal has seen the unprotected value and thus could have
knowingly protected the wrong value.  Observe that for well-typed
"compare" expressions, only the outer layer of "compare"d terms may
differ in protection level, so there is less ambiguity when blaming an
inner principal.  See Figure~\ref{fig:Blameconst} for more details on
how blame is assigned based on the structure of compared terms.

Updated constraints are kept in disjunctive normal form.
Specifically, for compared terms $\returnv{ℓ₁}{v₁}$ and
$\returnv{ℓ₂}{v₂}$, with $v₁ ≠ v₂$, if initial blame constraint is:
$\blame_{init} ::= (\inF{p}{\FN} \AND \inF{q}{\FN}) \OR
(\inF{r}{\FN})$ Then $\last({v_1},{v_2}, \blame_{init},
\ell_1,\ell_2)$ returns
\begin{align*}
\blame'  = & (\inF{p}{\FN} \AND \inF{q}{\FN} \AND \inF{\ell_1}{\FN}) \\
           & \OR (\inF{p}{\FN} \AND \inF{q}{\FN} \AND \inF{\ell_2}{\FN})  \\
          & \OR (\inF{r}{\FN} \AND \inF{\ell_1}{\FN}) 
            \OR (\inF{r}{\FN} \AND \inF{\ell_2}{\FN})
\end{align*}

We can now state the soundness theorem for our blame semantics,
and apply it to prove a liveness result.
Theorem~\ref{th:failresult0} states that for any well-typed FLAQR program
with a failing execution, and
the faulty sets $\FNᵢ$ implied by $\blame'$ (the final constraint computed by
the blame semantics), it must be the case that the program's type $τ$
reflects the ability of the (possibly colluding) principals in $\FNᵢ$ to fail the
program.

\begin{restatable}[Sound blame]{theorem}{failresult}
\label{th:failresult0}
Given,
\begin{enumerate}
\item $\TValGpcw{\concon{e}{c}{\emptystack}{\blame_{init}}}{\tau}$
\item $\concon{e}{c}{\emptystack}{\blame_{init}} \stepsto 
\concon{\faila{\tau}}{c}{\emptystack}{\blame'}$ 
\end{enumerate} where $e$ is a source-level expression,\footnote{In other words,
 $e$ does not contain any $\fail$ terms.}

then for each possible faulty set $\FN_{i}$ implied by $\blame'$,
there is a principal $b_i = \bigwedge_{p \in \FNᵢ} p$ such that
$\recrafjudge{\Pi}{\reachn{b_i}}{\tau}$.
\end{restatable}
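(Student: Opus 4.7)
The plan is to proceed by induction on the length of the reduction sequence from $\concon{e}{c}{\emptystack}{\blame_{init}}$ to $\concon{\faila{\tau}}{c}{\emptystack}{\blame'}$, strengthening the hypothesis to an invariant about intermediate configurations: at any reachable $\concon{e'}{c}{s'}{\blame''}$, for every $\faila{\tau^*}$ subterm of $e'$ (or of a stack frame in $s'$), and every possible faulty set $\FN_i$ implied by $\blame''$, the conjunction $b_i = \bigwedge_{p \in \FN_i} p$ satisfies $\recrafjudge{\Pi}{\reachn{b_i}}{\tau^*}$. Subject reduction preserves the top-level type $\tau$, so applying this invariant to the final configuration yields the theorem.

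The key auxiliary lemma is that failure propagation preserves the fails judgment: if $\faila{\tau_1}$ steps to $\faila{\tau_2}$ via a rule in Figure~\ref{fig:failprop}, then $\recrafjudge{\Pi}{\ell}{\tau_1}$ implies $\recrafjudge{\Pi}{\ell}{\tau_2}$. Rule \ruleref{E-SealedFail} follows from \ruleref{A-Type}, \ruleref{E-InjFail} from \ruleref{A-Sum}, and \ruleref{E-AppFail} does not alter the failure's type annotation. I would also use monotonicity of $\gtrdot$ in integrity and availability authority---conjunction only strengthens a principal---to absorb the contribution of principals carried over from $\blame_{init}$ or from earlier compare-fails.

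The main inductive case is \ruleref{C-CompareFail}, where a fresh $\faila{\says{(\txcmp{\ell_1}{\ell_2})}{\tau''}}$ is produced and $\blame$ is updated by $\last$, which extends each disjunct with either $\ell_1$ or $\ell_2$ (or a deeper label when the compared values carry nested $\returnv{}{}$ structure). In the surface case, rule \ruleref{A-IntegCom} immediately gives $\recrafjudge{\Pi}{\ell_i}{\says{(\txcmp{\ell_1}{\ell_2})}{\tau''}}$ since $\rafjudge{\Pi}{\ell_i^{\integ}}{\ell_i^{\integ}}$ holds by reflexivity, so by monotonicity $b_i$ inherits this capability. For the nested-layer case, I would apply the propagation lemma inductively, lifting the deeper principal's capability against the inner protected type up to the outer compare-fail type produced at this step.

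The hardest part will be the interaction between a projection-propagation rule like \ruleref{E-ProjFail} and the nested case of $\last$: showing that a principal added at a deeply nested $\returnv{}{}$ layer remains capable of failing every outer type that appears as the resulting failure propagates, especially when projection narrows the type to one component of a product. Because \ruleref{A-Pair} is only an introduction rule, this case seems to require either a finer invariant that tracks the originating component of each product-typed failure, or an appeal to the full propagation rules in the appendix to show that such narrowings never outrun what \ruleref{A-Pair} sanctions. Principals contributed by $\blame_{init}$ pose no separate difficulty, since conjoining them only strengthens $b_i$'s authority and monotonicity of $\gtrdot$ preserves the ability to fail $\tau$.
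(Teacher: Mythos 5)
Your proposal follows essentially the same route as the paper's proof: the paper likewise isolates \ruleref{C-CompareFail} as the only rule that can introduce a fresh $\fail$ term from source-level code (Lemmas~\ref{FN1} and~\ref{FN2}), establishes the fails judgment for the newly created failure's type by case analysis on what $\last$ returns (Lemma~\ref{FN3}, using \ruleref{A-IntegCom} and the conjunction-monotonicity fact of Lemma~\ref{gtrdotAND} exactly as you propose), and then shows that every subsequent fail-propagation step preserves the invariant (Lemma~\ref{FN4}). The \ruleref{E-ProjFail} subtlety you flag is genuine but is not resolved any more carefully by the paper, whose Lemma~\ref{FN4} dispatches that case by asserting the type annotation does not change, which is not literally true of $\proji{\faila{\prodtype{\tau_1}{\tau_2}}} \stepsone \faila{\tau_i}$.
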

\begin{proof}
See Appendix for full proof.
\end{proof}

While Theorem~\ref{th:failresult0} characterizes the relationship
between a program's type and the the possible faulty sets for a
failing execution, it does not explicitly tell us anything about the
fault-tolerance of a particular program.  Since the type of a FLAQR
program specifies its availability policy (in addition to its
confidentiality and integrity), different FLAQR types will be tolerant
of different failures.  Below, we prove a liveness result for a common
case, majority quorum protocols.

\begin{theorem}[Majority Liveness] \label{th:majorityLive}
If $e$ is a source-level expression and: 
\begin{enumerate} 
\item $\TValGpcw{\concon{e}{c}{\emptystack}{\blame_{init}}}
{\tau}$ 
\item \label{c:guard} $\protsA{\quo}{\tau}$ 
\item $\quo$ is a $m/n$ majority quorum system
\item $\concon{e}{c}{\emptystack}{\blame_{init}} 
\stepsone^{*} \concon{\faila{\tau}}{c}{\emptystack}{\blame'}$ 
\end{enumerate}
then for every possible faulty set $\FN'$ implied by $\blame'$, 
$\lvert\FN'\rvert > (n-m)$.
\end{theorem}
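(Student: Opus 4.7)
The plan is to combine Theorem~\ref{th:failresult0} with the guard premise by contradiction. Suppose some possible faulty set $\FN'$ implied by $\blame'$ satisfies $|\FN'| \leq n - m$; the goal is to derive a violation of premise (\ref{c:guard}).

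First, I would apply Theorem~\ref{th:failresult0} directly to the hypotheses (1), (2), (4), concluding that for this $\FN'$, the principal $b' = \bigwedge_{p \in \FN'} p$ satisfies $\recrafjudge{\Pi}{\reachn{b'}}{\tau}$. This is the only place where the soundness of the blame semantics is used.

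Second, I would establish a combinatorial characterization of the toleration set for an $m/n$ majority quorum $\quo$: since every quorum $q \in \quo$ contains $m$ of the $n$ hosts, the maximal sets of hosts whose simultaneous fault still leaves an intact quorum are exactly the $(n{-}m)$-subsets of the host set, so
\[
  \trans{\quo} \;=\; \{\,\reachn{\textstyle\bigwedge_{p \in S} p} \;\mid\; S \subseteq \mathrm{hosts},\; |S| = n-m \,\}.
\]
I would prove this as a small standalone lemma from the definition of $\trans{\cdot}$ and the $m/n$ majority assumption. Because $|\FN'| \leq n-m$, I can extend $\FN'$ to some $S$ with $|S| = n-m$. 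Then $t := \reachn{\bigwedge_{p \in S} p} \in \trans{\quo}$, and since conjoining more primitive principals only increases authority, $\rafjudge{\Pi}{t}{\reachn{b'}}$. By the definition of $\A_{\trans{\quo}}$ (which closes the toleration set under weaker attackers via $\succeq$), this yields $\reachn{b'} \in \A_{\trans{\quo}}$.

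Third, I would invoke the guard hypothesis $\protsA{\quo}{\tau}$, which by rule \ruleref{Q-Guard} forces $\notrecrafjudge{\Pi}{\reachn{b'}}{\tau}$. This directly contradicts the conclusion of the first step, so no such small $\FN'$ can exist, proving $|\FN'| > n-m$ for every possible faulty set implied by $\blame'$.

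The main obstacle is the second step: the excerpt specifies $\trans{\cdot}$ only by examples, so I need to pin down its general definition well enough to justify the claim about majority quorums, and I need the monotonicity of the attacker set under $\succeq$ (used implicitly when moving from the superset $S$ back down to $\FN'$). The rest is essentially bookkeeping around Theorem~\ref{th:failresult0} and the $\reachn{\cdot}$ projection.
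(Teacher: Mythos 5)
Your proposal is correct and reaches the paper's conclusion by the same overall contradiction (playing Theorem~\ref{th:failresult0} against \ruleref{Q-Guard}), but the middle of the argument is organized differently. The paper never places $\reachn{b'}$ in the attacker set directly: it argues that a possible faulty set of size $(n-m)$ forces $\blame' = \blame_{init}$ (using that $\last$ grows all possible faulty sets uniformly or none of them), and then invokes the one-to-one correspondence between the faulty sets of $\blame_{init}$ and the toleration set to contradict the guard. You instead exploit the downward closure of $\A_{\trans{\quo}}$ under acts-for: dominate $\reachn{b'}$ by a toleration-set principal and conclude $\reachn{b'} \in \A_{\trans{\quo}}$, so the guard applies to $\reachn{b'}$ immediately. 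Your route is cleaner in that it needs only the closure clause in the definition of $\A_{\trans{\quo}}$ plus monotonicity of $\reachn{\cdot}$ under conjunction (Lemma~\ref{lemma:reachLemma1}), rather than the stronger claim that $\last$ leaves $\blame_{init}$ literally unchanged. One wrinkle: the step ``extend $\FN'$ to an $(n-m)$-subset $S$ of the hosts'' silently assumes $\FN'$ consists of hosts of $\quo$; since $\last$ can also blame inner labels, you still need the observation that every possible faulty set of $\blame'$ contains a possible faulty set of $\blame_{init}$ (each of which is an $(n-m)$-set of hosts), so that $\lvert\FN'\rvert\le n-m$ forces $\FN'$ to coincide with one of the initial faulty sets and the extension is trivial. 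That is essentially the same monotonicity-of-$\last$ fact the paper leans on, so the two proofs are closer than they first appear; yours just uses a weaker consequence of it. The standalone lemma you flag (the toleration set of an $m/n$ majority quorum is the family of $(n-m)$-subsets of the hosts) is also implicitly assumed by the paper when it asserts that every faulty set in $\blame_{init}$ has size $(n-m)$, so making it explicit is a worthwhile addition rather than a detour.
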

\begin{proof}
From \eqref{c:guard}, we know
$\tau$ is a valid quorum type
for $\quo$ so
$\forall \ell \in \A_{\transi{\quo}{}}. \notrecrafjudge{\Pi}{\ell}{\tau}$.
Since $\mathcal{A}_{\transi{\quo}{}}$
is a superset of $\transi{\quo}{}$,
we also have $\forall t \in \transi{\quo}{}. \notrecrafjudge{\Pi}{t}{\tau}$.
Furthermore, from Definition~\ref{bist}, for each possible
faulty set $\FN_{i}$ implied by $\blame_{init}$, we know there
is a principal $tᵢ ∈ \trans{\quo}$ 
such that $tᵢ = \reachn{bᵢ}$, where $bᵢ = \bigwedge_{p \in \FNᵢ} p$.
Therefore, for each such $bᵢ$, we know $\notrecrafjudge{\Pi}{\reachn{bᵢ}}{\tau}$.

Since $\quo$ is an $m/n$ majority quorum system, 
every quorum is of size $m$ and every faulty set
in $\blame_{init}$ is of size $(n-m)$. 
For contradiction, assume there exists a
faulty set $\FN'$ satisfying
$\blame'$ that has size $(n-m)$.
Then by the definition of $\last$, all possible
faulty sets implied by $\blame'$ also have
size $(n-m)$ since $\last$ monotonically increases
the size of all possible faulty sets or none of them.
Furthermore, each possible faulty set implied by $\blame_{init}$
is a subset (or equal to) a possible faulty set implied by $\blame'$,
so $\lvert\FN'\rvert = (n-m)$ implies $\blame_{init}=\blame'$.

From Theorem~\ref{th:failresult0}
we know for every possible faulty set $\FN'ᵢ$
implied by $\blame'$, it must be the case that
$\recrafjudge{\Pi}{\reachn{b'ᵢ}}{\tau}$, where
$\bigwedge_{p\in \FN'ᵢ}p$. However, since $\blame_{init} = \blame'$,
we have a contradiction since \eqref{c:guard} implies
$\notrecrafjudge{\Pi}{\reachn{b'ᵢ}}{\tau}$.
Thus there cannot exist a possible faulty set of size (at least) $(n-m)$
implied by $\blame'$, and all possible faulty sets must have size greater
than $(n-m)$.
\end{proof}

\subsection{Noninterference}\label{sec:secPropNI}

We prove noninterference by extending
the FLAQR syntax with bracketed expressions in the
style of Pottier and Simonet~\cite{ps02}. 
Figure \ref{fig:bracketProj} shows bracket projections.
Figure \ref{fig:brackets} shows 
bracketed evaluation rules and
Figure \ref{fig:bracketTypes} and 
\ref{fig:distbracketTypes}
show the typing rules for bracketed terms.
The soundness and completeness of the
bracketed semantics are proved in Appendix~\ref{sec:subjRedRel}
(Lemmata \ref{lemma:sound} - \ref{lemma:completedist}).

Noninterference often is expressed
with a distinct attacker label. 
We use $H$ to denote the attacker. 
This means the attacker can 
read data with label $\ell$ if
$\drflowjudge{\Pi}{\ell^{\confid}}{H^{\confid}}$
and can forge or influence it if
$\drflowjudge{\Pi}{H^{\integ}}{\ell^{\integ}}$
and can make it unavailable if 
$\drflowjudge{\Pi}{H^{\avail}}{\ell^{\avail}}$

An issue in typing brackets is how to deal with $\fail$ terms.
Our confidentiality and integrity results are _failure-insensitive_
in the sense that they only apply to terminating executions.  This
is similar to how termination-insensitive noninterference is typically
characterized for potentially non-terminating programs.

Traditionally, bracketed typing rules require that bracketed terms
have a restrictive type, ensuring that only values derived from secret
(or untrusted) inputs are bracketed. In FLAQR, there are several
scenarios where a bracketed value may not have a restrictive type.
For example, when a "run" expression is evaluated within a bracket, it
pushes an element onto the configuration stack, but only in one of the
executions.  Another example is when a bracketed value occurs in a
"compare" expression, but the result is no longer influenceable by the
attacker $H$.  For these scenarios, several of the typing rules in
Figure~\ref{fig:bracketTypes} permit bracketed values to have less
restrictive types. Because of these rules, subject reduction does not
directly imply noninterference as it does in most bracketed approaches,
but the additional proof obligations are relatively easy to discharge.

\begin{center}
\begin{tabular}{ |c|c|c| } 
 \hline
  & \multicolumn{2}{l|}{May have less restrictive type?} \\
 \hline
 Term & $\pi = "i"$ & $\pi = "a"$ \\
 \hline
 $\bracket{v}{v'}$ & No & Yes \\ 
 \hline
 $\bracket{v}{\faila{\tau}}$ & Yes & No \\ 
 \hline
 $\bracket{v}{v}$ & Yes & Yes \\ 
 \hline
 $\bracket{\faila{\tau}}{\faila{\tau}}$ & Yes & Yes \\ 
 \hline
\end{tabular}
\label{table:availIntNI}
\end{center}

The table above summarizes how bracketed terms are typed depending on
whether we are concerned with integrity or availability. For
integrity, unequal bracketed values must have a restrictive type
(i.e., one that protects $H$), but equal bracketed values may have a
less restrictive type.  For availability, only bracketed terms where
one side contains a value and the other a failure must have a
restrictive type.

\subsubsection{Confidentiality and Integrity Noninterference} \label{sec:congIntNI}
To prove confidentiality (integrity) noninterference 
we need to show that given 
two different secret (untrusted) inputs to 
an expression $e$ the evaluated 
public (trusted) outputs are equivalent.
Equivalence is defined in terms of an observation function $\observe$ adapted from
FLAC~\cite{jflac} in Appendix~\ref{sec:subjRedRel}, Figure \ref{fig:observe}. 

\begin{restatable}["c"-"i" Noninterference]{theorem}{ciNonInterference} 
\label{th:ciNI}
If $\TValP{\Gamma,x:\says{\ell'}{\tau'}}
{\distcon{e}{c}{\emptystack}}{\says{\ell}{\tau}}$ where
\begin{enumerate}
\item $\TValGpcw{v_i}{\says{\ell'}{\tau'}}$, $i \in \{1,2\}$
\item $\distcon{\subst{e}{x}{\bracket{v_1}{v_2}}}{c}{\emptystack}
\stepsto \distcon{v}{c}{\emptystack}$ 
\item $\drflowjudge{\Pi}{H^{\pi}}{\ell'}$ and 
$\ndrflowjudge{\Pi}{H^{\pi}}{\ell}$, $\pi \in \{"c","i"\}$.
\end{enumerate}
then,
$\observet{\outproj{v}{1}}{\Pi}{\ell}{\pi} = 
\observet{\outproj{v}{2}}{\Pi}{\ell}{\pi}$
\end{restatable}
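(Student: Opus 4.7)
The plan is to follow the standard Pottier--Simonet bracket technique, bridging the two separate executions of $e$ (with inputs $v_1$ and $v_2$) through a single bracketed execution. First I would appeal to the completeness lemma (Lemma~\ref{lemma:completedist} from the appendix) to obtain a bracketed reduction
\[
\distcon{\subst{e}{x}{\bracket{v_1}{v_2}}}{c}{\emptystack} \stepsto^{*} \distcon{\hat{v}}{c}{\emptystack}
\]
whose left projection yields the original evaluation to $v$ and whose right projection yields the corresponding evaluation with $v_2$. Because $\TValGpcw{v_i}{\says{\ell'}{\tau'}}$ and the environment types $x$ at $\says{\ell'}{\tau'}$, the substitution $\subst{e}{x}{\bracket{v_1}{v_2}}$ is well-typed under the bracketed typing rules of Figures~\ref{fig:bracketTypes}--\ref{fig:distbracketTypes}. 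Subject reduction for brackets then gives $\TValP{\Gamma}{\distcon{\hat{v}}{c}{\emptystack}}{\says{\ell}{\tau}}$.

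Next I would reduce noninterference to a structural lemma on bracketed values at type $\says{\ell}{\tau}$ with $\ndrflowjudge{\Pi}{H^\pi}{\ell}$: any such $\hat{v}$ satisfies $\observet{\outproj{\hat{v}}{1}}{\Pi}{\ell}{\pi} = \observet{\outproj{\hat{v}}{2}}{\Pi}{\ell}{\pi}$. For $\pi = "c"$ (resp.\ $\pi = "i"$), the idea is that a bracket pair can only appear at positions whose label allows $H$-confidentiality (resp.\ $H$-integrity); the observation function erases subterms protected from $H$, so such brackets are discarded before their two sides are compared. This lemma is proved by induction on the structure of $\hat{v}$, using the bracketed typing rules to rule out the existence of bracket constructors at the outermost layer when the type is unreadable/untrustable by $H$.

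The main obstacle will be dealing with the relaxed bracket typing rules the paper warns about: bracketed terms of the forms $\bracket{v}{v}$ and $\bracket{v}{\faila{\tau}}$ (for integrity) are permitted at less restrictive types than the traditional Pottier--Simonet discipline requires, in order to accommodate "compare", "select", and "run". For each relaxation I would argue separately that the observation function still identifies the two projections: when the two sides are syntactically equal, $\outproj{\hat{v}}{1} = \outproj{\hat{v}}{2}$ so observation equality is immediate; when one side fails, this is an availability concern ($\pi = "a"$) rather than an integrity one, so failure-insensitivity of the "c"/"i" statement suffices to skip the case (equivalently, the hypothesis that the bracketed reduction produces a value $v$, not a $\fail$, rules it out for the terminal configuration). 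Propagating this invariant through the induction, combined with subject reduction, then delivers the required equality of observations and hence the theorem.
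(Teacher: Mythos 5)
Your proposal follows essentially the same route as the paper's own proof: apply subject reduction for the bracketed semantics to type the terminal value at $\says{\ell}{\tau}$, then argue by structural induction (via the observation function) that the two projections of a well-typed bracketed value protected from $H$ observe equally, discharging the relaxed bracket-typing cases ($\bracket{v}{v}$ by syntactic equality of projections, fail-containing brackets by the hypothesis that the result is a value). The only superfluous step is the appeal to the completeness lemma — the terminating bracketed execution is already a hypothesis of the theorem, so no bridging from two separate runs is needed.
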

\begin{proof}
See Appendix for full proof.
\end{proof}

\subsubsection{Availability Noninterference}\label{sec:availNI}
Similar to \cite{qimp} our end-to-end 
availability guarantee is also expressed 
as noninterference property.
Specifically, if one run of a well-typed FLAQR program 
running on a quorum system
terminates successfully (does not fail),
then all other runs of the program also terminate successfully.

This approach treats ``buggy'' programs where every execution returns
$\fail$ regardless of the choice of inputs as noninterfering.  This
behavior is desirable because here we are concerned with proving the
absence of failures that attackers can _control_.  For structured
quorum systems with a liveness result such as
Theorem~\ref{th:majorityLive} for $m/n$ majority quorums, we can
further constrain when failures may occur.  For example,
Theorem~\ref{th:majorityLive} proves failures can only occur when more
than $(n-m)$ principals are faulty.  In contrast,
Theorem~\ref{th:availNI} applies to arbitrary quorum systems provided
they guard the program's type, but cannot distinguish programs where
all executions fail.

\begin{restatable}[Availability Noninterference]{theorem}{availNonInterference} 
\label{th:availNI}
If \\ $\TValP{\Gamma,x:\says{\ell}{\tau'}} {\distcon{e}{c}{\emptystack}}{\says{\ell_{\quo}}{\tau}}$
where
\begin{enumerate}
\item $\TValGpcw{f_i}{\says{\ell}{\tau'}}, i \in \{1,2\}$
\item $\distcon{\subst{e}{x}{\bracket{f_1}{f_2}}}{c}{\emptystack}
\stepsto \distcon{f}{c}{\emptystack}$
\item $\recrafjudge{\Pi}{H}{\says{\ell}{\tau'}}$ and
$H^{{\integ}{\avail}} \in \A_{\trans{\quo}}~$ and \\
${\protsA{\quo}{(\says{\ell_{\quo}}{\tau}})}$
\end{enumerate}
then $\outproj{f}{1} \neq \faila{\says{\ell_{\quo}}{\tau}} 
\Longleftrightarrow  \outproj{f}{2} \neq \faila{\says{\ell_{\quo}}{\tau}}$
\end{restatable}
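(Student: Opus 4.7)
My approach is a proof by contradiction that combines the bracketed semantics with Theorem~\ref{th:failresult0} and the guards hypothesis on the output type. Without loss of generality assume $\outproj{f}{1} = \faila{\says{\ell_{\quo}}{\tau}}$ while $\outproj{f}{2} \neq \faila{\says{\ell_{\quo}}{\tau}}$. By the soundness and completeness of the bracketed semantics (Appendix~\ref{sec:subjRedRel}), the given bracketed reduction splits into two concrete single-program executions: one of $\subst{e}{x}{f_1}$ terminating at $\faila{\says{\ell_{\quo}}{\tau}}$, and one of $\subst{e}{x}{f_2}$ terminating at a non-fail value.

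Because $e$ and both $f_i$ are source-level, the substituted expression $\subst{e}{x}{f_1}$ is also source-level, so Theorem~\ref{th:failresult0} applies to its failing execution and yields a final blame constraint $\blame'_1$. For every possible faulty set $\FN_i$ implied by $\blame'_1$, letting $b_i = \bigwedge_{p \in \FN_i} p$, we obtain $\recrafjudge{\Pi}{\reachn{b_i}}{\says{\ell_{\quo}}{\tau}}$.

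The main obstacle, and the crux of the argument, is to extract from this a faulty set whose conjunction lies in $\A_{\trans{\quo}}$, so that the guards hypothesis $\protsA{\quo}{\says{\ell_{\quo}}{\tau}}$ contradicts the consequent of Theorem~\ref{th:failresult0}. Intuitively this should hold because projection~2 does \emph{not} fail, so every failing compare step in projection~1 must arise from a discrepancy that $H$ introduced through the bracket (available because of $\recrafjudge{\Pi}{H}{\says{\ell}{\tau'}}$). I plan to establish a supporting lemma relating the bracketed trace to blame updates: whenever an \ruleref{E-CompareFail} step fires in projection~1 but not in projection~2, the corresponding $\last$ update blames a principal no stronger than $H^{{\integ}{\avail}}$, so at least one disjunct of $\blame'_1$ remains within $\A_{\trans{\quo}}$ after every such update, using that $H^{{\integ}{\avail}} \in \A_{\trans{\quo}}$ and every baseline toleration-set disjunct is already in $\A_{\trans{\quo}}$.

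Once that lemma is in hand, the closing step is short: choosing the resulting faulty set $\FN^\star$ with conjunction $b^\star$, the guards hypothesis forces $\notrecrafjudge{\Pi}{\reachn{b^\star}}{\says{\ell_{\quo}}{\tau}}$, contradicting Theorem~\ref{th:failresult0}. The reverse direction is symmetric, exchanging the roles of the two projections. What I expect to be most delicate is the bracket-to-blame correspondence lemma, because it must carefully distinguish failures attributable to attacker-controlled inputs from failures caused by honest participants' divergent data, and in particular must handle rules such as \ruleref{E-Run} that asymmetrically push onto the configuration stack and rules such as \ruleref{E-SealedFail} that propagate failures through protection layers.
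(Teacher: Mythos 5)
Your overall strategy is workable in spirit but it is not the paper's argument, and as written it has concrete gaps that would prevent it from going through. The paper's proof is much more direct: it applies subject reduction for the bracketed semantics (Lemmas~\ref{subjRedhost} and~\ref{subjRedinter}) to conclude $\TValGpcw{\outproj{f}{i}}{\says{\ell_{\quo}}{\tau}}$, and then observes that the bracket typing discipline for $\pi = \text{"a"}$ forces any asymmetric term $\bracket{v}{\faila{\cdot}}$ to have a type that $H$ can fail (this is exactly what \ruleref{Bracket} requires and what \ruleref{Bracket-Fail-A} forbids relaxing); since $\protsA{\quo}{(\says{\ell_{\quo}}{\tau})}$ and $H^{{\integ}{\avail}} \in \A_{\trans{\quo}}$ give $\notrecrafjudge{\Pi}{\reachn{H}}{\says{\ell_{\quo}}{\tau}}$ by \ruleref{Q-Guard}, the final term cannot be an asymmetric failure bracket, which is precisely the biconditional. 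The blame semantics and Theorem~\ref{th:failresult0} play no role here; they are used for the separate liveness result (Theorem~\ref{th:majorityLive}).

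The specific problems with your route: first, Theorem~\ref{th:failresult0} requires the failing expression to be source-level (no $\fail$ terms), but hypothesis~1 of this theorem types $f_i$ ranging over $f ::= v \mid \faila{\tau}$, so an availability attacker may supply $f_1 = \faila{\says{\ell}{\tau'}}$ and $\subst{e}{x}{f_1}$ is then not source-level; your invocation of the sound-blame theorem is not licensed in exactly the case that matters most. Second, your entire argument rests on the unproven ``bracket-to-blame correspondence lemma,'' which is not a delicate detail but the whole proof: it requires (i) an embedded integrity-noninterference argument showing that any compare that fails in projection~1 but not projection~2 involves only values whose divergence is attributable to $H$, and (ii) showing that the principal $\last$ actually records (which is $\ell_1$, $\ell_2$, or an inner label of the compared terms --- never $H$ itself) is dominated by $H^{{\integ}{\avail}}$. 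Third, even granting that, your closing step needs the grown faulty set's conjunction $b_i \wedge \ell_j$ to remain inside $\A_{\trans{\quo}}$, i.e.\ dominated by a \emph{single} toleration-set element; that requires the element witnessing $H^{{\integ}{\avail}} \in \A_{\trans{\quo}}$ to coincide with the element dominating $\reachn{b_i}$, which you have not arranged. I would recommend abandoning the blame-based route for this theorem and instead arguing via subject reduction of the bracketed semantics together with the availability column of the bracket-typing table.
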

\begin{proof}
See Appendix for full proof.
\end{proof}

\section{Examples revisited} \label{sec:examRev}
We are now ready to implement  
the examples from section \ref{sec:section2.0}
with FLAQR semantics.
To make these implementations intuitive 
we assume that
our language supports integer ($int$)
type, mathematical
operator ">" (greater than), and
ternary operator ":?". 
Beacuse $int$ is a base type
$\UB{int}$ returns $\bot$.
The examples
also read from the local state of the 
participating principals. 
Which is fine because there are
standard ways to encode memory (reads/writes)
into lambda-calculus. 

\subsection{Tolerating failure and corruption}
In this FLAQR implementation (Figure \ref{flaqrimpl2.1}) of 
$2/3$ majority quorum example of section 
\ref{sec:section2.1}, 
we refer principals representing
\texttt{alice}, 
\texttt{bob} and \texttt{carol} as 
$a$, $b$ and $c$ respectively.
The program is executed at host $c'$ 
with program counter $pc$. 
Which means condition 
$\rafjudge{\Pi}{c'}{pc}$ holds.
The program body consists of a function
of type $\tau_f =$($\func{\tau_a}{pc}{\func{\tau_b}{pc}{\func{\tau_c}{pc}{}}}$
${(\says{(\txsel{(\txcmp{a^{{\integ}{\avail}}}{b^{{\integ}{\avail}}})}
{\txsel{(\txcmp{b^{{\integ}{\avail}}}
{c^{{\integ}{\avail}}})}
{(\txcmp{a^{{\integ}{\avail}}}{c^{{\integ}{\avail}}})}})}{\tau})
}$)
and the three arguments 
to the function are
"run" statements. 
Here $\tau$ is $\says{(\textit{a} \wedge \textit{b}
\wedge \textit{c})^{\confid}}{int}$.
Which means $\UB{\tau_f} = pc$. The
function body can be evaluated at $c'$, as 
condition $\rafjudge{\Pi}{c'}{pc}$ is true.

\begin{figure}
\begin{lstlisting}
$(\lamc{x}{\tau_a}{pc}{}\lamc{y}{\tau_b}{pc}{}\lamc{z}{\tau_c}{pc}{}$
$\quad(\selex{}$
$\quad\quad({\compare{x}{y}}){}{}$ $\label{com1}$
$\quad\orex$
$\quad\quad(\selex{}$
$\quad\quad\quad{(\compare{y}{z)}}$ $\label{com2}$
$\quad\quad\orex$
$\quad\quad\quad{(\compare{x}{z})})))$ $\label{com3}$
$~(\runa{\tau_a}{e_a}{a})~(\runa{\tau_b}{e_b}{b})~(\runa{\tau_c}{e_c}{c})$
\end{lstlisting}
\caption{FLAQR implementation of majority quorum example}
\label{flaqrimpl2.1}
\end{figure}

Here $e_a$, $e_b$ and $e_c$ are the expressions
that read the balances for 
account $acct$ from the local
states of \textit{a}, \textit{b} and 
\textit{c} respectively.
The program counter at \textit{a}, 
\textit{b}, and \textit{c} are
\textit{a}, \textit{b} and \textit{c}
respectively.
The data returned from \textit{a}
has type $\tau_a$, which is basically 
$\says{{\textit{a}}^{{\integ}{\avail}}}{\tau}$.
Similarly $\tau_b$ is
$\says{{\textit{b}}^{{\integ}{\avail}}}{\tau}$ and
 $\tau_c$ is
$\says{{\textit{c}}^{{\integ}{\avail}}}{\tau}$.
Because each "run" returns a balance, the base type 
of $\tau$ 
is an $int$ type, and it is protected
with confidentiality label
$(\textit{a} \wedge \textit{b} \wedge \textit{c})^{\confid}$,
meaning anyone who can read all the three labels
(\textit{a}, \textit{b} and \textit{c}), 
can read the returned balances.

In order to typecheck the "run" statements the conditions 
$\drflowjudge{\Pi}{pc}{a}$,
$\drflowjudge{\Pi}{pc}{b}$, and 
$\drflowjudge{\Pi}{pc}{c}$ need to hold.
The condition $\rafjudge{\Pi}{c'}{\UB{\tau_a}}$
is trivially true as $\UB{\tau_a}= \bot$.
Similarly 
$\UB{\tau_b}= \bot$ and $\UB{\tau_c}= \bot$ as well.

The host executing the code need to be 
able to read the return values from the 
three hosts. This means conditions
$\readjudge{\Pi}{c'}{\says{{\textit{a}}^{{\integ}{\avail}}}{\tau}}$
$\readjudge{\Pi}{c'}{\says{{\textit{b}}^{{\integ}{\avail}}}{\tau}}$ and
$\readjudge{\Pi}{c'}{\says{{\textit{c}}^{{\integ}{\avail}}}{\tau}}$ need 
to hold in order to typecheck the "compare" statements.
The type of the whole program is 
$(\says{(\txsel{(\txcmp{a^{{\integ}{\avail}}}
{b^{{\integ}{\avail}}})}
{\txsel{(\txcmp{b^{{\integ}{\avail}}}
{c^{{\integ}{\avail}}})}
{(\txcmp{a^{{\integ}{\avail}}}{c^{{\integ}{\avail}}})}})}{\tau})$
, which is a valid quorum type for 
$\quo = \{q_1:=\{a, b\}; q_2:=\{b,c\}; q_3:=\{a, c\}\}$.

Based on the security properties defined in section \ref{sec:secProp}
this program offers the confidentiality, integrity 
and availability guaranteed 
by quorum system $\quo$.
Therefore, the result cannot be learned or influenced by
unauthorized principals, and will be available 
as long as two hosts out of \textit{a}, 
\textit{b}, and \textit{c} are non-faulty.

The toleration set here is
${\transi{\quo}{}} = \{ \reachn{a}, \reachn{b}, \reachn{c}\}$. 
So, the program is not safe against an attacker with label
$l_a=\reachn{a} \wedge \reachn{b}$ 
(or, $a^{\integ} \wedge b^{\avail}$), for example. 
This is because 
$ \nexists t \in {\transi{\quo}{}}.\rafjudge{\Pi}{t}{l_a}$.
Since $\rafjudge{{\Pi}}{l_a}{\reachn{a}}$, principal $l_a$ can 
fail two "compare" statements on lines \ref{com1} and \ref{com3}. 
And, because $\rafjudge{{\Pi}}{l_a}{\reachn{b}}$, 
${l_a}$ can also fail
another two "compare" statements 
(one overlapping "compare" statment)
on lines \ref{com1} and \ref{com2}.
Thus the whole program evaluates to $\fail{}$. 

This FLAQR code also prevents programmers 
from comparing wrong values.
For instance 
the programmer can not write $y$ instead 
of $z$ on line \ref{com3}, as then the program 
will not typecheck.

\subsection{Using best available services}
The code in Figure \ref{flaqrimpl2.2} is the FLAQR implementation 
of Figure \ref{fig:example2}. 
The program runs at a host $c$ 
with program counter $pc$.
The expressions $e$ and $e'$ 
read account balances 
from principals $b$ and $b'$, representing the banks.
The values returned from $b$ and $b'$ have types 
$\tau_b = (\says{b^{{\integ}{\avail}}}
{\says{(b^{\confid}\wedge b'^{\confid})}{"int"}})$ 
and $\tau_{b'} = (\says{b'^{{\integ}{\avail}}}
{\says{(b^{\confid}\wedge b'^{\confid})}{"int"}})$ respectively.

The type of the whole program is 
$(\says{(\txsel{d}{\txsel{b^{{\integ}{\avail}}}
{b'^{{\integ}{\avail}}})}}
{(\says{b^{\confid}\wedge b'^{\confid})}{int}})$.
Here $d = pc \join b \join b'$.
In order to typecheck the "run" statements,
the conditions $\drflowjudge{\Pi}{pc}{b}$ and
$\drflowjudge{\Pi}{pc}{b'}$ need to hold.
The program counter at $b$ is $b$ and $b'$
is $b'$.
The bind statements (lines \ref{ex:bind1}-\ref{ex:bind2}) 
typecheck because conditions
$\drflowjudge{\Pi}{pc\join b^{{\integ}{\avail}}}{d}$, 
$\drflowjudge{\Pi}{pc\join b^{{\integ}{\avail}} 
\join b'^{{\integ}{\avail}}}{d}$, 
$\drflowjudge{\Pi}{pc\join b^{{\integ}{\avail}} 
\join b'^{{\integ}{\avail}} \join b^{\confid}}{d}$, and  
$\drflowjudge{\Pi}{pc\join b^{{\integ}{\avail}}
\join b'^{{\integ}{\avail}} \join b^{\confid} \join b'^{\confid}}{d}$ 
hold, because of our choice of $d$.

\begin{figure}
\begin{lstlisting}
$(\lamc{arg_1}{\tau_b}{pc}{(\lamc{arg_2}{\tau_{b'}}{pc}{}}$
$\quad(\selonly$
$\quad\quad{(\bind{x}{arg_1}{}}({\bind{y}{arg_2}{}}$ $\label{ex:bind1}$
$\quad\quad\quad(\bind{x'}{x}{}(\bind{y'}{y}{}$ $\label{ex:bind2}$
$\quad\quad\quad\quad{\returnv{d}{\returnv{(b^{\confid}\wedge b'^{\confid})}{(x'>y'~?~x'~:~y')}}}))))$
$\quad\orex$
$\quad(\select{(arg_1)}{(arg_2))})))(\runa{\tau_{b'}}{e'}{b'}))(\runa{\tau_b}{e}{b})$
\end{lstlisting}
\caption{FLAQR implementation of available largest balance example}
\label{flaqrimpl2.2}
\end{figure}

\section{Related work}

Only a limited number of previous approaches~\cite{avail, qimp}
combine availability with more common confidentiality and integrity
policies in distributed systems.
Zheng and Myers \cite{avail} extend the Decentralized Label Model
\cite{ml-tosem} with availability policies, but 
focus primarily tracking dependencies rather than applying mechanisms such as
consensus and replication to improve availability and integrity.
Zheng and Myers later introduce the language Qimp~\cite{qimp} with a type system
explicitly parameterized on a quorum system for offloading computation while enforcing
availability policies.
Instead of treating quorums specially, FLAQR quorums emerge naturally from interactions between
principals using "compare" and "select", and enable application-specific integrity
and availability policies that are secure by construction.
%

Hunt and Sands~\cite{quantale} present a novel generalisation of
information flow lattices that captures disjunctive flows similar
to the influence of replicas in FLAQR on a "select"
result.  Our partial-or operation was inspired by their treatment of
disjunctive dependencies.

Models of distributed system protocols are often verified with model
checking approaches such as TLA+~\cite{plusCal}. Model checking
programs is typically undecidable, making it ill-suited to integrate
directly into a programming model in the same manner as a (decidable)
type system.  To make verification tractable, TLA+ models are often
simplified versions of the implementations they represent, potentially
leading to discrepancies.  FLAQR is designed as a core calculus for a
distributed programming model, making direct verification of
implementations more feasible.

BFT protocols \cite{pbft,bessani2014state} use consensus and
replication to protect the integrity and availability of operations on
a system's state. Each instance of a BFT protocol essentially enforces
a single availability policy and a single integrity policy.  While composing multiple
instances is possible, doing so provides no end-to-end availability or integrity
guarantees for the system as a whole.  FLAQR programs, by constrast, routinely
compose consensus and replication primitives to enforce multiple policies while
also providing end-to-end system guarantees.

\if 0
\section{Future Work}
Although FLAQR already 
is a strong programming model
for quorum replication based computations,
there is a number of ways 
FLAQR can be extended with.
Dynamic authorization 
can be easily added into FLAQR
by adding "assume" and "where"
constructs, following FLAC \cite{flac}.
This is possible because FLAQR security policies
are built using FLAM principal algebra
and FLAM integrates authorization logic
into information flow model.
Cryptographic mechanisms can also 
be incorporated into FLAQR.
For instance, it would be nice 
to have homomorphic comparisons
in \ruleref{Compare}, because 
then the host executing the 
compare will not require to read 
the two values, relaxing some 
restrictions on confidentiality.
To make distibuted computations
efficient, FLAQR programming model
can also be updated with channels
and prallel processes instead of having
configuration stacks.
\fi

\section{Conclusion}
In this work, we extend Flow Limited Authorization Model \cite{flam}
with availability policies. We introduce a core calculus and 
type-system, 
FLAQR, for building decentralized applications 
that are secure by construction. 
We identify a trade-off relation between 
integrity and availability, and introduce two 
binary operations _partial-and_ and _partial-or_, 
specifically to express integrities of 
quorum based replicated programs.
We define $fails$ relation and judgments 
that help us reason about a principal's 
authority over availability of a type.  
We introduce blame semantics that 
associate failures with malicious 
hosts of a quorum system 
to ensure that quorums can 
not exceed a bounded number of failures
without causing the
whole system to fail.
FLAQR ensures end-to-end information security
with noninterference for confidentiality, integrity and availability.

\if 0
\section{Acknowledgements}
Funding for this work was provided by XXX grants YYYYYYYY
and ZZZZZZZ, but any opinions, findings, conclusions, or recommendations
expressed here are those of the author(s) and do not
necessarily reflect those of the XXX.
\fi

\newcommand{\showURL}[1]{\unskip}
\bibliographystyle{IEEEtran}
\bibliography{header}
\newpage
\appendix

\section{Complete FLAQR rule set}
\label{sec:fullrules}
\subsection{Expanded syntax, semantics, and type system} \label{sec:fulllang}
See Figures \ref{fig:Annotatedsyntax}, 
\ref{fig:annotatedlocsem}, 
and \ref{fig:Annotatedtypes} for full FLAQR syntax, semantics 
and type system respectively. 
Figure \ref{fig:Fullfailprop} shows all the "fail"
propagation rules.

\begin{figure}
  \small
  \[
    \begin{array}{rcl}
      \multicolumn{3}{l}{ π ∈ {\{"c","i","a"\}}  \text{  (projections)}} \\
      \multicolumn{3}{l}{ n ∈ \N \text{  (primitive principals)}} \\
      \multicolumn{3}{l}{ x ∈ \mathcal{V} \text{  (variable names)}} \\
      \\
      p,ℓ,\pc &::=&  n \sep \top \sep \bot \sep p^{π} \sep p ∧ p\sep p ∨ p \\[0.4em]
                   &\sep & p ⊔ p \sep p ⊓ p \sep \selor{p}{p} \sep \comor{p}{p} \\[0.4em]
      τ &::=& \voidtype \sep X \sep \sumtype{τ}{τ} \sep \prodtype{τ}{τ} \\[0.4em]
         & \sep & \func{τ}{\pc}{τ} \sep \tfunc{X}{\pc}{τ} \sep \says{ℓ}{τ}  \\[0.4em]
      v &::=& \void \sep \returnv{\ell}{v} \sep \injia{\sumtype{\tau}{\tau}}{v} \sep \paira{v}{v}{\tau} \\[0.4em]
        & \sep & \lamc{x}{τ}{\pc}{e} \sep \tlam{X}{\pc}{e}  \\[0.8em]
      f &::=& v \sep  \faila{\tau} \\[0.8em]
      e &::=& f \sep x \sep e~e \sep e~τ \sep \return{\ell}{e} \sep 
             \paira{e}{e}{\tau} \\[0.4em]
        & \sep & \proji{e} \sep \injia{\sumtype{\tau}{\tau}}{e} \sep \bind{x}{e}{e}\\[0.4em]
        & \sep & \casexpan{e}{x}{e}{e}{\tau} \\[0.4em]
        & \sep & \runa{\tau}{e}{p}  \sep \ret{e}{p}\\[0.4em]
        & \sep & \selecta{e}{e}{\tau} \sep \comparea{\tau}{e}{e} \sep \expecta{\tau} \\[0.4em] 
    \end{array}
  \]
  \caption{Type annotated FLAQR Syntax (Full version).}
  \label{fig:Annotatedsyntax}
\end{figure}

\begin{figure}
  \[
    \begin{array}{rcl}
      E & ::= & [\cdot] \sep E~e \sep v~E \sep E~τ \sep 
\paira{E}{e}{\tau} \sep \paira{f}{E}{\tau} \sep \return{ℓ}{E} \\[0.4em]
        & \sep & \proji{E} \sep \injia{\tau}{E} \sep \bind{x}{E}{e}  \\[0.4em]
        & \sep & \casexpan{E}{x}{e}{e}{\tau} \\[0.4em]
        & \sep & \if 0 E[\runa{\tau}{e}{p}] \sep \fi \ret{E}{p}\\[0.4em]
        & \sep & \selecta{E}{e}{\tau} \sep \selecta{f}{E}{\tau} \\[0.4em]
        & \sep & \comparea{\tau}{E}{e} \sep \comparea{\tau}{f}{E} \\[0.4em]
    \end{array}
  \]
\caption{Evaluation context.}
\end{figure}
\begin{figure*}
\label{sequential semantics}
  \small
  \boxed{{e} \stepsone {e'}} \\
  \begin{mathpar}
   \erule{E-App}{}{{(\lamc{x}{τ}{\pc}{e})~v}}{{\subst{e}{x}{v}}}

    \erule{E-TApp}{}{{(\tlam{X}{\pc}{e})~τ}}{ {\subst{e}{X}{τ}}}

    \erule{E-UnPair}{}{ {\proji{\paira{v_1}{v_2}{\tau}}}}{ {v_i}}
 
    \erule{E-Sealed}{}{\return{ℓ}{v}}{\returnv{ℓ}{v}}

    \erule{E-BindM}{}{ {\bind{x}{\returnv{ℓ}{v}}{e}}}{ {\subst{e}{x}{v}}}

    \erule{E-Case}{}{(\casexpan{(\injia{\tau}{v})}{x}{e_1}{e_2}{\tau})}{ {\subst{e_i}{x}{v}}}

 \erule{E-Compare}{v_1 = v_2}{\comparea{\says{(\txcmp{ℓ₁}{ℓ₂})}{τ}}{ \returnv{ℓ₁}{v_1}}{\returnv{ℓ₂}{v_2}}}{\returnv{\txcmp{ℓ₁}{ℓ₂}}{v₁}} 

 \erule{E-CompareFail}{v_1 \neq v_2}
{\comparea{\says{(\txcmp{ℓ₁}{ℓ₂})}{τ}}{ \returnv{ℓ₁}{v_1}}
{\returnv{ℓ₂}{v_2}}}{\faila{\says{(\txcmp{ℓ₁}{ℓ₂})}{τ}}}

 \erule{E-CompareFailL}{}{\comparea{\says{(\txcmp{ℓ₁}{ℓ₂})}{τ}}
{\faila{\says{\ell_1}{\tau}}}
{f_2}}{\faila{\says{(\txcmp{ℓ₁}{ℓ₂})}{τ}}}
  
 \erule{E-CompareFailR}{}{\comparea{\says{(\txcmp{ℓ₁}{ℓ₂})}{τ}}{f_1}
{\faila{\says{\ell_2}{\tau}}}}{\faila{\says{(\txcmp{ℓ₁}{ℓ₂})}{τ}}}
 
\erule{E-Select}{
     f_{i} = \returnv{ℓᵢ}{wᵢ}  
\\ f_{j} ∈ \{\returnv{ℓ_{j}}{v_{j}},\; \faila{\says{ℓ_{j}}{τ}}\}
     }
     {\selecta{f₁}{f₂}{\says{(\txsel{ℓ₁}{ℓ₂})}{τ}}} {\returnv{\txsel{ℓ₁}{ℓ₂}}{vᵢ}}

   \erule{E-SelectFail}{} {{\selecta{(\faila{\says{\ell_1}{\tau}})}
{(\faila{\says{\ell_2}{\tau}})}{\says{(\txsel{\ell_1}{\ell_2})}{\tau}}}}
{\faila{\says{(\txsel{\ell_1}{\ell_2})}{\tau}}} 

  \erule{E-RetStep}{e \stepsone e'}{\ret{e}{c}}{\ret{e'}{c}}
    \erule{E-Step}{ {e} \stepsone  {e'}}{ {E[e]}}{ {E[e']}}
   
    \hfill
  \end{mathpar}

\caption{Full FLAQR local semantics }
\label{fig:annotatedlocsem}
\end{figure*}

\begin{figure}
  \small
\begin{flalign*}
  & \boxed{{e} \stepsone {\faila{\tau}}} &
\end{flalign*}
  \begin{mathpar}
\erule{E-AppFailL}{}{{\lamc{x}{\tau}{pc}{\faila{\func{\tau}{pc}{\tau'}}}}~e}{\faila{\tau'}}\\
\erule{E-AppFail}{}{\lamc{x}{\tau}{pc}{e}~{\faila{\tau}}}{\subst{e}{x}{\faila{\tau}}}

\erule{E-TAppFail}{}{\faila{\tfunc{X}{pc}{\tau}}~τ'}{{\faila{\subst{\tau}{X}{\tau'}}}}

\erule{E-SealedFail}{}{{\returnp{ℓ }{\faila{\tau}}}}{{\faila{\says{\ell}{\tau}}}}

\erule{E-InjFail}{}{\injia{\sumtype{\tau_1}{\tau_2}}{\faila{\tau_i}}}{\faila{\sumtype{\tau_1}{\tau_2}}}

\erule{E-CaseFail}{}{\casexpan{\faila{\tau'}}{x}{e_1}{e_2}{}
{}}{\faila{\tau}}

\erule{E-PairFailL}{}{\paira{\faila{\tau_1}}{f_2}{\prodtype{τ_1}{τ_2}}}{\faila{\prodtype{τ_1}{τ_2}}}

\erule{E-PairFailR}{}{\paira{f_1}{\faila{\tau_2}}{\prodtype{τ_1}{τ_2}}}{\faila{\prodtype{τ_1}{τ_2}}}

\erule{E-ProjFail}{}{\proji{\faila{\prodtype{\tau_1}{\tau_2}}}}{\faila{\tau_i}} 
\hfill
\end{mathpar}
\caption{Propagation of fail terms. }
\label{fig:Fullfailprop}
\end{figure}

\begin{figure}
{\small
\begin{flalign*}
& \boxed{\readjudge{\Pi}{p}{\tau}} &
\end{flalign*}
\begin{mathpar}
\Rule{R-Unit}
     {}
     {\readjudge{\Pi}{p}{\void}}

\Rule{R-Sum}
     {\readjudge{\Pi}{p}{\tau_1} \\
      \readjudge{\Pi}{p}{\tau_2}
     }
     {\readjudge{\Pi}{p}{\sumtype{\tau_1}{\tau_2}}}

\Rule{R-Prod}
     {\readjudge{\Pi}{p}{\tau_1} \\
      \readjudge{\Pi}{p}{\tau_2}
     }
     {\readjudge{\Pi}{p}{\prodtype{\tau_1}{\tau_2}}}

\Rule{R-Lbl}
        {\rafjudge{\delegcontext}{p^{\confid}}{\ell^{\confid}} \\
         \readjudge{\Pi}{p}{\tau} 
        }
        {\readjudge{\delegcontext}{p}{\says{\ell}{\tau}}}

\Rule{R-Fun}
           {
            \readjudge{\delegcontext}{p}{\tau_1} \\
            \readjudge{\delegcontext}{p}{\tau_2}
           }
          {\readjudge{\Pi}{p}{\func{τ_1}{\pc}{\tau_2}}}

\Rule{R-TFun}
           {
            \readjudge{\Pi}{p}{\tau}
           }
           {\readjudge{\Pi}{p}{\tfunc{X}{pc}{\tau}}}
\end{mathpar}
}
\caption{reads judgments.}
\label{fig:readjudgment}
\end{figure}

\begin{figure}
{\small
\begin{flalign*}
& \boxed{\protjudge*{ℓ}{τ}} &
\end{flalign*}
\begin{mathpar}
	\Rule{P-Unit}{}
       {\protjudge{\delegcontext}{\ell}{\voidtype}}

	\Rule{P-Pair}
	{\protjudge{\delegcontext}{\ell}{\tau_1} \\
		\protjudge{\delegcontext}{\ell}{\tau_2} 
	}
	{\protjudge{\delegcontext}{\ell}{\prodtype{\tau_1}{\tau_2}}} 

     \Rule{P-Fun}{
            \protjudge{\delegcontext}{ℓ}{\tau_2} \\
            \protjudge{\delegcontext}{ℓ}{\pc'}}
            {\protjudge{\Pi}{\ell}{\func{τ_1}{\pc'}{\tau_2}}}

     \Rule{P-TFun}{
       \protjudge{\delegcontext}{ℓ}{\tau} \\
       \protjudge{\delegcontext}{ℓ}{\pc'}}
       {\protjudge{\Pi}{\ell}{\tfunc{X}{\pc'}{\tau}}}

	\Rule{P-Lbl}
        {\rflowjudge{\delegcontext}{\ell}{\ell'} 
	}
	{\protjudge{\delegcontext}{\ell}{\says{\ell'}{\tau}}}

\end{mathpar}
}
\caption{Type protection levels}
\label{fig:protect}
\end{figure}


\begin{figure*}
{\small
\begin{subfigure}{\textwidth}
\begin{flalign*}
& \boxed{\rafjudge{\Pi}{p}{q}} &
\end{flalign*}
    \begin{mathpar}
    \rafrule[Π]{Bot}{}{p}{⊥}
    \and
    \rafrule[\Pi]{Top}{}{⊤}{p}
    \and
    \rafrule[\Pi]{Refl}{}{p}{p}
    \and
    \rafrule[\Pi]{Trans}{
      \rafjudge{Π}{\!p\!}{\!q} \\
      \rafjudge{Π}{\!q\!}{\!r}
    }{\!p\!}{\!r}

    \\\\

    \rafrule[\Pi]{Proj}{\rafjudge{Π}{p}{q}}{p^{π}}{q^{π}}
    \and
    \rafrule[\Pi]{ProjR}{}{p}{p^{π}}
    \and
    \rafrule[\Pi]{ProjIdemp}{}{(p^{π})^{π}}{p^{π}}
    \rafrule[\Pi]{ProjBasis}{π ≠ π'}{⊥}{(p^{π})^{π'}}

    \rafrule[\Pi]{ProjDistConj}{}{p^{π} ∧ q^{π}}{(p ∧ q)^{π}} 
    \and
    \rafrule[\Pi]{ProjDistDisj}{}{(p ∨ q)^{π}}{p^{π} ∨ q^{π}}  

    \\\\

    \rafrule[\Pi]{ConjL}{
      \rafjudge{Π}{p_k}{p} \\\\
      k ∈ \{1,2\}
    }{p_1 ∧ p_2}{p}
    \and
    \rafrule[\Pi]{ConjR}{
      \rafjudge{Π}{p}{p_1} \\\\
      \rafjudge{Π}{p}{p_2}
    }{p}{p_1 ∧ p_2}
    \and
    \rafrule[\Pi]{ConjBasis}{}{p^{\confid} ∧ p^{\integ} \wedge 
p^{\avail}}{p} 

    \\\\

     \rafrule[\Pi]{DisjL}{
      \rafjudge{Π}{p_1}{p} \\\\
      \rafjudge{Π}{p_2}{p}
    }{p_1 ∨ p_2}{p}
    \and
    \rafrule[\Pi]{DisjR}{
      \rafjudge{Π}{p}{p_k} \\\\
      k ∈ \{1,2\}
    }{p}{p_1 ∨ p_2}
    \and
    \rafrule[\Pi]{DisjBasis}{
\pi \neq \pi'}{⊥}{p^{\pi} ∨ q^{\pi'}}

  \\\\ 

    \rafrule[\Pi]{ConjDistDisjL}{}{(p ∧ q) ∨ (p ∧ r)}{p ∧ (q ∨ r)}
    \and
    \rafrule[\Pi]{ConjDistDisjR}{}{p ∧ (q ∨ r)}{(p ∧ q) ∨ (p ∧ r)} 
    \and
    \rafrule[\Pi]{DisjDistConjL}{}{(p ∨ q) ∧ (p ∨ r)}{p ∨ (q ∧ r)}
    \and
    \rafrule[\Pi]{DisjDistConjR}{}{p ∨ (q ∧ r)}{(p ∨ q) ∧ (p ∨ r)} 
    \hfill
    \end{mathpar}
   \caption{Static principal lattice rules, adapted from FLAC~\cite{jflac}}
  \label{fig:static}
\end{subfigure}
\hfill
\begin{subfigure}{\textwidth}
  \begin{mathpar}
\Rule{PAndL}
{\rafjudge{\Pi}{p_i}{p} 
}
{\rafjudge{\Pi}{\comor{p_1}{p_2}}{p}}

\Rule{PAndR}
{\rafjudge{\Pi}{p}{p_1}\\\\
 \rafjudge{\Pi}{p}{p_2}}
{\rafjudge{\Pi}{p}{\comor{p_1}{p_2}}}

\Rule{AndPAnd}
{}
{\rafjudge{\Pi}{p \wedge q}{\comor{p}{q}}}

\Rule{PAndPOr}
{}
{\rafjudge{\Pi}{\comor{p}{q}}{\selor{p}{q}}}

\Rule{ProjPAndL}
{}
{\rafjudge{\Pi}{\comor{p^{\pi}}{q^{\pi}}}{(\comor{p}{q})^{\pi}}}

\Rule{ProjPAndR}
{}
{\rafjudge{\Pi}{(\comor{p}{q})^{\pi}}{\comor{p^{\pi}}{q^{\pi}}}}

\Rule{ProjPOrL}
{}
{\rafjudge{\Pi}{\selor{p^{\pi}}{q^{\pi}}}{(\selor{p}{q})^{\pi}}}

\Rule{ProjPOrR}
{}
{\rafjudge{\Pi}{(\selor{p}{q})^{\pi}}{\selor{p^{\pi}}{q^{\pi}}}}

\Rule{POrOr}
{}
{\rafjudge{\Pi}{\selor{p}{q}}{p \vee q}}

\\\\

\Rule{AndDistPOrR}{}
{\rafjudge{\Pi}{p \wedge (\selor{q}{r})}
{\selor{(p \wedge q)}{(p \wedge r)}}}

\Rule{POrDistAndR}{}
{\rafjudge{\Pi}{\selor{p}{(q \wedge r)}}
{(\selor{p}{q}) \wedge (\selor{p}{r})}}

\Rule{AndDistPOrL}{}
{\rafjudge{\Pi}
{\selor{(p \wedge q)}{(p \wedge r)}}
{p \wedge (\selor{q}{r})}}

\Rule{{POrDistAndL}}{}
{\rafjudge{\Pi}
{(\selor{p}{q}) \wedge (\selor{p}{r})}
{\selor{p}{(q \wedge r)}}}

\Rule{{OrDistPOrR}}{}
{\rafjudge{\Pi}{p \vee (\selor{q}{r})}
{\selor{(p \vee q)}{(p \vee r)}}}

\Rule{{OrDistPOrL}}{}
{\rafjudge{\Pi}
{\selor{(p \vee q)}{(p \vee r)}}{p \vee (\selor{q}{r})}}

\Rule{{POrDistOrR}}{}
{\rafjudge{\Pi}{\selor{p}{(q \vee r)}}
{(\selor{p}{q}) \vee (\selor{p}{r})}}

\Rule{{POrDistOrL}}{}
{\rafjudge{\Pi}
{(\selor{p}{q}) \vee (\selor{p}{r})}{\selor{p}{(q \vee r)}}}

\\\\

\Rule{AndDistPAndR}{}
{\rafjudge{\Pi}{p \wedge (\comand{q}{r})}
{\comand{(p \wedge q)}{(p \wedge r)}}}

\Rule{PAndDistAndR}{}
{\rafjudge{\Pi}{\comand{p}{(q \wedge r)}}
{(\comand{p}{q}) \wedge (\comand{p}{r})}}

\Rule{AndDistPAndL}{}
{\rafjudge{\Pi}
{\comand{(p \wedge q)}{(p \wedge r)}}
{p \wedge (\comand{q}{r})}}

\Rule{{PAndDistAndL}}{}
{\rafjudge{\Pi}
{(\comand{p}{q}) \wedge (\comand{p}{r})}
{\comand{p}{(q \wedge r)}}}

\Rule{{OrDistPAndR}}{}
{\rafjudge{\Pi}{p \vee (\comand{q}{r})}
{\comand{(p \vee q)}{(p \vee r)}}}

\Rule{{OrDistPAndL}}{}
{\rafjudge{\Pi}
{\comand{(p \vee q)}{(p \vee r)}}{p \vee (\comand{q}{r})}}

\Rule{{PAndDistOrR}}{}
{\rafjudge{\Pi}{\comand{p}{(q \vee r)}}
{(\comand{p}{q}) \vee (\comand{p}{r})}}

\Rule{{PAndDistOrL}}{}
{\rafjudge{\Pi}
{(\comand{p}{q}) \vee (\comand{p}{r})}{\comand{p}{(q \vee r)}}}
\hfill
\end{mathpar}
\caption{Partial conjunction and disjunction acts-for rules.}
\label{fig:disjorandactsfor}
\end{subfigure}
}
\caption{Complete FLAQR acts-for rules.}
\label{fig:partialactsforfull}
\end{figure*}

\label{sec:ubrules}
\begin{figure}
\begin{flalign*}
   \UB{\func{\tau_1}{pc}{\tau_2}} &= \UB{\tau_1} \join pc \join \UB{\tau_2} \\
   \UB{\tfunc{X}{pc}{\tau}} &= pc \join \UB{\tau} \\
   \UB{\says{\ell}{\tau}} &= \UB{\tau} \\
   \UB{\sumtype{\tau_1}{\tau_2}} &= \UB{\tau_1} \join \UB{\tau_2} \\
   \UB{\prodtype{\tau_1}{\tau_2}} &= \UB{\tau_1} \join \UB{\tau_2} \\
   \UB{\voidtype} &= \bot
\end{flalign*}
\caption{Clearance function}
\label{fig:UBFunction}
\end{figure}

\begin{figure}[]
  \begin{align*}
   (\voidtype)^{\avail} &= \top \\
   (\sumtype{\tau_1}{\tau_2})^{\avail} &= {\tau_1}^{\avail} \join {\tau_2}^{\avail} \\
   (\prodtype{\tau_1}{\tau_2})^{\avail} &= {\tau_1}^{\avail} \join {\tau_2}^{\avail} \\
   (\says{\ell}{\tau})^{\avail} &= \ell^{\avail} \join \tau^{\avail} \\
   (\func{\tau_1}{pc}{\tau_2})^a &= 
        {\tau_1}^{\avail} \join pc^{\avail} \join {\tau_2}^{\avail} \\
   (\tlam{X}{pc}{\tau})^{\avail} &= 
        pc^{\avail} \join {\tau}^{\avail}
  \end{align*}
\label{fig:availOftype}
\caption{Availability of Types.}
\end{figure}

\begin{figure*}
\begin{flushleft}
  \boxed{\TValGpcw{e}{τ}} \\
  \begin{mathpar}
    \Rule{Var}{\Gamma(x)=\tau  \\ \rafjudge{\Pi}{c}{pc}}{\TValGpcw{x}{τ}}

    \Rule{Unit}{\rafjudge{\Pi}{c}{pc}}{\TValGpcw{\void}{\voidtype}}
    
    \Rule{Fail}{ \rafjudge{\Pi}{c}{pc}}
     {\TValGpcw{\faila{\tau}}{\tau}}
    
    \Rule{Lam}{%
     \TVal{\Pi;Γ,x\ty τ_1;\pc';u}{e}{τ_2} \\ \rafjudge{\Pi}{c}{pc} \\\\
     u = \UB{\func{τ_1}{\pc'}{τ_2}} \\
     \rafjudge{Π}{c}{u}
    }{\TValGpcw{\lamc{x}{τ_1}{\pc'}{e}}{\func{τ_1}{\pc'}{τ_2}}}

    \Rule{App}{%
      \TValGpcw{e_1}{\func{τ'}{\pc'}{τ}}\\\\
      \TValGpcw{e_2}{τ'} \\ 
      \drflowjudge{\Pi}{\pc}{\pc'}\\ 
      \rafjudge{\Pi}{c}{pc}
    }{\TValGpcw{e_1~e_2}{τ}}

    \Rule{TLam}{%
    \TVal{\Pi;Γ,X;\pc';u}{e}{τ} \\ \rafjudge{\Pi}{c}{pc} \\\\
    u = \UB{\tau} \\
    \rafjudge{\Pi}{c}{u} \\  
    }{\TValGpcw{\tlam{X}{\pc'}{e}}{\tfunc{X}{\pc'}{τ}}}

    \Rule[$τ'$ is well-formed in $Γ$]{%
      TApp}{\TValGpcw{e}{\tfunc{X}{\pc'}{τ}}\\\\
      \drflowjudge{\Pi}{\pc}{\pc'} \\ \rafjudge{\Pi}{c}{pc}
    }{\TValGpcw{(e~τ')}{\subst{τ}{X}{τ'}}}

    \Rule{Pair}{%
      \TValGpcw{e_1}{τ_1} \\
      \TValGpcw{e_2}{τ_2} \\\\ 
      \rafjudge{\Pi}{c}{pc}
    }{\TValGpcw{\paira{e_1}{e_2}{\prodtype{τ_1}{τ_2}}}{\prodtype{τ_1}{τ_2}}}

    \Rule{UnPair}{%
      \TValGpcw{e}{\prodtype{τ_1}{τ_2}} \\ \rafjudge{\Pi}{c}{pc}
    }{\TValGpcw{\proji{e}}{τ_i}}

    \Rule{Inj}{%
      \TValGpcw{e}{τ_i} \\ \rafjudge{\Pi}{c}{pc}
    }{\TValGpcw{\injia{\sumtype{τ_1}{τ_2}}{e}}{\sumtype{τ_1}{τ_2}}}

    \Rule{Case}{%
      \TValGpcw{e}{\sumtype{τ_1}{τ_2}} \\ 
      \drflowjudge{\Pi}{\pc}{τ} \\\\ 
      \rafjudge{\Pi}{c}{pc} \\
      \rafjudge{\Pi}{{\tau_i}^a}{\tau^a} \\\\
      \TVal{\Pi;Γ,x\ty τ_1;\pc;\worker}{e_1}{τ} \\
      \TVal{\Pi; Γ,x\ty τ_2;\pc;\worker}{e_2}{τ} \\
    }{\TValGpcw{\casexpan{e}{x}{e_1}{e_2}{\sumtype{τ_1}{τ_2}}}{τ}}

    \Rule{UnitM}{
      \TValGpcw{e}{τ} \\ 
      \drflowjudge{\Pi}{\pc}{ℓ} \\\\ 
      \rafjudge{\Pi}{c}{pc}  
    }{\TValGpcw{\return{ℓ}{e}}{\says{ℓ}{τ}}}   

     \Rule{Sealed}{
      \TValGpcw{v}{τ} \\
      \rafjudge{\Pi}{c}{pc}   
    }{\TValGpcw{\returnv{ℓ}{v}}{\says{ℓ}{τ}}} 

    \Rule{BindM}{%
      \TValGpcw{e'}{\says{ℓ}{τ'}} \\ 
      \drflowjudge{\Pi}{\ell \sqcup pc}{\tau} \\\\
      \TVal{\Pi;Γ,x\ty τ';ℓ\sqcup pc;\worker}{e}{τ} \\ 
      \rafjudge{Π}{c}{pc} 
    }{\TValGpcw{\bind{x}{e'}{e}}{τ}}

\Rule{Run}{
\TVal{\Pi;\Gamma;\pc';c'}{e}{τ'} \\
\drflowjudge{\Pi}{pc}{pc'} \\\\
\rafjudge{\Pi}{c}{pc} \\
\rafjudge{\Pi}{c}{\UB{\tau'}}\\\\
\tau = \says{pc'^{\integ \avail}}{\tau'} \\
}
{\TValGpcw{\runa{\tau}{e}{c'}}{τ}}

\Rule{Ret}{
\TValGpcw{e}{τ} \\
\rafjudge{\Pi}{c'}{\UB{\tau}} \\\\
\rafjudge{\Pi}{c}{pc} \\
}
{\TValGpcw{\ret{e}{c'}}{\says{pc{^{\integ \avail}}}{\tau}}}

\Rule{Compare}{      
\forall i \in \{1,2\}.\TValGpcw{e_i}{\says{\ell_i}{τ}} \\\\
\readjudge{\Pi}{c}{\says{\ell_i}{\tau}} \\
\rafjudge{\Pi}{c}{pc}
}
{\TValGpcw{\comparea{\says{(\txcmp{\ell_1}{\ell_2})}{\tau}}{e_1}{e_2}}{\says{(\txcmp{\ell_1}{\ell_2})}{\tau}}}

\Rule{Select}{
\forall i \in \{1,2\}.
\TValGpcw{e_i}{\says{\ell_i}{τ}} \\\\
\rafjudge{\Pi}{c}{pc} \\ 
}
{\TValGpcw{\selecta{e_1}{e_2}{\says{(\txsel{\ell_1}{\ell_2})}{\tau}}}{\says{(\txsel{\ell_1}{\ell_2})}{\tau}}}

     \Rule{Expect}{ \rafjudge{\Pi}{c}{pc}}
      {\TValGpcw{\expecta{\tau}}{\tau}}
 
\hfill
\end{mathpar}
\end{flushleft}
\caption{Typing rules for expressions (Full version).}
\label{fig:Annotatedtypes}
\end{figure*}


\begin{figure*}
\small
\begin{mathpar}
\text{\underline{Syntax}} \hfill \\
\begin{array}{rcl}
  v &::=& \dots \sep \bracket{v}{v} \\[0.4em]
  f &::=& \dots \sep \bracket{f}{f} \\[0.4em]
  e  &::=& \dots \sep \bracket{e}{e}
\end{array}\hfill \\

\begin{array}{rcl}
 \outproj{\emptystack}{k} & = & 
 {\emptystack} \\[0.4em]
\outproj{\stackapp{e}{c}{s}}{k} & = &
\stackapp{\outproj{e}{k}}{c}{\outproj{s}{k}} \\[0.4em]
\outproj{\distcon{e}{c}{s}}{k} & = &
\distcon{\outproj{e}{k}}{c}{\outproj{s}{k}} \\[0.4em]
 \outproj{E[\expecta{\tau}]}{k} & = & 
 \outproj{E}{k}[\expecta{\tau}] \\[0.4em]
\outproj{E[\bracket{e_1}{e_2}]}{k} & = & 
 \outproj{E}{k}[e_k] \\[0.4em]
\outproj{\comparea{\says{(\txcmp{\ell_1}{\ell_2})}{\tau}}
{f_1}{f_2}}{k} & = & 
\comparea{\says{(\txcmp{\ell_1}{\ell_2})}{\tau}}
{\outproj{f_1}{k}}{\outproj{f_2}{k}} \\[0.4em]
\outproj{\selecta
{f_1}{f_2}{\says{(\txsel{\ell_1}{\ell_2})}{\tau}}}{k} & = & 
\selecta{\outproj{f_1}{k}}{\outproj{f_2}{k}}
{\says{(\txsel{\ell_1}{\ell_2})}{\tau}}\\[0.4em]
\outproj{\ret{e}{c}}{k} & = & 
\ret{\outproj{e}{k}}{c} \\[0.4em]
\outproj{\runa{\tau}{e}{c}}{k} & = & 
\runa{\tau}{\outproj{e}{k}}{c} \\[0.4em]
\outproj{\proji{e}}{k} & = & 
\proji{\outproj{e}{k}} \\[0.4em]
\outproj{\casexp{e_1}{x}{e_2}{e_3}}{k} & = & 
\casexp{\outproj{e_1}{k}}{x}
{\outproj{e_2}{k}}{\outproj{e_3}{k}}\\[0.4em]
\outproj{\bracket{e}{\bullet}}{2} & = & \bullet 
\end{array}\hfill\\

\hfill
\\\\ 
 \end{mathpar}
\caption{Projection for bracketed expressions.}
\label{fig:bracketProj}
\end{figure*}

\begin{figure*}
\small
\begin{mathpar}
\berule{B-Step}{e_i \stepsone e'_i \\ e'_j =e_j \\ \{i, j \} = \{1, 2\} }{\bracket{e_1}{e_2}}{\bracket{e'_1}{e'_2}}{}


    \berule*{B-App}{}{\bracket{v_1}{v_2}~v}
{\bracket{v_1~\outproj{v}{1}}{v_2 ~\outproj{v}{2}}}{}

    \berule*{B-TApp}{}{\bracket{v}{v'}~τ}{\bracket{v~τ}{v'~τ}}{}

    \berule*{B-BindM}{}{\bind{x}{\bracket{v}{v'}}{e}}{\bracket{\bind{x}{v}{\outproj{e}{1}}}{\bind{x}{v'}{\outproj{e}{2}}}}{}

\berule*{B-CompareCommon}
{\outproj{\comparea{\says{\txcmp{\ell_1}{\ell_2}}{\tau}}
{\bracket{f_{11}}{f_{12}}}
{\bracket{f_{21}}{f_{22}}}}{i}
\stepsone f_i \quad \quad \forall i \in \{1,2\}}
{\comparea{\says{\txcmp{\ell_1}{\ell_2}}{\tau}}{\bracket{f_{11}}{f_{12}}}{\bracket{f_{21}}{f_{22}}}}
{\bracket{f_1}{f_2}}
{}

\berule*{B-CompareCommonRight}
{\outproj{\comparea{\says{\txcmp{\ell_1}{\ell_2}}{\tau}}
{\bracket{f_{11}}{f_{12}}}
{f}}{i}
\stepsone f_i \quad \quad \forall i \in \{1,2\}}
{\comparea{\says{\txcmp{\ell_1}{\ell_2}}{\tau}}{\bracket{f_{11}}{f_{12}}}{f}}
{\bracket{f_1}{f_2}}
{}

\berule*{B-CompareCommonLeft}
{\outproj{\comparea{\says{\txcmp{\ell_1}{\ell_2}}{\tau}}
{f}{\bracket{f_{21}}{f_{22}}}}{i}
\stepsone f_i \quad \quad \forall i \in \{1,2\}}
{\comparea{\says{\txcmp{\ell_1}{\ell_2}}{\tau}}{f}{\bracket{f_{21}}{f_{22}}}}
{\bracket{f_1}{f_2}}
{}

\berule*{B-SelectCommon}
{\outproj{\selecta{\bracket{f_{11}}{f_{12}}}{\bracket{f_{21}}{f_{22}}}{\says{\txsel{\ell_1}{\ell_2}}{\tau}}}{i} \stepsone f_i
\quad \quad \forall i \in \{1,2\}}
{\selecta{\bracket{f_{11}}{f_{12}}}{\bracket{f_{21}}{f_{22}}}{\says{\txsel{\ell_1}{\ell_2}}{\tau}}}{\bracket{f_1}{f_2}}
{}

\berule*{B-SelectCommonLeft}
{\outproj{\selecta
{\bracket{f_{11}}{f_{12}}}
{f}{\says{\txsel{\ell_1}{\ell_2}}{\tau}}}{i}
\stepsone f_i \quad \quad \forall i \in \{1,2\}}
{\selecta{\bracket{f_{11}}{f_{12}}}{f}{\says{\txsel{\ell_1}{\ell_2}}{\tau}}}
{\bracket{f_1}{f_2}}
{}

\berule*{B-SelectCommonRight}
{\outproj{\selecta
{f}{\bracket{f_{21}}{f_{22}}}{\says{\txsel{\ell_1}{\ell_2}}{\tau}}}{i}
\stepsone f_i \quad \quad \forall i \in \{1,2\}}
{\selecta{f}{\bracket{f_{21}}{f_{22}}}{\says{\txsel{\ell_1}{\ell_2}}{\tau}}}
{\bracket{f_1}{f_2}}
{}

\berule*{B-Fail1}{}
{\return{\ell}{\bracket{v}{\faila{\tau}}}}
{\bracket{\returnv{\ell}{v}}{\faila{\says{\ell}{\tau}}}}
{}

\berule*{B-Fail2}{}
{\return{\ell}{\bracket{\faila{\tau}}{v}}}
{\bracket{\faila{\says{\ell}{\tau}}}{\returnv{\ell}{v}}}
{}

\berule*{B-Fail}{}
{\return{\ell}{\bracket{\faila{\tau}}{\faila{\tau}}}}
{\faila{\tau}}
{}

\derule{B-RunLeft}{}{\distcon{\bracket{E[\runa{\tau}{e_1}{c'}]}{e_2}}
{c}{s}}{\distcon{\bracket{\ret{e_1}{c}}{\bullet}}{c'}
{\stackapp{\bracket{E[\expecta{\tau}]}
{e_2}}{c}{s}}}

\derule{B-RunRight}{}{\distcon{\bracket{e_1}{E[\runa{\tau}{e_2}{c'}]}}
{c}{s}}
{\distcon{\bracket{\bullet}{\ret{e_2}{c}}}{c'}{\stackapp
{\bracket{e_1}{E[\expecta{\tau}]}}{c}{s}}}

\derule{B-RetLeft}{
 f' = {\begin{cases}
     \returnv{\ell}{v} &\mbox{ if } f = v \\
     \faila{\says{\ell}{\tau}} &\mbox{ if } f = \faila{\tau}
      \end{cases}
     }}
 {\distcon{\bracket{\ret{f}{c}}{\bullet}}{c'}
{\stackapp{\bracket{E[\expecta{\says{\ell}{\tau}}]}
{e_2}}{c}{s}}}{\distcon{\bracket{E[f']}{e_2}}{c}{s}}

\derule{B-RetRight}{
 f' = {\begin{cases}
     \returnv{\ell}{v} &\mbox{ if } f = v \\
     \faila{\says{\ell}{\tau}} &\mbox{ if } f = \faila{\tau}
      \end{cases}
     }}
 {\distcon{\bracket{\bullet}{\ret{f}{c}}}{c'}
{\stackapp{\bracket{e_2}{E[\expecta{\says{\ell}{\tau}}]}}
{c}{s}}}{\distcon{\bracket{e_2}{E[f']}}{c}{s}}

\derule{B-RetV}
{f_i' = {\begin{cases}
          \returnv{\ell}{v} &\mbox{ if } f_i = v \\
          \faila{\says{\ell}{\tau}} &\mbox{ if } f_i = \faila{\tau}
         \end{cases}
}}
{\distcon{\ret{\bracket{f_1}{f_2}}{c}}{c'}
{\stackapp{E[\expecta{\says{\ell}{\tau}}]}{c}{s}}}
{\distcon{E[\bracket{f_1'}{f_2'}]}{c}{s}}
{}

\hfill
\\\\ 
 \end{mathpar}
\caption{Bracketed Evaluation Rules.}
\label{fig:brackets}
\end{figure*}

\begin{figure*}
\small
\begin{mathpar}
\text{\underline{Typing rules}} \hfill \\
    \Rule{Bracket}                                                                                 {
           \rflowjudge{\delegcontext}{(H^\pi \sqcup \pc)}{{\pc'}} \\
           e_1 = v_1  \iff e_2 \ne v_2 \\\\                        
           \TValP{Γ;\pc';c}{e_1}{\tau} \\
           \TValP{Γ;\pc';c}{e_2}{\tau} \\\\
           \protjudge{\delegcontext}{H^{\pi}}{\cfun{\tau}}\\
           \rafjudge{\Pi}{c}{pc} 
           }
         {\TValGpcw{\bracket{e_1}{e_2}}{\tau}}

  \Rule{Bracket-Values}                                                                            {
           \TValGpcw{v_1}{\tau} \\                               
           \TValGpcw{v_2}{\tau} \\\\  
           \protjudge{\delegcontext}{H^\pi}{\cfun{\tau}} \\
          \rafjudge{\Pi}{c}{pc}
          }
         {\TValGpcw{\bracket{v_1}{v_2}}{\tau}}

  \Rule{BullR}
  {\TValP{\Gamma;\pc;c}{e}{\tau} \\
  }
  {\TValP{\Gamma;\pc;c}{\bracket{e}{\bullet}}{\tau}}

  \Rule{BullL}
  {\TValP{\Gamma;\pc;c}{e}{\tau} \\
  }
  {\TValP{\Gamma;\pc;c}{\bracket{\bullet}{e}}{\tau}}

  \Rule{Bracket-Fail-L}
  {\TValP{\Gamma;\pc;c}{e}{\tau} \\
  }
  {\TValP{\Gamma;\pc;c}{\bracket{e}{\faila{\tau}}}{\tau}}

  \Rule{Bracket-Fail-R}
  {\TValP{\Gamma;\pc;c}{e}{\tau} \\
  }
  {\TValP{\Gamma;\pc;c}{\bracket{\faila{\tau}}{e}}{\tau}}
  
  \Rule{Bracket-Fail-A}
  {\TValP{\Gamma;\pc;c}{e_i}{\tau} \\
   e_i \not= \faila{\tau} \\
   \pi = "a"
  }
  {\TValP{\Gamma;\pc;c}{\bracket{e_1}{e_2}}{\tau}}

  \Rule{Bracket-Same}
  {\TValP{\Gamma;\pc;c}{v}{\tau} \\
  }
  {\TValP{\Gamma;\pc;c}{\bracket{v}{v}}{\tau}}

 \end{mathpar}
\caption{Typing rules for Bracketed Expressions.}
\label{fig:bracketTypes}
\end{figure*}

\begin{figure*}
$\mathscr{C}(\voidtype) =  \voidtype$ \\
$\mathscr{C}(\sumtype{\tau_1}{\tau_2}) =  \sumtype{\mathscr{C}(\tau_1)}{\mathscr{C}(\tau_2)}$\\
$\mathscr{C}(\prodtype{\tau_1}{\tau_2}) =  \prodtype{\mathscr{C}(\tau_1)}{\mathscr{C}(\tau_2)}$\\
$\mathscr{C}(\func{\tau_1}{pc}{\tau_2}) =  \func{\mathscr{C}({\tau_1})}{pc}{\mathscr{C}({\tau_2})}$\\
$\mathscr{C}(\tlam{X}{pc}{\tau}) = \tlam{X}{pc}{\mathscr{C}(\tau)}$\\
$\mathscr{C}(\says{(\selor{\ell_1}{\ell_2})}{\tau}) =  \says{(\ell_1 \vee \ell_2)}{\mathscr{C}{(\tau)}}$\\
$\mathscr{C}(\says{(\comor{\ell_1}{\ell_2})}{\mathscr{\tau}}) =  \says{(\ell_1 \wedge \ell_2)}{\mathscr{C}{(\tau)}}$\\
$\mathscr{C}(\says{\ell}{\tau}) =  \says{\ell}{\mathscr{C}(\tau)}$(when $\ell$ is not of forms $(\selor{\ell_1}{\ell_2})$ or $(\comor{\ell_1}{\ell_2})$)\\
\label{fig:Cfunction}
\caption{The $\mathscr{C}$ function on types.}
\end{figure*}

\begin{figure*}
\small
\begin{mathpar}

  
     \hfill
     \\\\
  
\Rule{Bracket-Stack}
        {
          \TValGpcc{e}{\tau'}\\
          \drflowjudge{\Pi}{pc}{pc'} \\
          \forall i \in \{1,2\}.\TValGpcS{{s_i}}{[\tau']\tau}
        }
        {\TValGpcS{\distcon{e}{c}{\bracket{s_1}{s_2}}}{\tau}}

\Rule{Bracket-Head}
        {
          \TValGpcc{\bracket{e_1}{e_2}}{\tau'}\\
          \drflowjudge{\Pi}{pc}{pc'} \\
          \TValGpcS{s}{[\tau']\tau}
        }
        {\TValGpcS{\distcon{\bracket{e_1}{e_2}}{c}{s}}{\tau}}
   \\\\
\end{mathpar}
\caption{Distributed Typing rules for Bracketed Expressions.}
\label{fig:distbracketTypes}
\end{figure*}

\section{Subject Reduction Related Proofs.} \label{sec:subjRedRel}

In the following proofs we use the simple and the annotated 
FLAQR syntax interchageably.

\begin{lemma}[UniqueType]\label{lemma:UniqueType}
If~{\TValGpcw{e}{\tau}}~ and~ {\TValGput{e}{\mathring{\tau}}}~then~
$\tau = \mathring{\tau}$
\end{lemma}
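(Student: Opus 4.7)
The plan is to proceed by structural induction on $e$, leveraging the fact that the local typing rules in Figure~\ref{fig:Annotatedtypes} are essentially syntax-directed: for each syntactic form of $e$, exactly one rule has a conclusion of that form. Fixing the common context $(\Pi;\Gamma;pc;c)$, I invert the two derivations to the same rule and then compare the premises to show the resulting output types agree.

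A large group of cases is immediate because the term itself carries an explicit type annotation that fixes (or wholly determines) the output type: $\faila{\tau}$, $\expecta{\tau}$, $\runa{\tau}{e'}{c'}$, $\paira{e_1}{e_2}{\prodtype{\tau_1}{\tau_2}}$, $\injia{\sumtype{\tau_1}{\tau_2}}{e'}$, $\casexpan{e'}{x}{e_1}{e_2}{\sumtype{\tau_1}{\tau_2}}$, $\lamc{x}{\tau_1}{pc'}{e'}$, $\tlam{X}{pc'}{e'}$, $\comparea{\says{(\txcmp{\ell_1}{\ell_2})}{\tau}}{e_1}{e_2}$, and $\selecta{e_1}{e_2}{\says{(\txsel{\ell_1}{\ell_2})}{\tau}}$ each force $\tau$ and $\mathring{\tau}$ into the same shape directly from the annotation (possibly after the IH fixes the type of an inner subterm). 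Variables use the fact that $\Gamma$ is a function, and $\void$ has type $\voidtype$.

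The remaining cases — $e_1\,e_2$, $e\,\tau'$, $\proji{e'}$, $\return{\ell}{e'}$, $\returnv{\ell}{v}$, $\bind{x}{e'}{e''}$, and $\ret{e'}{c'}$ — all derive their output type from the unique type of a subexpression. For example, in $e_1\,e_2$, inversion yields typings of $e_1$ as a function type in both derivations; the induction hypothesis on $e_1$ forces these function types to coincide, fixing the codomain, and hence $\tau = \mathring{\tau}$. Similarly, $\proji{e'}$ inherits uniqueness from the unique product type of $e'$; $\return{\ell}{e'}$ and $\returnv{\ell}{v}$ wrap a unique inner type with the syntactically-fixed label $\ell$; and $\ret{e'}{c'}$ wraps the inner unique type with $\says{pc^{{\integ}{\avail}}}{\cdot}$, where $pc$ is fixed by the context.

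The only mild subtlety is ensuring that the extended typing contexts used in $\bind$, $\lam$, $\tlam$, and $\case$ match across the two derivations before the IH is invoked on a body. This is handled by applying the IH first at the subexpression that determines the extension — e.g.\ at $e'$ in $\bind{x}{e'}{e''}$, whose unique type $\says{\ell}{\tau'}$ fixes both the binding $x:\tau'$ and the extended $pc$ label $\ell \sqcup pc$ — so that the IH on the body is then invoked against identical contexts. I do not anticipate any substantive obstacle: syntax-directedness plus FLAQR's explicit type annotations on non-trivial constructs make this a routine structural induction, with all the ``interesting'' work already discharged by the annotation-bearing cases.
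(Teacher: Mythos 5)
Your proof is correct and takes essentially the same route as the paper, which disposes of this lemma with a one-line ``straightforward induction on the typing derivation of $e$''; your case analysis (syntax-directed rules plus explicit annotations) is the obvious elaboration of that. One small nit: in the \textsc{Lam} case the body is typed at host $u = \UB{\func{\tau_1}{pc'}{\tau_2}}$, which depends on the very $\tau_2$ whose uniqueness you are establishing, so the context-matching trick you describe for \textsf{bind} does not apply there — instead you should strengthen the induction hypothesis to allow the two derivations to differ in their host component (harmless, since the host principal never occurs in any rule's output type).
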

\begin{proof}
Straightforward by induction on typing derivation of $e$.
\end{proof}

\begin{lemma}[WaitUniqueT] \label{lemma:WaitUniqueT}
If~\TValGpcw{E[\expecta{\hat{\tau}}]}{\tau}~and~
\TValGput{E[\expecta{\hat{\tau}}]}{\tau'}~then~
$\tau=\tau'$.
\end{lemma}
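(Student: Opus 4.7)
My approach is structural induction on the evaluation context $E$. The underlying reason the lemma holds is that the annotated FLAQR syntax of Figure~\ref{fig:Annotatedsyntax} decorates enough term constructors with explicit type annotations that the resulting type of a well-typed expression is determined by its syntactic form, not by the ambient $pc$ or host.

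In the base case $E = [\cdot]$, the expression is $\expecta{\hat\tau}$; the sole applicable rule, \ruleref{Expect}, yields type $\hat\tau$ regardless of $pc$, $c$, or $\Gamma$, so $\tau = \hat\tau = \tau'$. In the inductive step, for each shape of $E$ the outer typing rule computes its result type from (a) the type of the smaller sub-context $E'[\expecta{\hat\tau}]$, which is unique by the induction hypothesis, and (b) any type annotations attached to the outer constructor. The straightforward cases include $\return{\ell}{E'}$, $\paira{E'}{e}{\tau_p}$, $\injia{\tau_s}{E'}$, $\selecta{E'}{e}{\tau_s}$, $\comparea{\tau_c}{E'}{e}$, $E'\,\tau_a$, $\proji{E'}$, $E'\,e$, and $v\,E'$: in each, the outer type is read off either directly from an annotation or as a projection (codomain, substitution instance, or component) of the IH type.

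The main obstacles are $E = \bind{x}{E'}{e}$ and $E = \ret{E'}{p}$. For the bind case, the result type is the type of $e$, which does not itself contain the hole; I will discharge this via a strengthened variant of Lemma~\ref{lemma:UniqueType} stating that the type assigned to any well-typed annotated expression is independent of the ambient $pc$ and host whenever both typings succeed, which follows by a routine induction on $e$ using the same annotation-driven argument. For the $\ret$ case, rule \ruleref{Ret} outputs $\says{pc^{\integ\avail}}{\tau_1}$, which visibly depends on the $pc$ of the outer derivation. I expect this to be the subtle point and plan to resolve it by appealing to the stack-typing discipline of Figure~\ref{fig:stacktypes}: whenever $E[\expecta{\hat\tau}]$ occurs as the body of a frame in a well-typed configuration stack, the $pc$ at which that frame is typed is syntactically pinned down by rule \ruleref{Tail} in conjunction with the caller's \ruleref{Run} premise, so any two successful derivations must agree on $pc$ and therefore produce identical outer types.
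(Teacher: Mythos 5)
Your overall strategy — structural induction on $E$, with the type read off from annotations or from the induction hypothesis in each case — is exactly the approach the paper takes (its proof is the one-line ``straightforward by induction over the structure of $E$''), and your treatment of the base case, the annotation-driven cases, and the $\bind{x}{E'}{e}$ case (which reduces to Lemma~\ref{lemma:UniqueType}; your ``strengthened variant'' is just that lemma restated) is fine.

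The gap is in your resolution of the $E = \ret{E'}{p}$ case, which you correctly identify as the delicate one. First, the lemma is a free-standing statement about two typing derivations of the same expression; it carries no hypothesis that $E[\expecta{\hat\tau}]$ occurs as a frame of a well-typed configuration stack, so the stack-typing discipline of Figure~\ref{fig:stacktypes} is simply not available inside this proof — invoking it proves a different, contextualized statement, and would also make the lemma circular with \ruleref{Tail}/Lemma~3, which are supposed to be consumers of this lemma. Second, even if you did add that hypothesis, \ruleref{Tail} does not ``syntactically pin down'' the frame's $pc$: its only constraint is $\drflowjudge{\Pi}{\pc}{\pc'}$, which leaves $\pc'$ existentially quantified, so two derivations of the same stack may still type the frame body at different $\pc'$ and hence, via \ruleref{Ret}, at the syntactically distinct types $\says{pc'^{\integ\avail}}{\tau_1}$ versus $\says{pc''^{\integ\avail}}{\tau_1}$. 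To close this you need either to restrict the two judgments in the lemma to share the same $pc$ whenever $E$ contains a $\ret{\cdot}{p}$ frame, or to weaken the conclusion to equality up to the outer $pc$-derived label; as written, your argument (and, to be fair, the paper's unelaborated one) does not establish the claim for this case.
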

\begin{proof}
Straightforward using induction over structure of $E$.
\end{proof}

\begin{lemma}[stackUniqueT]
If type of the tail 
~\TValGpcS{t}{[\hat{\tau}]\tau}~and \TValGpcS{t}{[\hat{\tau}]\tau'}~ then~
$\tau=\tau'$. 
\end{lemma}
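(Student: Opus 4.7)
The plan is to proceed by structural induction on the tail $t$, mirroring the two rules \ruleref{Emp} and \ruleref{Tail} that derive stack typing judgments of the form $\TValGpcS{t}{[\hat{\tau}]\tau}$.

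For the base case $t = \emptystack$, both derivations must use \ruleref{Emp}, which forces the type of $\emptystack$ to have the shape $[\sigma]\sigma$ for some $\sigma$. Matching this shape against $[\hat{\tau}]\tau$ gives $\hat{\tau} = \tau$, and matching against $[\hat{\tau}]\tau'$ gives $\hat{\tau} = \tau'$, so $\tau = \tau'$ is immediate.

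For the inductive case $t = \stackapp{E[\expecta{\tau''}]}{c}{t'}$, both derivations must end in \ruleref{Tail}. Thus the first derivation gives some $\hat{\tau}_1$ with $\TVal{\Pi;\Gamma;pc';c}{E[\expecta{\tau''}]}{\hat{\tau}_1}$ and $\TValGpcS{t'}{[\hat{\tau}_1]\tau}$, and similarly the second derivation gives some $\hat{\tau}_2$ with $\TVal{\Pi;\Gamma;pc';c}{E[\expecta{\tau''}]}{\hat{\tau}_2}$ and $\TValGpcS{t'}{[\hat{\tau}_2]\tau'}$. Both derivations share the same top-annotation $\tau''$ inside the $\expecta{\cdot}$ (since $t$ is syntactically fixed), so Lemma~\ref{lemma:WaitUniqueT} applies and yields $\hat{\tau}_1 = \hat{\tau}_2$. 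The induction hypothesis on the strictly smaller tail $t'$, typed at $[\hat{\tau}_1]\tau$ and $[\hat{\tau}_1]\tau'$, then delivers $\tau = \tau'$.

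I do not expect any real obstacle here: the lemma is a straightforward syntactic uniqueness result whose only nontrivial ingredient is the already-established Lemma~\ref{lemma:WaitUniqueT}, which handles the fact that the evaluation context $E[\expecta{\tau''}]$ could in principle be assigned different ``intermediate'' return types by different derivations. Once that is pinned down, the indexing structure of the $[\hat{\tau}]\tau$ annotation makes the induction step mechanical. The only mild subtlety is that one must not try to induct on one derivation alone: both derivations must be taken apart in lockstep at each step so that the same $\hat{\tau}_i$ threads through both, which is why the appeal to Lemma~\ref{lemma:WaitUniqueT} is essential rather than optional.
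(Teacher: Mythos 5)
Your proof is correct and takes essentially the same route as the paper, which merely states ``by induction over the typing derivation of $t$'' without further detail. Your version fills in exactly the details that make the one-liner work, in particular the appeal to Lemma~[WaitUniqueT] to identify the intermediate type threaded through both derivations in the \ruleref{Tail} case.
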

\begin{proof}
The proof is by induction over typing derivation of s.
\end{proof}

\begin{lemma}[distUniqueT]\label{lemma:distUniqueT}
If~\TValGpcS{\distcon{e}{c}{s}}{\tau}~and~ \TValGpcS{\distcon{e}{c}{s}}{\tau'}
~then~ $\tau =\tau'$.
\end{lemma}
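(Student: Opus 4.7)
The plan is to proceed by inversion on the two typing derivations of $\distcon{e}{c}{s}$ and to reduce the uniqueness claim to the already-established uniqueness lemmas for the head expression and for the tail. Inspecting Figure~\ref{fig:stacktypes} together with the bracketed configuration rules in Figure~\ref{fig:distbracketTypes}, the only rules that can conclude a judgment of the form $\TValGpcS{\distcon{e}{c}{s}}{\tau}$ are \ruleref{Head}, and (when $e$ or $s$ is a bracketed term) \ruleref{Bracket-Head} and \ruleref{Bracket-Stack}. Since the rule applied is determined syntactically by whether $e$ or $s$ is a bracket, both derivations must conclude by the \emph{same} rule, so I only need to handle each case individually.

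First I would do the \ruleref{Head} case (which also subsumes \ruleref{Bracket-Head} modulo a trivial rewording, since the bracketed variant has the same shape and invokes the same head- and tail-typing premises). By inversion on both derivations I obtain two types $\tau_1'$ and $\tau_2'$ such that $\TVal{\Pi;\Gamma;pc_1;c}{e}{\tau_1'}$ and $\TVal{\Pi;\Gamma;pc_2;c}{e}{\tau_2'}$, together with tail judgments $\TValGpcS{s}{[\tau_1']\tau}$ and $\TValGpcS{s}{[\tau_2']\tau'}$. Applying Lemma~\ref{lemma:UniqueType} to the head judgments gives $\tau_1' = \tau_2'$, so both tail judgments have the same bracketed index. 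Then applying the already-established \emph{stackUniqueT} lemma to the two tail judgments yields $\tau = \tau'$, as required.

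For the remaining \ruleref{Bracket-Stack} case, the tail is itself a bracket $\bracket{s_1}{s_2}$ and the inversion premises are of the shape $\TValGpcS{s_i}{[\tau_k']\tau_k}$ for $i \in \{1,2\}$ and $k \in \{1,2\}$; again Lemma~\ref{lemma:UniqueType} equates the two head types and two applications of \emph{stackUniqueT} to the individual branches of the bracket equate the two outer types. The only minor subtlety is that the premises of \ruleref{Head} quantify over $pc'$, not $pc$, so the two derivations could in principle use different $pc'$s; this is harmless because Lemma~\ref{lemma:UniqueType} (\emph{UniqueType}) already quantifies over the program-counter label used to type $e$, so the uniqueness of $\tau_1' = \tau_2'$ is unaffected.

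I expect the main obstacle to be purely bookkeeping: making sure the case split on which rule concludes the derivation is exhaustive (including the bracketed variants introduced in Appendix~\ref{sec:subjRedRel}), and that Lemma~\ref{lemma:UniqueType} is stated generally enough in the $pc$-argument to be applicable on both sides. No new ideas are needed beyond the inversion-plus-induction pattern already used in Lemmata~\ref{lemma:UniqueType}--\ref{lemma:WaitUniqueT}.
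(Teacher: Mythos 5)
Your proposal is correct and matches the paper's (one-line) proof in substance: invert \ruleref{Head}, use Lemma~\ref{lemma:UniqueType} to equate the head types, then discharge the tail with the stack-uniqueness lemma (the paper cites Lemma~\ref{lemma:WaitUniqueT} directly, which is just the ingredient inside stackUniqueT's induction, so this is the same decomposition at a slightly different granularity). Your extra observations about the bracketed rules and the $pc'$ quantification are sound and only add care the paper leaves implicit.
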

\begin{proof}
Straightforward proof using lemmas UniqueType \ref{lemma:UniqueType} and
\ref{lemma:WaitUniqueT}.
\end{proof}

\begin{lemma}[$\Gamma$-Weakening]
If \TValGpcw{e}{\tau} and for 
all $\tau'$ and $x \notin dom(\Gamma)$,
\TValP{\Gamma,x\ty\tau';pc;c}{e}{\tau}
\end{lemma}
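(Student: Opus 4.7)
The plan is to proceed by structural induction on the typing derivation of $\TValGpcw{e}{\tau}$. The key invariant is that the premises of every typing rule in Figure~\ref{fig:Annotatedtypes} depend on $\Gamma$ only through variable lookups and through (recursive) typing judgments on subexpressions, so inserting a fresh binding $x:\tau'$ preserves each premise.

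The base cases are almost immediate. For \ruleref{Unit}, \ruleref{Fail}, \ruleref{Expect}, and all other axioms that do not consult $\Gamma$, the judgment carries over verbatim since the acts-for premise $\rafjudge{\Pi}{c}{pc}$ is unchanged. For \ruleref{Var} with $\Gamma(y) = \tau$, the hypothesis $x \notin \dom(\Gamma)$ guarantees $y \neq x$, so $(\Gamma, x:\tau')(y) = \Gamma(y) = \tau$, and the rule reapplies.

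For the inductive step in the non-binding compositional rules (\ruleref{App}, \ruleref{TApp}, \ruleref{Pair}, \ruleref{UnPair}, \ruleref{Inj}, \ruleref{UnitM}, \ruleref{Sealed}, \ruleref{Run}, \ruleref{Ret}, \ruleref{Compare}, \ruleref{Select}), we simply apply the induction hypothesis to each subderivation on $\Gamma$ to obtain the corresponding subderivation on $\Gamma, x:\tau'$, and the side conditions on $\Pi$, $pc$, $c$, and the various flows-to/acts-for/reads premises are untouched by the context extension. For the binding rules \ruleref{Lam}, \ruleref{TLam}, \ruleref{BindM}, and \ruleref{Case}, the subderivation is typed under $\Gamma, y:\sigma$ for some locally bound $y$; by the Barendregt variable convention (or by explicit $\alpha$-renaming of $y$), we may assume $y \neq x$ and $y \notin \dom(\Gamma, x:\tau')$, after which the induction hypothesis applied to the body yields a derivation under $(\Gamma, y:\sigma), x:\tau'$, which by reordering of the (unordered) context is $(\Gamma, x:\tau'), y:\sigma$, as required by the rule.

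The only mild obstacle is the bookkeeping around $\alpha$-renaming in the binding cases to ensure freshness of $x$ with respect to the locally introduced binder; this is standard and causes no conceptual difficulty. All information-flow premises ($\drflowjudge{\Pi}{\cdot}{\cdot}$, $\rafjudge{\Pi}{c}{\cdot}$, $\UB{\cdot}$, and $\readjudge{\Pi}{c}{\cdot}$) are independent of $\Gamma$, so they need no adjustment and the induction closes cleanly.
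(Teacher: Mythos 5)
Your proof is correct and follows the same route the paper takes: the paper's proof is simply stated as ``by induction on the structure of $e$,'' and your induction on the typing derivation (with the standard $\alpha$-renaming bookkeeping for the binding rules) is just a fleshed-out version of that argument. Nothing further is needed.
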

\begin{proof}
By Induction on structure of $e$.
\end{proof}

\begin{lemma}[CTXif] \label{lemma:CTXif}
If \TValGpcw{E[v]}{\tau} and 
$x\notin dom(\Gamma)$ then
$\exists \tau'$, such that 
\TValP{\Gamma,x\ty\tau',\pc;c}{E[x]}{\tau} and \TValGpcw{v}{\tau'}.
\end{lemma}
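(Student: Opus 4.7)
The plan is to proceed by structural induction on the evaluation context $E$. In the base case $E = [\cdot]$, we have $E[v] = v$, so the hypothesis gives $\TValGpcw{v}{\tau}$ directly; taking $\tau' = \tau$ and noting $E[x] = x$, the \ruleref{Var} rule applies to the extended context (with $\Gamma(x) = \tau'$ supplied by the extension, and the acts-for premise $\rafjudge{\Pi}{c}{pc}$ reused verbatim from the original derivation), yielding $\TValP{\Gamma, x\ty\tau'; \pc; c}{x}{\tau}$.

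For the inductive step I case-split on the outermost shape of $E$, covering each production in the grammar of evaluation contexts from Figure~\ref{fig:semantics}: application contexts $E'\,e$ and $v'\,E'$, protected return $\return{\ell}{E'}$, monadic bind $\bind{y}{E'}{e}$, pair and projection contexts, injection and case contexts, remote contexts $\ret{E'}{p}$, and the consensus/replication contexts $\select{E'}{e}$, $\select{f}{E'}$, $\compare{E'}{e}$, $\compare{f}{E'}$. In every case I invert the top-level typing rule to obtain a subderivation for $E'[v]$ at some inner $\pc'$ and host $c'$, apply the induction hypothesis to get a witness $\tau'$ together with $\TValGpcw{v}{\tau'}$ and a derivation $\TValP{\Gamma, x\ty\tau'; \pc'; c'}{E'[x]}{\hat{\tau}}$ for the subcontext, and then reassemble by reapplying the same typing rule. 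Sibling subterms that lie outside the hole (for example $e$ in $E'\,e$, or the two branches of a case context) are retyped in the extended context via the $\Gamma$-weakening lemma, and the various acts-for and flows-to side conditions transfer verbatim since neither $\Pi$, the outer $\pc$, nor the outer host is modified.

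The only delicate cases are the binder-introducing contexts $\bind{y}{E'}{e}$ and $\casexpan{E'}{y}{e_1}{e_2}{\tau}$: because $x \notin \operatorname{dom}(\Gamma)$, we may $\alpha$-rename to ensure $y \neq x$, so the weakened context $\Gamma, x\ty\tau', y\ty\tau''$ remains well-formed and no capture can occur (the hole sits strictly inside $E'$, never inside the continuation). The contexts $\ret{E'}{p}$ and $\select{f}{E'}$-style contexts require instantiating the induction hypothesis at a shifted host or $\pc$ level, but the reassembly step simply recovers the original \ruleref{Ret}, \ruleref{Select}, or \ruleref{Compare} derivation with the new leaf. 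I do not anticipate a genuine conceptual obstacle; the lemma is infrastructure for a later substitution/preservation argument, and the proof is a mechanical replay of each typing rule, one per context production.
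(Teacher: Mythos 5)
Your proof is correct and follows the same route as the paper, which proves this lemma simply ``by induction on the structure of $E$''; your base case (taking $\tau' = \tau$ and applying \ruleref{Var}) and your invert--apply-IH--reassemble inductive step, with $\Gamma$-weakening for sibling subterms and $\alpha$-renaming for the binder-introducing contexts, are exactly the intended details. The only nitpick is that no $\pc$/host shift is actually needed for the hole position in \ruleref{Ret}, \ruleref{Select}, or \ruleref{Compare} (the hole subterm is typed at the outer $\pc$ and $c$ in those rules), but this over-caution does not affect the argument.
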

\begin{proof}
By induction on structure of $E$.
\end{proof}

\begin{lemma}[CTXonlyif] \label{lemma:CTXonlyif}
\TValGpcw{v}{\tau'} and
\TValP{\Gamma,x\ty\tau',\pc;c}{E[x]}{\tau} then
\TValGpcw{E[v]}{\tau}, when 
$x \not\in FV(E)$($FV$ returns the free variables).
\end{lemma}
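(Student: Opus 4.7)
The plan is to prove Lemma~\ref{lemma:CTXonlyif} by structural induction on the evaluation context $E$, mirroring the proof strategy of Lemma~\ref{lemma:CTXif} but running in the reverse direction. The base case $E = [\cdot]$ is immediate: here $E[x] = x$, so the derivation of $\TValP{\Gamma,x\ty\tau';\pc;c}{x}{\tau}$ must end in \ruleref{Var}, which forces $\tau = \tau'$; combined with the hypothesis $\TValGpcw{v}{\tau'}$, we get $\TValGpcw{E[v]}{\tau}$ directly.

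For the inductive step, I would perform a case analysis on the outermost form of $E$ drawn from the grammar in Figure~\ref{fig:semantics} (and its expanded version in Appendix~\ref{sec:fulllang}). In each case, inversion on the typing derivation of $E[x]$ pins down the last typing rule used (e.g., \ruleref{App} for $E = E'~e$, \ruleref{BindM} for $E = \bind{y}{E'}{e}$, \ruleref{Compare} for $E = \comparea{\tau}{E'}{e}$, and so on). The subterm containing the hole is typed by a judgment of the form $\TValP{\Gamma,x\ty\tau';\pc'';c}{E'[x]}{\tau''}$, to which the induction hypothesis applies to yield $\TValP{\Gamma;\pc'';c}{E'[v]}{\tau''}$. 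Reassembling with the same typing rule (whose side conditions about $\pc$-flows, acts-for, reads, and clearance judgments do not mention the variable context and are therefore preserved) gives $\TValGpcw{E[v]}{\tau}$.

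The two subtleties that deserve care are (i) subterms of $E$ that lie outside the hole, such as the $e$ in $E'~e$ or the branches $e_1, e_2$ in $\casexpan{E'}{y}{e_1}{e_2}{\tau}$, and (ii) contexts that introduce binders, such as $\bind{y}{E'}{e}$ and the case branches. For (i), inversion gives us $\TValP{\Gamma,x\ty\tau';\pc;c}{e}{\tau''}$, and since $x \notin FV(E)$ implies $x \notin FV(e)$, we need a strengthening lemma (the converse of $\Gamma$-Weakening) to drop $x$ from the context; this is a routine structural induction on typing derivations and is the main piece of auxiliary work. For (ii), we rely on the standard Barendregt convention to ensure the bound variable $y$ differs from $x$ and does not occur in $v$, so that substituting $v$ at the hole does not capture $y$ and the typing rule's extended context $\Gamma, x\ty\tau', y\ty\tau''$ can be rearranged to $\Gamma, y\ty\tau''$ in the reconstructed derivation.

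The main obstacle I anticipate is keeping track of the $\pc$ label and the executing host $c$ across the inductive step: rules like \ruleref{BindM} and \ruleref{Case} raise the $\pc$ inside the subterm containing the hole, so the induction hypothesis must be stated generally enough to allow $\pc$ to vary between the outer and inner judgments. I would therefore strengthen the lemma's statement (or at least its inductive formulation) to quantify over all $\pc$ that appear in subderivations, which is standard for FLAC/FLAQR-style systems and does not change the top-level claim.
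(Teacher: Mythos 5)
Your proposal matches the paper's proof, which is stated simply as ``by induction over the structure of $E$,'' and your fleshed-out case analysis is the right way to discharge it. One small observation: in FLAQR's evaluation-context grammar the hole never sits under a binder or in a position where the $\pc$ is raised (e.g., in $\bind{x}{E}{e}$ the hole is the scrutinee, typed at the outer $\pc$, not the body), so the variable-capture and $\pc$-generalization concerns you raise are harmless but not actually needed; the context-strengthening lemma for subterms outside the hole is the only genuine auxiliary obligation.
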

\begin{proof} Proof by induction over structure of $E$. 
\end{proof}

\begin{lemma}[CTXiff] \label{lemma:CTXiff}
\TValGpcw{E[v]}{\tau} iff 
$\exists \tau'$, such that 
\TValP{\Gamma,x\ty\tau',\pc;c}{E[x]}{\tau} and \TValGpcw{v}{\tau'}
when $x\notin dom(\Gamma)$.
\end{lemma}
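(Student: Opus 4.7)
The plan is to observe that Lemma~\ref{lemma:CTXiff} is nothing more than the biconditional obtained by assembling the two preceding lemmas: the forward direction ($\Longrightarrow$) is exactly Lemma~\ref{lemma:CTXif}, and the backward direction ($\Longleftarrow$) is exactly Lemma~\ref{lemma:CTXonlyif}. No fresh induction over $E$ is needed, and no new typing-derivation case analysis is required; the work has already been done.

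Concretely, for the forward direction I would assume \TValGpcw{E[v]}{\tau} together with the freshness hypothesis $x\notin\mathrm{dom}(\Gamma)$ and invoke Lemma~\ref{lemma:CTXif} to extract the witness $\tau'$ satisfying \TValP{\Gamma,x\ty\tau';\pc;c}{E[x]}{\tau} and \TValGpcw{v}{\tau'}. For the backward direction, I would take the two premises \TValGpcw{v}{\tau'} and \TValP{\Gamma,x\ty\tau';\pc;c}{E[x]}{\tau} and invoke Lemma~\ref{lemma:CTXonlyif} to conclude \TValGpcw{E[v]}{\tau}.

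The only genuine obstacle—and it is quite mild—is reconciling the side conditions of the two underlying lemmas: CTXif requires $x\notin\mathrm{dom}(\Gamma)$, while CTXonlyif requires the stronger $x\notin FV(E)$. Since the hypothesis of CTXiff supplies only the former, I would invoke the standard fresh-variable convention: if $x$ happens to occur free in $E$, rename it to a truly fresh $x'$ (preserving typing by a routine $\Gamma$-weakening argument of the kind already established), and then apply CTXonlyif with $x'$. After this renaming step the two side conditions coincide, and the biconditional follows immediately. I would state this explicitly at the point where CTXonlyif is applied rather than embedding it as an implicit convention, to keep the argument airtight.
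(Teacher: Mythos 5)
Your proposal matches the paper's proof exactly: the paper simply cites Lemma~\ref{lemma:CTXif} for the forward direction and Lemma~\ref{lemma:CTXonlyif} for the backward direction and calls the combination straightforward. Your additional remark about reconciling the side conditions $x\notin\mathrm{dom}(\Gamma)$ versus $x\notin FV(E)$ via a fresh-variable renaming is a reasonable bit of extra care that the paper leaves implicit, but it does not change the approach.
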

\begin{proof}
Straightforward from lemma 
\ref{lemma:CTXif} and \ref{lemma:CTXonlyif}.
\end{proof}

\begin{lemma}[Expect]\label{lemma:expecta}
If \TValGpcw{E[v]}{\tau} and \TValGpcw{v}{\tau'} then 
\TValGpcw{E[\expecta{\tau'}]}{\tau}.
\end{lemma}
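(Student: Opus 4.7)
The plan is to prove this lemma as an immediate consequence of the context lemmas already established, specifically Lemma~\ref{lemma:CTXiff} (or equivalently its two halves \ref{lemma:CTXif} and \ref{lemma:CTXonlyif}) together with Lemma~\ref{lemma:UniqueType}. The core observation is that $\expecta{\tau'}$ is a ``typing-transparent'' placeholder: rule \ruleref{Expect} types it at any annotated $\tau'$, requiring only the side condition $\rafjudge{\Pi}{c}{pc}$, which is automatically available whenever the surrounding context is well-typed at $pc$ on host $c$.

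Concretely, I would proceed as follows. First, from $\TValGpcw{E[v]}{\tau}$, invoke Lemma~\ref{lemma:CTXif} with a fresh variable $x \notin \operatorname{dom}(\Gamma) \cup FV(E)$ to obtain some type $\tau''$ and a derivation $\TValP{\Gamma,x\ty\tau'';\pc;c}{E[x]}{\tau}$ together with $\TValGpcw{v}{\tau''}$. Second, since we also have $\TValGpcw{v}{\tau'}$ by hypothesis, Lemma~\ref{lemma:UniqueType} yields $\tau' = \tau''$, so the context derivation actually reads $\TValP{\Gamma,x\ty\tau';\pc;c}{E[x]}{\tau}$. Third, using rule \ruleref{Expect} together with the acts-for premise $\rafjudge{\Pi}{c}{pc}$ (which we can extract from the well-typedness of $E[v]$, since every typing rule carries this premise at the outermost level), derive $\TValGpcw{\expecta{\tau'}}{\tau'}$. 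Finally, apply Lemma~\ref{lemma:CTXonlyif} with this derivation and the substitution-style judgment for $E[x]$ to conclude $\TValGpcw{E[\expecta{\tau'}]}{\tau}$, as required.

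The only bookkeeping steps are (i) choosing $x$ fresh so that the hypotheses of the context lemmas apply, which is standard $\alpha$-renaming, and (ii) extracting $\rafjudge{\Pi}{c}{pc}$ from the given typing derivation. Neither is a real obstacle; I expect the proof to be essentially a one-line appeal to the earlier lemmas. If a fully inductive treatment on the structure of $E$ were preferred in place of invoking Lemma~\ref{lemma:CTXiff}, it would proceed case-analytically over the productions $E ::= [\cdot] \mid E~e \mid v~E \mid \return{\ell}{E} \mid \bind{x}{E}{e} \mid \ret{E}{p} \mid \select{E}{e} \mid \dots$, with the base case $E = [\cdot]$ being immediate and each inductive case simply reusing the typing rule of the surrounding construct after substituting $\expecta{\tau'}$ for the value of type $\tau'$ at the hole; the premises of the outer rule (flow constraints, acts-for premises, $pc$-lowering in $\bind$, etc.) remain satisfied because they do not mention the subject of the hole except through its type.
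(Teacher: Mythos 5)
Your proof is correct and follows essentially the same route as the paper, whose entire proof is an appeal to Lemma~\ref{lemma:CTXonlyif}; you merely spell out the implicit intermediate steps (Lemma~\ref{lemma:CTXif} to extract the hole type, Lemma~\ref{lemma:UniqueType} to identify it with $\tau'$, rule \ruleref{Expect} to type the placeholder, then Lemma~\ref{lemma:CTXonlyif} to plug it back in). The only cosmetic caveat, shared with the paper's own one-line proof, is that Lemma~\ref{lemma:CTXonlyif} is stated for values while $\expecta{\tau'}$ is not one, so one tacitly reads that lemma as applying to any term typeable at $\tau'$ under the ambient $\rafjudge{\Pi}{c}{pc}$.
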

\begin{proof}
Straightforward using 
lemma CTXonlyif \ref{lemma:CTXonlyif}.
\end{proof}

\begin{lemma}[RExpect] \label{lemma:rexpect}
\TValGpcw{v}{\tau'} and
\TValGpcw{E[\expecta{\tau'}]}{\tau} then
\TValGpcw{E[v]}{\tau}.
\end{lemma}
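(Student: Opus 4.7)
The plan is to prove Lemma \ref{lemma:rexpect} as a straightforward dual of Lemma \ref{lemma:expecta}, exploiting the same machinery as the proof of Lemma \ref{lemma:CTXiff}. The key observation is that $\expecta{\tau'}$ behaves exactly like a value of type $\tau'$ with respect to typing, since rule \ruleref{Expect} requires only $\rafjudge{\Pi}{c}{pc}$ and assigns it its type annotation $\tau'$. Thus both $\expecta{\tau'}$ and $v$ can be ``plugged'' into an evaluation context at a position where a term of type $\tau'$ is expected, and the resulting expressions must have the same type.

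Concretely, I would proceed as follows. First, apply Lemma \ref{lemma:CTXif} (CTXif) to the assumption $\TValGpcw{E[\expecta{\tau'}]}{\tau}$ with a fresh variable $x \notin \dom(\Gamma)$. This yields some type $\hat{\tau}$ with $\TValP{\Gamma,x\ty\hat{\tau};\pc;c}{E[x]}{\tau}$ and $\TValGpcw{\expecta{\tau'}}{\hat{\tau}}$. Second, by inversion on \ruleref{Expect}, the latter derivation forces $\hat{\tau} = \tau'$; more formally, applying Lemma \ref{lemma:UniqueType} to $\expecta{\tau'}$ (which is also typeable at $\tau'$ via \ruleref{Expect} using the acts-for premise already available) gives $\hat{\tau}=\tau'$. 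Third, substitute back: we now have $\TValP{\Gamma,x\ty\tau';\pc;c}{E[x]}{\tau}$ and, by hypothesis, $\TValGpcw{v}{\tau'}$. Applying Lemma \ref{lemma:CTXonlyif} (CTXonlyif) with the fresh $x \notin FV(E)$ then yields $\TValGpcw{E[v]}{\tau}$, as required.

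I do not expect any real obstacle here. The only subtlety is ensuring that the type annotation on $\expecta{\tau'}$ genuinely coincides with the type that CTXif extracts from the hole; this is secured by Lemma \ref{lemma:UniqueType} together with the form of rule \ruleref{Expect}, which is deterministic in the type it assigns. Compared to Lemma \ref{lemma:expecta}, which goes from a typed $v$ in the hole to a typed $\expecta{\tau'}$ in the hole, this lemma moves in the opposite direction, and the two proofs are symmetric once one treats $\expecta{\tau'}$ uniformly as ``a canonical inhabitant of $\tau'$ at any host/pc satisfying $\rafjudge{\Pi}{c}{pc}$.'' Consequently the proof is essentially a one-line appeal to Lemma \ref{lemma:CTXiff}, with the intermediate type pinned down by \ruleref{Expect}.
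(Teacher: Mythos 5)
Your proposal is correct and matches the paper's own argument, which simply cites Lemma~\ref{lemma:CTXonlyif} (CTXonlyif); your extra steps (CTXif to expose the hole, then pinning the intermediate type via \ruleref{Expect} and Lemma~\ref{lemma:UniqueType}) just make explicit what the paper leaves implicit. The only cosmetic caveat—that the CTX lemmas are stated for values $v$ while $\expecta{\tau'}$ is not formally a value—is an imprecision the paper itself shares, so your treatment is faithful to its level of rigor.
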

\begin{proof} Straightforward from 
lemma CTXonlyif \ref{lemma:CTXonlyif}.
\end{proof}

\begin{lemma}[Clearance] \label{lemma:clearance}
If $\TValGpcw{e}{\tau}$ then $\rafjudge{\Pi}{c}{pc}$
\end{lemma}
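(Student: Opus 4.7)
The plan is to prove this lemma by straightforward induction on the structure of the typing derivation for $\TValGpcw{e}{\tau}$. The key observation, which makes this argument essentially mechanical, is that every typing rule in Figure~\ref{fig:types} (and in the fully annotated version, Figure~\ref{fig:Annotatedtypes}) has been explicitly designed to include the premise $\rafjudge{\Pi}{c}{pc}$. This invariant — that the host running an expression acts for the program counter — is baked directly into the type system rather than being derived from more primitive properties.

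The inductive argument proceeds case by case on the last typing rule used. In each case, the needed conclusion $\rafjudge{\Pi}{c}{pc}$ appears directly as a premise of the rule, so it can be read off from the derivation without appeal to the inductive hypothesis. For instance, rule \ruleref{Unit}, \ruleref{Fail}, \ruleref{Expect}, \ruleref{Lam}, \ruleref{App}, \ruleref{UnitM}, \ruleref{Sealed}, \ruleref{BindM}, \ruleref{Run}, \ruleref{Ret}, \ruleref{Compare}, and \ruleref{Select} all explicitly list $\rafjudge{\Pi}{c}{pc}$ among their premises. The same is true of \ruleref{Var}, \ruleref{TLam}, \ruleref{TApp}, \ruleref{Pair}, \ruleref{UnPair}, \ruleref{Inj}, and \ruleref{Case} in the expanded rule set.

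For the bracketed typing rules in Figure~\ref{fig:bracketTypes} needed later for noninterference, the same invariant holds: \ruleref{Bracket}, \ruleref{Bracket-Values}, and their variants each include the acts-for premise (either directly or by construction through the sub-derivations on $e_1$ and $e_2$). The \ruleref{BullR}, \ruleref{BullL}, \ruleref{Bracket-Fail-L}, \ruleref{Bracket-Fail-R}, \ruleref{Bracket-Fail-A}, and \ruleref{Bracket-Same} rules defer to sub-derivations, which yield the conclusion by the inductive hypothesis.

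There is no genuine obstacle here, since no case requires combining or weakening acts-for judgments; the claim is in each case an immediate premise or an immediate consequence of the inductive hypothesis. The lemma is essentially a sanity check that our design choice — threading $\rafjudge{\Pi}{c}{pc}$ through every rule — has the expected consequence, and it will be used in later proofs (e.g., as a side condition when reasoning about evaluation contexts and substitution) where we need to know that the executing host has sufficient authority to run the code under consideration.
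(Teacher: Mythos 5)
Your proof is correct and matches the paper's own argument, which is simply a straightforward induction on the typing derivation, relying on the fact that $\rafjudge{\Pi}{c}{pc}$ is either an explicit premise of each rule or follows from the inductive hypothesis applied to a sub-derivation at the same $\pc$ and $c$. No gaps.
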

\begin{proof}
Proof is straightforward by induction on the typing judgments.
\end{proof}

\begin{lemma}[Values PC] \label{lemma:valuespc}
Let $\TValGpcw{v}{\tau}$. If  
$\rafjudge{\Pi}{c'}{pc'}$ and 
$\rafjudge{\Pi}{c'}{\UB{\tau}}$
then $\TValGpcdashpc{v}{\tau}$.
\end{lemma}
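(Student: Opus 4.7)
The plan is to proceed by structural induction on the value $v$ (equivalently, by induction on the typing derivation $\TValGpcw{v}{\tau}$). The key observation driving the proof is that in each typing rule whose conclusion types a value, the host $c$ and the program counter $pc$ appear only in the ``external'' premises of the form $\rafjudge{\Pi}{c}{pc}$ (and in some cases $\rafjudge{\Pi}{c}{\UB{\tau}}$), but never in the premises that constrain the value's own subcomponents. In other words, typing of values is robust to changes of $c$ and $pc$ as long as the new host can act for the new $pc$ and for the clearance of the type.

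For the atomic cases — rules \ruleref{Unit}, \ruleref{Fail}, and \ruleref{Expect} (viewed as a value when needed) — the conclusion depends on $c$ and $pc$ only through $\rafjudge{\Pi}{c}{pc}$, which is discharged directly from the hypothesis $\rafjudge{\Pi}{c'}{pc'}$. For the structural value cases \ruleref{Sealed}, \ruleref{Pair}, and \ruleref{Inj}, I would apply the induction hypothesis to each subvalue, using the definition of $\UB{\cdot}$ (Figure~\ref{fig:UBFunction}) to see that the clearance of the outer type bounds the clearances of the component types: e.g.\ $\UB{\says{\ell}{\tau'}} = \UB{\tau'}$, so $\rafjudge{\Pi}{c'}{\UB{\tau}}$ transfers to $\rafjudge{\Pi}{c'}{\UB{\tau'}}$, supplying the second side condition for the IH. Then the new external premise $\rafjudge{\Pi}{c'}{pc'}$ completes the reconstructed derivation.

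The interesting cases are \ruleref{Lam} and \ruleref{TLam}, which I consider the main (mild) obstacle. Here the body is typechecked not at $c$ but at the clearance $u = \UB{\func{\tau_1}{pc'}{\tau_2}}$ of the function type itself, at program counter $pc'$ coming from the function annotation. Crucially, neither of these depends on the outer $c$ or $pc$, so the body-typing premise $\TVal{\Pi;\Gamma,x:\tau_1;pc';u}{e}{\tau_2}$ carries over unchanged. The side conditions to re-establish are $\rafjudge{\Pi}{c'}{pc'}$ (given) and $\rafjudge{\Pi}{c'}{u}$, which follows from the hypothesis $\rafjudge{\Pi}{c'}{\UB{\tau}}$ since $u = \UB{\tau}$ in this case. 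The \ruleref{TLam} case is analogous, with $u = \UB{\tau}$.

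The hard part, such as it is, is essentially bookkeeping: making sure that for each value-forming rule the clearance of the conclusion's type upper-bounds exactly those $\rafjudge{}{}$ premises that mention $c$, so that the single hypothesis $\rafjudge{\Pi}{c'}{\UB{\tau}}$ suffices. Once this is checked case by case using Figure~\ref{fig:UBFunction}, the induction closes immediately and no substitution lemmas or strengthening arguments are required.
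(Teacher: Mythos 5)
Your proposal is correct and matches the paper's own proof: induction on the value (typing derivation), discharging the $\rafjudge{\Pi}{c}{pc}$ side conditions from the new hypotheses, observing that the \ruleref{Lam}/\ruleref{TLam} body is typed at the clearance $u=\UB{\tau}$ independently of the outer $c$ and $pc$, and using $\rafjudge{\Pi}{c'}{\UB{\tau}}$ to re-establish $\rafjudge{\Pi}{c'}{u}$. The only case the paper covers that you omit is $\bracket{v_1}{v_2}$, but it is handled by exactly the IH-on-subvalues schema you describe.
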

\begin{proof} Given that,
\begin{align}
\rafjudge{\Pi}{c'}{pc'}   \label{VTH} \\
\rafjudge{\Pi}{c'}{\UB{\tau}}  \label{VTH.5} 
\end{align}
Using induction over values.\\ \\
\textbf{Case \ruleref{UnitM}.} 
Using \ruleref{Unit} typing rule and (\ref{VTH}) we have  
{\TValGpcdashpc{\void}{\voidtype}}\\
\textbf{Case \ruleref{Pair}.} Given 
\begin{equation}\label{pairVTH}
{\TValGpcw{\paira{v_1}{v_2}{\prodtype{τ_1}{τ_2}}}{\prodtype{τ_1}{τ_2}}}
\end{equation}
Inverting (\ref{pairVTH}) we get 
\begin{equation}\label{pairVTH1}
\TValGpcw{v_1}{τ_1}
\end{equation} 
and
\begin{equation}\label{pairVTH2}
\TValGpcw{v_2}{τ_2}
\end{equation}
By applying induction hypothesis on (\ref{pairVTH1}) and (\ref{pairVTH2}), we get
\begin{equation}\label{pairVTH3}
\TValGpcdashpc{v_1}{τ_1}
\end{equation}
and
\begin{equation}\label{pairVTH4}
\TValGpcdashpc{v_2}{τ_2}
\end{equation}
From rule \ruleref{Pair}, (\ref{pairVTH3}), (\ref{pairVTH4}), and (\ref{VTH}) we get
{\TValGpcdashpc{\paira{v_1}{v_2}{\prodtype{τ_1}{τ_2}}}{\prodtype{τ_1}{τ_2}}}\\ \\
\textbf{Case \ruleref{Inj}.} Similar to \textbf{case \ruleref{Pair}}.\\ \\
\textbf{Case \ruleref{Sealed}.} Given 
\begin{equation}\label{sealVTH}
{\TValGpcw{\returnv{ℓ }{v}}{\says{ℓ }{τ}}}
\end{equation}
Inverting (\ref{sealVTH}) we get
\begin{equation}\label{sealVTH1}
 \TValGpcw{v}{\tau} 
\end{equation}
By applying induction hypothesis on (\ref{sealVTH1}) we get
\begin{equation}\label{sealVTH2} 
\TValGpcdashpc{v}{τ}
\end{equation}
Thus from rule [SEAL], (\ref{VTH}) and (\ref{sealVTH2}) we get 
{\TValGpcdashpc{\returnv{ℓ  }{v}}{\says{ℓ  }{τ}}}\\ \\
\textbf{Case \ruleref{Lam}.} We have 
\begin{equation}\label{lamVTH}
{\TValGpcw{\lamc{x}{τ_1}{\pc''}{e}}{\func{τ_1}{\pc''}{τ_2}}}
\end{equation}
Inverting (\ref{lamVTH}) we get
\begin{align}
\TVal{\Pi;Γ,x\ty τ_1;\pc'';u}{e}{τ_2}  \label{lamVTH1} \\
\rafjudge{\Pi}{c}{pc}  \\ 
u = \UB{\func{τ_1}{\pc''}{τ_2}} \\
\rafjudge{\Pi}{c}{\UB{\func{τ_1}{\pc''}{τ_2}}}  \\
\intertext{Applying IH on \ref{lamVTH1}}
\TVal{\Pi;Γ,x\ty τ_1;\pc'';u}{e}{τ_2}  \label{lamVTH1.0} \\
\intertext{given in lemma statement} 
\rafjudge{\Pi}{c'}{pc'}  \label{lamvth1.2} \\                                  
\rafjudge{\Pi}{c'}{\UB{\func{τ_1}{\pc''}{τ_2}}}  \label{lamvth1.3} \\ 
\intertext{from \ref{lamvth1.2}, \ref{lamvth1.3}, 
\ref{lamVTH1.0} and \ruleref{Lam} rule we get}
{\TValGpcdashpc{\lamc{x}{τ_1}{\pc''}{e}}{\func{τ_1}{\pc''}{τ_2}}} \\ \\
\end{align}
\textbf{Case \ruleref{TLam}.} Similar to \textbf{case \ruleref{Lam}}.\\ \\
\textbf{Case \ruleref{Bracket}.}
Given,
\begin{align}
{\TValGpcw{\bracket{v_1}{v_2}}{\tau}}  \label{br1} \\
\intertext{inverting \ref{br1} we get}
\protjudge{\delegcontext}{H^\pi}{\cfun{\tau}} \label{br2} \\
\TValGpcw{v_1}{\tau} \label{br3} \\      
\TValGpcw{v_2}{\tau} \label{br4} \\
\intertext{IH on \ref{br3} and \ref{br4}}
\TValGpcdashpc{v_1}{\tau} \label{br5} \\
\TValGpcdashpc{v_2}{\tau} \label{br6} \\
\intertext{and given in lemma statement,}
\rafjudge{\Pi}{c'}{pc'} \label{br7}\\
\intertext{from \ref{br5},\ref{br6}, \ref{br7} 
and \ref{br2} we get}
{\TValGpcdashpc{\bracket{v_1}{v_2}}{\tau}}
\end{align}
\end{proof}

\begin{lemma}[$\pc$ reduction] \label{lemma:pcreduction}
 Let \TValGpcw{e}{\tau}.
  For all $\pc, \pc'$, such that
  \rflowjudge{\delegcontext}{\pc'}{\pc} and 
  \rafjudge{Π}{c}{pc'}  
  then 
  \TValP{\varcontext;pc';c}{e}{\tau} holds.
\end{lemma}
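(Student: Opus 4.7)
The plan is to prove this lemma by structural induction on the typing derivation of $e$. Since each typing rule in Figure~\ref{fig:Annotatedtypes} carries an acts-for premise of the form $\rafjudge{\Pi}{c}{pc}$ and may constrain $pc$ via a flows-to premise, the core observation is that strengthening the $pc$ to $pc'$ (with $\drflowjudge{\Pi}{pc'}{pc}$ and $\rafjudge{\Pi}{c}{pc'}$ given in the hypothesis) only makes these premises easier to satisfy: the required acts-for on $c$ is supplied directly, and any flows-to premise of the form $\drflowjudge{\Pi}{pc}{\ell}$ can be reconstructed as $\drflowjudge{\Pi}{pc'}{\ell}$ by transitivity of $\sqsubseteq$.

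The leaf cases \ruleref{Var}, \ruleref{Unit}, \ruleref{Fail}, \ruleref{Expect}, and \ruleref{Sealed} follow immediately by re-applying the rule with the new acts-for premise. For \ruleref{Lam} and \ruleref{TLam}, the body is typed under an internal pc $\pc''$ that does not depend on the outer $pc$; only the outer acts-for premise changes, so the conclusion reapplies verbatim. For \ruleref{App}, \ruleref{TApp}, \ruleref{UnitM}, \ruleref{Run}, and \ruleref{Pair}/\ruleref{UnPair}/\ruleref{Inj}/\ruleref{Case}, the premises involving $pc$ are of the form $\drflowjudge{\Pi}{pc}{\pc'_{\text{inner}}}$ (or are pc-independent), and transitivity together with $\drflowjudge{\Pi}{pc'}{pc}$ gives the strengthened premise. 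Subexpressions typed at the same $pc$ are handled directly by the induction hypothesis, while subexpressions typed at unchanged pcs (e.g., under a lambda) need no adjustment.

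The most delicate case is \ruleref{BindM}, where the body $e$ is typed at $\ell \sqcup pc$ rather than at $pc$ itself. Here the plan is to observe that monotonicity of $\sqcup$ gives $\drflowjudge{\Pi}{\ell \sqcup pc'}{\ell \sqcup pc}$ from $\drflowjudge{\Pi}{pc'}{pc}$, and that $\rafjudge{\Pi}{c}{\ell \sqcup pc'}$ can be derived from $\rafjudge{\Pi}{c}{pc'}$ combined with the original $\rafjudge{\Pi}{c}{\ell \sqcup pc}$ premise established by Lemma~\ref{lemma:clearance} applied to the body's typing. Then the induction hypothesis lowers the body's pc from $\ell \sqcup pc$ to $\ell \sqcup pc'$, and \ruleref{BindM} reapplies with the new bounds. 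Analogous reasoning handles the pc-strengthening inside the premises of \ruleref{Ret}.

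The bracketed rules from Figure~\ref{fig:bracketTypes} are handled uniformly: rules \ruleref{Bracket-Values}, \ruleref{Bracket-Same}, \ruleref{Bracket-Fail-L}, \ruleref{Bracket-Fail-R}, \ruleref{Bracket-Fail-A}, \ruleref{BullR}, \ruleref{BullL} all recurse into subderivations at the same pc and impose only the outer $\rafjudge{\Pi}{c}{pc}$ condition; the IH and the new acts-for hypothesis discharge everything. Rule \ruleref{Bracket} has the extra premise $\drflowjudge{\Pi}{H^{\pi} \sqcup pc}{pc'_{\text{inner}}}$, which is again preserved under strengthening $pc$ to $pc'$ by monotonicity of $\sqcup$ and transitivity of $\sqsubseteq$. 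I do not anticipate a deep obstacle: the lemma is essentially a bookkeeping result about the monotone role of $pc$, with the only subtlety being the \ruleref{BindM}/\ruleref{Ret} cases where $pc$ appears inside a join.
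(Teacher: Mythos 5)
Your proposal is correct and follows essentially the same route as the paper: induction on the typing derivation, discharging each $\rafjudge{\Pi}{c}{pc}$ premise directly from the hypothesis and reconstructing each $\drflowjudge{\Pi}{pc}{\cdot}$ premise by transitivity with $\drflowjudge{\Pi}{pc'}{pc}$ (including the $H^{\pi}\sqcup pc$ premise of \ruleref{Bracket} via monotonicity of $\sqcup$). The only cosmetic difference is that the paper spells out \ruleref{Run}, \ruleref{Ret}, and the bracket cases and defers the remaining cases (including \ruleref{BindM}) to FLAC's $pc$-reduction lemma, whereas you treat \ruleref{BindM} explicitly — and your handling of it is sound.
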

\begin{proof}
Proof is by induction on the derivation of the typing judgment.
Given that,
\begin{align}
\rflowjudge{\delegcontext}{\pc'}{\pc} \label{pcred1} \\
\rafjudge{Π}{c}{pc'} \label{pcred2} 
\end{align}
\textbf{Case \ruleref{Run}.}
From the premises of \ruleref{Run} typing rule \\
${\TValGpcw{\runa{\tau}{e}{c'}}{τ}}$ \\ 
we get
\begin{align}
\TVal{\Pi;\Gamma;\pc'';c'}{e}{τ'} \label{pcredrun1} \\
\drflowjudge{\Pi}{pc}{pc''} \label{pcredrun2} \\
\rafjudge{\Pi}{c}{pc} \label{pcredrun3} \\
\rafjudge{\Pi}{c}{\UB{\tau'}}  \label{pcredrun4} \\
\intertext{where $\tau' = \says{pc''^{\integ \avail}}{\tau'}$.}
\intertext{From \ref{pcredrun5} and \ref{pcredrun2} we have }
\drflowjudge{\Pi}{pc'}{pc''} \label{pcredrun5} \\
\intertext{From \ref{pcredrun1}, \ref{pcred2}
\ref{pcredrun4}, \ref{pcredrun5} we get}
\end{align}
${\TValP{\Gamma;pc';c}{\runa{\tau}{e}{c'}}{τ}}$\\
\textbf{Case \ruleref{Ret}.}
From the premises of \ruleref{Ret} typing rule \\
${\TValGpcw{\ret{e}{c'}}{\says{pc{^{\integ \avail}}}{\tau}}}$ \\
we get
\begin{align}
\TValGpcw{e}{τ} \label{pcredret1}\\
\rafjudge{\Pi}{c'}{(\UB{\tau})} \label{pcredret2} \\
\rafjudge{\Pi}{c}{pc} \label{pcredret3} \\
\intertext{Applying Induction hypothesis to \ref{pcredret1} we get}
\TValP{\Gamma;pc';c}{e}{\tau} \label{pcredret4} \\
\intertext{From \ref{pcredret4}, \ref{pcredret2} 
and \ref{pcred2} we get}
\end{align}
${\TValP{\Gamma;pc';c}{\ret{e}{c'}}{\says{pc{^{\integ \avail}}}{\tau}}}$\\
\textbf{Case \ruleref{Compare}:} Straightforward using IH and  \ref{pcred2}.\\
\textbf{Case \ruleref{Select}:}  Straightforward using IH and  \ref{pcred2}.\\
\textbf{Case \ruleref{Bracket}:} From the premises Bracket typing rule \\ 
$\TValGpcw{\bracket{e_1}{e_2}}{\tau}$,\\
we get,
\begin{align}
 \rflowjudge{\delegcontext}{(H^\pi \sqcup \pc)}{{\pc''}} \label{pcredbr1}\\
 e_1 = v_1  \iff e_2 \ne v_2  \label{pcredbr2} \\
 \TValP{Γ;\pc'';c}{e_1}{\tau}  \label{pcredbr3} \\
 \TValP{Γ;\pc'';c}{e_2}{\tau}  \label{pcredbr4} \\
 \protjudge{\delegcontext}{H^{\pi}}{\cfun{\tau}}  \label{pcredbr5} \\
 \rafjudge{\Pi}{c}{pc}
\intertext{From and \ref{pcred1} we can write}
\rflowjudge{\delegcontext}{(H^\pi \sqcup \pc')}{(H^\pi \sqcup \pc)} \label{pcredbr6} \\
\intertext{From \ref{pcredbr6} and \ref{pcredbr1} we get}
 \rflowjudge{\delegcontext}{(H^\pi \sqcup \pc')}{{\pc''}} \label{pcredbr7}
\intertext{Thus from \ref{pcredbr7},\ref{pcredbr3},\ref{pcredbr4},\ref{pcredbr5} 
and \ref{pcred2} we can write}
\end{align}
$\TValP{\Gamma;pc';c}{\bracket{e_1}{e_2}}{\tau}$\\
\textbf{Case \ruleref{Bracket-Values}:}  Straightforward using IH and  \ref{pcred2}.\\
\textbf{Case \ruleref{Bracket-*}:}  Straightforward using IH and  \ref{pcred2}.\\
\textbf{Other Cases :}  Straightforward from \ref{pcred2} and $pc~reduction$ 
lemma in \cite{jflac}.
\end{proof}

\begin{lemma}[Variable substitution]
If \TValP{\Gamma,x:\tau';\pc;c}{e}{\tau} and \TValGpcw{v}{τ'}, then \TValGpcw{\subst{e}{x}{v}}{\tau}.
\end{lemma}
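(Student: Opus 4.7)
The plan is to proceed by induction on the derivation of $\TValP{\Gamma,x\ty\tau';pc;c}{e}{\tau}$. The base case \ruleref{Var} splits on whether $e = x$ or $e = y \neq x$: in the first case, the premise $\Gamma(x)=\tau$ forces $\tau = \tau'$, so $\subst{x}{x}{v} = v$ and the assumption $\TValGpcw{v}{\tau'}$ yields the conclusion directly; in the second case $\subst{y}{x}{v} = y$, and the original derivation still applies since $y$ is bound in $\Gamma$ independently of $x$. The remaining base cases (\ruleref{Unit}, \ruleref{Fail}, \ruleref{Expect}, and \ruleref{Sealed} on a closed value) do not mention $x$ and carry over unchanged once the extra binding is discarded from $\Gamma$.

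For the inductive steps I will apply the induction hypothesis to each subderivation whose environment still contains $x\ty\tau'$, then reassemble the typing derivation of $\subst{e}{x}{v}$ with the same rule. For binding forms (\ruleref{Lam}, \ruleref{TLam}, \ruleref{BindM}, \ruleref{Case}) I will first invoke the $\Gamma$-weakening lemma to add the freshly bound variable to $v$'s environment (appealing to standard capture-avoiding substitution so that $x$ is not shadowed) so that the induction hypothesis applies at the extended context. Rules such as \ruleref{App}, \ruleref{Pair}, \ruleref{UnPair}, \ruleref{Inj}, \ruleref{Compare}, \ruleref{Select}, and \ruleref{UnitM} are routine: substitution distributes over the constructor and the side conditions ($\pc$-flow, acts-for, reads) of the enclosing rule do not mention $x$.

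The delicate cases are \ruleref{Run}, \ruleref{Ret}, and \ruleref{BindM}, where a subexpression is typed at a different host $c'$ or at a raised program counter $\ell \sqcup pc$. Here I cannot reuse $\TValGpcw{v}{\tau'}$ directly at the subexpression's judgment, so my plan is to first retype $v$ via Lemma~\ref{lemma:valuespc} (Values PC) at the new host and $\pc$, discharging its clearance obligation $\rafjudge{\Pi}{c'}{\UB{\tau'}}$ from the acts-for premises that the enclosing \ruleref{Run}/\ruleref{Ret} rule already imposes; for the raised $\pc$ in \ruleref{BindM} I rely on Lemma~\ref{lemma:pcreduction}. The bracketed cases (\ruleref{Bracket}, \ruleref{Bracket-Values}, \ruleref{Bracket-Same}, the \ruleref{Bracket-Fail-L}/\ruleref{Bracket-Fail-R}/\ruleref{Bracket-Fail-A} variants, and \ruleref{BullL}/\ruleref{BullR}) require a small case split on the variant, but each reduces to the induction hypotheses on the two sides with unchanged side conditions. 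The main obstacle throughout is this host-and-$\pc$ bookkeeping: each time a premise of the enclosing rule uses a context different from $(pc,c)$, I must retype $v$ there via Values-PC, recovering the necessary acts-for side conditions from the premises of the enclosing rule. Because $v$ is a value rather than an arbitrary expression, this retyping is always available, which is exactly why the lemma is stated for values.
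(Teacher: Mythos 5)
Your proposal matches the paper's proof: induction on the typing derivation, re-situating $v$ at each subderivation's host and $\pc$ via the Values-PC, $\pc$-reduction, and weakening lemmas, with the bracket case split on whether $v$ is itself a bracket value. The only detail you gloss over is that substitution into $\bracket{e_1}{e_2}$ pushes the projections $\outproj{v}{1}$ and $\outproj{v}{2}$ into the respective sides, so the induction hypothesis there must be invoked with $\outproj{v}{i}$ (well-typed by inverting \ruleref{Bracket-Values} or by projection-preserves-types), exactly as the paper does.
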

\begin{proof}
Proof is by induction on the typing derivation of $e$.
\\
\textbf{Case \ruleref{Lam}}
Given,
\begin{align}
\TValGpcw{v}{τ'} \label{99} \\
{\TValP{\Gamma,x\ty \tau';\pc;c}{\lamc{y}{τ_1}{\pc'}{e}}{\func{τ_1}{\pc'}{τ_2}}} \label{100} \\
\intertext{inverting \ref{100}}
{\TValP{Γ,x\ty \tau',y\ty τ_1;\pc';c}{e}{τ_2}} \label{101} \\
\rafjudge{\Pi}{c}{pc} \label{101.0} \\
\rafjudge{\Pi}{c}{\UB{\func{τ_1}{\pc'}{τ_2}}} \label{101.1} \\
\intertext{Applying lemma values host pc and Weakening lemma in \ref{99}} 
{\TValP{Γ,y\ty τ_1;\pc';c}{v}{τ'}} \label{103} \\
\intertext{IH, \ref{101}, \ref{103}}
{\TValP{Γ,y\ty τ_1;\pc';c}{\subst{e}{x}{v}}{τ_2}} \label{104} \\ 
\intertext{from \ref{104}, \ref{101.0}, \ref{101.1} and \ruleref{Lam} rule we get}
\TValGpcw{\lamc{y}{τ_1}{\pc'}{\subst{e}{x}{v}}}{\tau_2} \\
\end{align}
\textbf{Case \ruleref{Bracket}} 
Given,
$\TValP{Γ,x\ty τ_1;\pc;c}{\bracket{e_1}{e_2}}{\tau}$
  We have to prove t
  \[
  \TValP{Γ,x\ty τ_1;\pc;c}{\bracket{e_1[x\mapsto 
  \outproj{{v}}{1}]}{e_2[x \mapsto \outproj{{v}}{2}]}}{\tau}
  \]

 We first describe the case \ruleref{Bracket}. 
The proof for the case \ruleref{Bracket-Values} is analogous.
  From \ruleref{Bracket}, we have 
  \begin{align}
           \TValP{Γ,x\ty τ_1;\pc';c}{e_1}{\tau} \label{eq:dsubst1} \\
	   \TValP{Γ,x\ty τ_1;\pc';c}{e_2}{\tau} \label{eq:dsubst2} \\
	   \rflowjudge{\delegcontext}{(H^\pi \sqcup \pc)}{{\pc'}} \\
	   \protjudge{\delegcontext}{H^{\pi}}{\cfun{\tau}}
  \end{align}
Applying clearance (Lemma~\ref{lemma:clearance}), we have 
\rafjudge{\delegcontext}{c}{\pc'}.
Depending on whether ${v}$ is a bracket value, we have two cases
\begin{description}

\item{Case ${v} = \bracket{{v}_1}{{v}_2}$:}

  From \ruleref{Bracket-Values}, we have 
  \begin{align}
           \TValP{Γ,x\ty τ_1;\pc;c}{{v}_1}{\tau'} \\
	   \TValP{Γ,x\ty τ_1;\pc;c}{{v}_2}{\tau'} \\
	   \protjudge{\delegcontext}{H^{\pi}}{\cfun{\tau'}}
  \end{align}

Since values can be typed under any 
$\pc$ which acts for the host under which value is typed
(Lemma~\ref{lemma:valuespc}), we have 
\TValP{Γ,x\ty τ_1;\pc';c}{{v}_i}{\tau'} for $i =\{1,2\}$.
Applying induction to the premises 
\eqref{eq:dsubst1} and \eqref{eq:dsubst2}, we get
  \begin{align}
         \TValP{Γ,x\ty τ_1;\pc';c}{e_1[x \mapsto {v}_1]}{\tau} \\
	\TValP{Γ,x\ty τ_1;\pc';c}{e_2 [x \mapsto {v}_2]}{\tau}
  \end{align}
and thus from [Bracket] we have
\[
\TValP{Γ,x\ty τ_1;\pc;c}
{\bracket{e_1[x \mapsto {v}_1]}{e_2[x \mapsto {v}_2]}}{\tau}
\]

\item{Case ${v} \ne \bracket{{v}_1}{{v}_2}$:}
Since values can be typed under any $\pc$ 
which acts for the place under which value 
is typed (Lemma~\ref{lemma:valuespc}), we have
\TValP{Γ,x\ty τ_1;\pc';c}{{v}}{\tau'}.
Applying induction to the premises 
\eqref{eq:dsubst1} and \eqref{eq:dsubst2}, we get
  \begin{align}
         \TValP{Γ,x\ty τ_1;\pc';c}{e_1[x \mapsto {v}]}{\tau} \\
	 \TValP{Γ,x\ty τ_1;\pc';c}{e_2 [x \mapsto {v}]}{\tau}
  \end{align}
and thus from [Bracket] rule 
\[
\TValP{Γ,x\ty τ_1;\pc;c}{\bracket{e_1[x \mapsto {v}]}
{e_2[x \mapsto {v}]}}{\tau}
\]
\end{description}
\textbf{Case \ruleref{Compare}:} Straightforward using IH. \\
\textbf{Case \ruleref{Select}:} Straightforward using IH. \\
\textbf{Case \ruleref{Run}:} Straightforward using IH and 
values pc lemma. (Same as \ruleref{Lam} case.) \\
\textbf{Case \ruleref{Ret}:} Straightforward using IH.
\end{proof}

\begin{lemma}[Type Substitution]\label{lemma:tsubst}
	Let $\tau'$ be well-formed in $\varcontext, X, \varcontext'$.
	If $\TValP{\varcontext, X, \varcontext';\pc;c}{e}{\tau}$ 
then $\TValP{\varcontext, \varcontext'[X \mapsto \tau'];\pc;c}{e[X \mapsto \tau']}{\tau [X \mapsto \tau']}$. 
\end{lemma}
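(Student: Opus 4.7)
The plan is to proceed by induction on the typing derivation $\TValP{\Gamma, X, \Gamma';\pc;c}{e}{\tau}$. For each typing rule, I will apply the induction hypothesis to every immediate subderivation and then observe that type substitution commutes with the metatheoretic operations appearing in the rule. Crucially, since $\pi$ is restricted to $\{c,i,a\}$ projections and the principal algebra contains no type variables, every acts-for premise $\rafjudge{\Pi}{\cdot}{\cdot}$, flows-to premise $\drflowjudge{\Pi}{\cdot}{\cdot}$, reads premise $\readjudge{\Pi}{\cdot}{\cdot}$, and fails premise $\recrafjudge{\Pi}{\cdot}{\cdot}$ is entirely unaffected by substituting $\tau'$ for $X$ in the term and types. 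Likewise the clearance function $\UB{\cdot}$, the compare action $\txcmp{}{}$, and the select action $\txsel{}{}$ either return principals or commute with type substitution structurally, so premises such as $\rafjudge{\Pi}{c}{\UB{\tau}}$ simplify cleanly after substitution.

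The routine cases are \ruleref{Unit}, \ruleref{Fail}, \ruleref{Expect}, \ruleref{Pair}, \ruleref{UnPair}, \ruleref{Inj}, \ruleref{Case}, \ruleref{UnitM}, \ruleref{Sealed}, \ruleref{BindM}, \ruleref{App}, \ruleref{Run}, \ruleref{Ret}, \ruleref{Compare}, \ruleref{Select}, and the \ruleref{Bracket} family: in each, the IH rewrites every typing premise, the non-typing premises carry over unchanged, and the conclusion follows by re-applying the same rule on the substituted term. For \ruleref{Var} with $\Gamma(x) = \tau$, the substituted context assigns $x$ type $\tau[X\mapsto \tau']$, which matches the desired conclusion.

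The interesting cases are \ruleref{Lam}, \ruleref{TLam}, and \ruleref{TApp}. For \ruleref{Lam}, I need the IH on a context extended with $x\ty \tau_1$; the side condition $\rafjudge{\Pi}{c}{\UB{\func{\tau_1}{\pc'}{\tau_2}}}$ survives because $\UB{\func{\tau_1[X\mapsto \tau']}{\pc'}{\tau_2[X\mapsto \tau']}} = \UB{\func{\tau_1}{\pc'}{\tau_2}}[X\mapsto \tau']$, and substitution into principals is the identity. For \ruleref{TLam}, I may assume by alpha-renaming that the bound variable $Y$ is distinct from $X$ and does not occur free in $\tau'$; the IH then applies with $\Gamma'$ extended by $Y$, and $(\tfunc{Y}{\pc'}{\tau_2})[X \mapsto \tau'] = \tfunc{Y}{\pc'}{\tau_2[X\mapsto \tau']}$. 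For \ruleref{TApp}, the conclusion relies on the standard substitution-commutation identity $(\tau_2[Y\mapsto \tau''])[X\mapsto \tau'] = (\tau_2[X\mapsto \tau'])[Y\mapsto \tau''[X\mapsto \tau']]$, which again requires $Y \notin \mathrm{FV}(\tau')$ by alpha-renaming.

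The main obstacle is bookkeeping around bound type variables in \ruleref{TLam} and \ruleref{TApp}, specifically maintaining the invariant that substitution is capture-avoiding and that well-formedness of $\tau'$ is preserved as we descend into scopes that extend the type-variable context. A secondary point to verify is that the $\mathscr{C}$ function on types (used by the bracket rules) and the clearance $\UB{\cdot}$ commute with type substitution, but both are defined compositionally over type syntax with no dependency on principal structure, so this is a straightforward structural induction lemma that I would state and prove as a short auxiliary result before completing the main induction.
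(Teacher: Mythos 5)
Your proposal takes exactly the same route as the paper: the paper's entire proof of this lemma is the single sentence ``Proof is by the induction on the typing derivation of $\TValP{\varcontext, X, \varcontext';\pc;c}{e}{\tau}$,'' with no cases worked out. Your case analysis, the observation that principal-level premises are unaffected by type substitution, and the capture-avoidance bookkeeping for \ruleref{TLam} and \ruleref{TApp} are all consistent elaborations of that same induction, just considerably more explicit than what the paper records.
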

\begin{proof}
	Proof is by the induction on the typing derivation of 
$\TValP{\varcontext, X, \varcontext';\pc;c}{e}{\tau}$.
\end{proof}

\begin{lemma}[Soundness]\label{lemma:sound}
        If  $e \stepsone e'$ then
        $\outproj{e}{k} \stepsto \outproj{e'}{k}$ for $k ∈ \{1,2\}$.
 \end{lemma}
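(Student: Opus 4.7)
The plan is to proceed by induction on the derivation of $e \stepsone e'$, case-analyzing the last rule applied. The definitions of $\outproj{\cdot}{k}$ in Figure~\ref{fig:bracketProj} commute with most syntactic constructors, so the bulk of the cases reduce to routine computation of the two projections and then invoking a corresponding local rule.

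For any non-bracket rule (\ruleref{E-App}, \ruleref{E-BindM}, \ruleref{E-Compare}, \ruleref{E-Select}, \ruleref{E-Sealed}, \ruleref{E-Case}, and the \ruleref{E-*Fail} propagation rules), neither side of the step introduces brackets, so $\outproj{e}{k}$ is obtained by applying projection componentwise to $e$ and the same rule fires on it, giving a single step (hence a member of $\stepsto$). For the contextual rule \ruleref{E-Step} with $E[e_0] \stepsone E[e_0']$, the definition gives $\outproj{E[e_0]}{k} = \outproj{E}{k}[\outproj{e_0}{k}]$; the induction hypothesis yields $\outproj{e_0}{k} \stepsto \outproj{e_0'}{k}$, and repeated applications of \ruleref{E-Step} under $\outproj{E}{k}$ produce the desired multi-step reduction.

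For the bracket-specific rules I would argue as follows. For \ruleref{B-Step}, only one of $e_1,e_2$ steps and the other is unchanged; the stepping side's projection takes one step (by the premise), while the other side's projection is identical before and after, i.e.\ zero steps, both allowed by $\stepsto$. For \ruleref{B-App}, \ruleref{B-TApp}, and \ruleref{B-BindM}, projection distributes over the relevant constructor so that $\outproj{\bracket{v_1}{v_2}\;v}{k} = v_k\;\outproj{v}{k} = \outproj{\bracket{v_1\;\outproj{v}{1}}{v_2\;\outproj{v}{2}}}{k}$, meaning zero steps are required (the equality collapses the supposed step under projection); an analogous computation works for the bind case, where I will also need that capture-avoiding substitution commutes with projection. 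The \ruleref{B-CompareCommon} family and \ruleref{B-SelectCommon} family are immediate: their premises are exactly that $\outproj{\cdot}{k} \stepsone f_k$, and $\outproj{\bracket{f_1}{f_2}}{k} = f_k$. For \ruleref{B-Fail1}, \ruleref{B-Fail2}, and \ruleref{B-Fail}, I compute both projections and observe that each side reduces by one application of \ruleref{E-Sealed} or \ruleref{E-SealedFail} (or zero steps in the symmetric case).

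The main obstacle will be making sure the syntactic identities really hold in the bracket-distribution cases, particularly \ruleref{B-BindM}, since this requires a small auxiliary fact that $\outproj{e[x \mapsto v]}{k} = \outproj{e}{k}[x \mapsto \outproj{v}{k}]$, which follows from structural induction on $e$ but needs to be invoked explicitly. A second subtlety appears in the asymmetric fail cases (\ruleref{B-Fail1} and \ruleref{B-Fail2}): one projection reduces via \ruleref{E-Sealed} to $\returnv{\ell}{v}$ while the other reduces via \ruleref{E-SealedFail} to $\faila{\says{\ell}{\tau}}$, and one must check that these match the projections of the bracket on the right-hand side as written. Everything else is a direct unfolding of the projection definition followed by a single applicable local rule.
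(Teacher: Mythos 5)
Your proposal is correct and takes essentially the same approach as the paper: both proceed by induction on the evaluation step, observing that the premises of the \ruleref{B-CompareCommon}/\ruleref{B-SelectCommon} families directly supply the projected steps, that \ruleref{B-Step} advances one projection and leaves the other fixed, and that the remaining bracket rules leave the two projections (essentially) unchanged. You are in fact slightly more careful than the paper on \ruleref{B-Fail1}/\ruleref{B-Fail2}, where each projection takes one \ruleref{E-Sealed} or \ruleref{E-SealedFail} step rather than being syntactically equal — a refinement the paper's blanket ``other cases only expand brackets'' glosses over, but which is harmless since $\stepsto$ is reflexive-transitive.
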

\begin{proof}
By induction on the evaluation of $e$.   \\
\textbf{Case \ruleref{B-CompareCommon}:}
From rule \ruleref{B-CompareCommon} we can say if \\
${\comparea{\says{(\txcmp{\ell_1}{\ell_2})}{\tau}}
{\bracket{f_{11}}{f_{12}}}{\bracket{f_{21}}{f_{22}}}} $ 
$\stepsone f$ \\ 
then for $k \in \{1,2\}$ \\
$\outproj{\comparea{\says{(\txcmp{\ell_1}{\ell_2})}{\tau}}
{\bracket{f_{11}}{f_{12}}}{\bracket{f_{21}}{f_{22}}}}{k} $ \\
$\stepsone \outproj{f}{k}$ \\
\\
\textbf{Case B-Compare*:} Similar to the case above.
\\
\textbf{Case B-Select*:} Similar to the case above.
\\
\textbf{Case \ruleref{B-Step}:}
$\outproj{e}{i} \stepsone \outproj{e'}{i}$  
and $\outproj{e}{j} = \outproj{e'}{j}$ .\\
\textbf{Other Cases}
All other cases in Figure~\ref{fig:brackets}
only expand brackets
So, $\outproj{e}{k} = \outproj{e'}{k}$ for $k \in \{1,2\}$. 
\end{proof}

\begin{lemma}[Stuck expressions]
\label{lemma:localstuck}
	If  $e$ gets stuck then $\outproj{e}{i}$ 
is stuck for some $i \in \{1,2\}$.
\end{lemma}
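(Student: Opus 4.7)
The plan is to establish the contrapositive: if both projections $\outproj{e}{1}$ and $\outproj{e}{2}$ can take a local step, then $e$ itself can take a local step. Combined with the elementary fact that projections of values are values, this immediately yields the statement, because an ``$e$ is stuck'' conclusion follows once we rule out both the value case and the step case for all projections.

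I would proceed by structural induction on $e$. The base cases are $\void$, $\faila{\tau}$, variables, and lambda abstractions: these contain no brackets, so $\outproj{e}{i}=e$ for $i\in\{1,2\}$, and whatever stuck-or-value status $e$ has is shared by the projections. For the inductive step I would case split on the outermost form of $e$, and for each form distinguish whether the immediate subterm(s) at the active position in the evaluation context are bracketed or not. If no bracket sits at the redex position, the usual local rule from Figure~\ref{fig:annotatedlocsem} (together with E-Step on the surrounding context) fires for $e$ exactly when it fires for both projections, so the IH on the subterm in the active position transports a step from both projections up to $e$. If a bracket does sit at the redex, the corresponding B-* rule applies: B-App for application of a bracketed function, B-BindM for a bracketed monadic value, B-CompareCommon / B-CompareCommonLeft / B-CompareCommonRight for compare, B-SelectCommon / B-SelectCommonLeft / B-SelectCommonRight for select, and so on; by hypothesis each of the two corresponding local reductions is available, which is exactly what the premise of the relevant B-* rule requires.

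The crucial case is $e=\bracket{e_1}{e_2}$ itself, where $\outproj{e}{1}=e_1$ and $\outproj{e}{2}=e_2$. If either side can step in isolation, B-Step already gives a step for $e$, so in particular both being steppable suffices; contrapositively, if $e$ is stuck then at least one $e_i$ is stuck and hence at least one projection is stuck.

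The main obstacle I expect is the interaction between brackets and failure under a monadic return, governed by B-Fail, B-Fail1, and B-Fail2. There the two projections of $\return{\ell}{\bracket{f_1}{f_2}}$ can independently fire either E-Sealed or E-SealedFail depending on whether $f_i$ is a value or a fail term, and I need to check that each of the four value/fail combinations lines up with exactly one of the three B-Fail rules (with the all-fail case collapsed by B-Fail), so that steppability of both projections really does imply steppability of $e$. A secondary subtlety is to be careful that the lemma concerns only the local step relation $\stepsone$: the global rules B-Run*, B-Ret*, and B-RetV operate on configurations, not bare expressions, and therefore fall outside the scope of the induction and need not be treated here.
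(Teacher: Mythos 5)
Your overall strategy---structural induction on $e$, case analysis on the outermost constructor, matching each local E-rule against the corresponding B-rule at the redex position, and treating $e=\bracket{e_1}{e_2}$ via \ruleref{B-Step}---is the same induction the paper performs, just phrased contrapositively; the spots you flag as delicate (the $\return{\ell}{\bracket{f_1}{f_2}}$ interaction with \ruleref{B-Fail}, \ruleref{B-Fail1}, \ruleref{B-Fail2}, and the exclusion of the global configuration rules) are exactly the ones the paper's cases address.

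There is, however, a gap in how you assemble the contrapositive at the top level. ``Not stuck'' is a disjunction: a term is unstuck if it is a value---or more generally a normal form $f$, which includes $\faila{\tau}$ and $\bracket{v}{\faila{\tau}}$---\emph{or} it can step. Your two stated ingredients, ``both projections step $\Rightarrow e$ steps'' and ``projections of values are values,'' cover only the two pure combinations and say nothing about the mixed ones. For instance, take $e=\pair{\bracket{v_1}{v_2}}{\bracket{e'}{v'}}$ with $e'$ steppable: then $\outproj{e}{2}$ is a value while $\outproj{e}{1}$ steps, so neither ingredient applies, yet $e$ is unstuck only because \ruleref{B-Step} fires inside the second component. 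Note also that ``projections of values are values'' runs in the wrong direction for your purposes---what you need is that if every projection is a normal form then $e$ is a normal form or can step. The repair is routine but must be explicit: strengthen the induction hypothesis to the full disjunction ``if each $\outproj{e}{i}$ is a normal form or can step, then $e$ is a normal form or can step,'' and discharge the mixed combinations within each constructor case. With that strengthening your case analysis goes through and coincides with the paper's argument.
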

\begin{proof}
	By induction on the structure of $e$.
\begin{description}
\item [Case: $\comparea{\tau}{e_1}{e_2}$:] 
\ruleref{E-Compare} can not be applied, i.e. 
$e_1$ and/or $e_2$ are/is not of the form 
$\returnv{\ell}{w}$.That means 
\ruleref{E-Compare} can not applied to 
$\outproj{\comparea{\tau}{e_1}{e_2}}{i}$. 
From I.H. either $\outproj{e_1}{i}$ is stuck
or  $\outproj{e_2}{i}$ is stuck for $i \in \{1,2\}$.\\

[\ruleref{E-CompareFail}*] can not be applied, i.e. 
$e_1$ and/or $e_2$ are/is not of the form 
$\returnv{\ell}{w}$ or $\faila{\tau'}$.That means 
[E-COMPARE*] can not applied to 
$\outproj{\comparea{\tau}{e_1}{e_2}}{i}$. 
From I.H. either $\outproj{e_1}{i}$ is stuck
or  $\outproj{e_2}{i}$ is stuck for $i \in \{1,2\}$.\\

[B-COMPARE*] can not be applied, i.e. 
[E-COMPARE*] not 
be applied to 
$\outproj{\comparea{\tau}{e_1}{e_2}}{i}$ 
for $i \in \{1,2\}$. This means 
$\outproj{\comparea{\tau}{e_1}{e_2}}{i}$
is stuck for $i \in \{1,2\}$. 
From I.H. we can say 
either $\outproj{e_1}{i}$ is stuck
or  $\outproj{e_2}{i}$ is stuck for $i \in \{1,2\}$.

\item[Case: $\selecta{e_1}{e_2}{\tau}$:] Same as "compare". 

\item[Case: $\ret{e}{c}$:] From \ruleref{E-RetStep}
rule we can say if $\ret{e}{c}$ is stuck 
then $e$ is stuck. 
From I.H. we can say $e$ stuck only when 
$\outproj{e}{i}$ is stuck for $i \in \{1,2\}$.  

\item[Case: $\return{\ell}{e}$:] 
\ruleref{E-Sealed} step can not be taken. So \ruleref{E-Sealed}
step can not be taken for $\outproj{\return{\ell}{e}}{i}$.
Which means $\outproj{e}{i}$ is stuck for $i \in \{1,2\}$.

\ruleref{BFail2} and \ruleref{B-Fail1} steps can not be taken. 
Which means $e$ is not of the forms 
$\bracket{v}{\faila{\tau}}$
or $\bracket{\faila{\tau}}{v}$. 
Which again, from I.H., implies either $\outproj{e}{1}$ 
is stuck or $\outproj{e}{2}$ is stuck.
 \item[Case $\proj{j}{e}$:]  Similar to the above case.
                \item[Case $\inj{j}{e}$:] Similar to the above case.
                \item[Case $\pair{e}{e}$:] Similar to the above case.
                \item[Case $\casexp{e}{x}{e_1}{e_2}$:]
                    Since \ruleref{B-Case}, and \ruleref{E-Case} are not applicable, it follows that $e$ is not of the form
                    $\bracket{v}{v'}$, or $\inj{j}{v}$. It follows that $\outproj{\casexp{e}{x}{e_1}{e_2}}{i}$ is also stuck.
                \item[Case $\bind{x}{v}{e'}$:] Similar to the above case.
\end{description}
\end{proof}

\begin{lemma}[Completeness]
  \label{lemma:complete}
  If $\bone{e} \stepsto v_1 $ and $\btwo{e} \stepsto v_2$, then there exists some $v$ such that $e \stepsto v$.
\end{lemma}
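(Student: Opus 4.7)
The plan is to proceed by strong induction on $n_1 + n_2$, where $\bone{e} \stepsone^{n_1} v_1$ and $\btwo{e} \stepsone^{n_2} v_2$, with a secondary (lexicographic) induction on a syntactic measure of $e$ to handle ``administrative'' bracket-distribution steps (rules such as \ruleref{B-App}, \ruleref{B-TApp}, and \ruleref{B-BindM}) that reduce $e$ without producing progress in either projection. The key external facts are soundness (Lemma~\ref{lemma:sound}), which gives us that any step $e \stepsone e'$ projects to $\outproj{e}{k} \stepsto \outproj{e'}{k}$, and the stuck-expression lemma (Lemma~\ref{lemma:localstuck}), whose contrapositive says that if neither projection is stuck, then $e$ is not stuck.

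The main argument runs as follows. If $e$ is a value, take $v = e$ and we are done. Otherwise, since both projections reach values and hence are not stuck, Lemma~\ref{lemma:localstuck} (contrapositive) implies that $e$ is not stuck, so $e \stepsone e'$ for some $e'$. I would then case-split on the rule used:
\begin{itemize}
\item If the step is a ``progressive'' rule (e.g.\ \ruleref{B-Step}, \ruleref{B-CompareCommon}, \ruleref{B-SelectCommon}, \ruleref{B-RunLeft}, \ruleref{B-Fail}, etc.), then by soundness at least one projection strictly advances toward its limit value, and by determinism of the non-bracketed semantics the remaining step counts $n_1', n_2'$ for $e'$ satisfy $n_1' + n_2' < n_1 + n_2$. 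The outer induction hypothesis then yields $e' \stepsto v$, and hence $e \stepsto v$.
\item If the step is an administrative rule (e.g.\ \ruleref{B-App}, \ruleref{B-TApp}, \ruleref{B-BindM}), the projections are literally unchanged, so $n_1, n_2$ stay the same. Here I apply the inner induction on a measure such as the number of elimination forms syntactically above a bracket node in $e$, which strictly decreases under each administrative rule since each such rule pushes the bracket strictly inward past one elimination form.
\end{itemize}

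The base case $n_1 + n_2 = 0$ requires showing that if both $\bone{e}$ and $\btwo{e}$ are already values, then $e$ reduces (in finitely many administrative steps) to a value. Inspecting the projection function, the only shapes of $e$ compatible with both projections being values are either actual values $v$ or terms built from brackets sitting under eliminators/returns whose projections collapse to values; these are precisely the redexes of the administrative rules (or of \ruleref{B-Fail1}/\ruleref{B-Fail2}/\ruleref{B-Fail}), so the inner induction finishes the base case.

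The main obstacle is the well-foundedness of the secondary measure: I need to verify that no chain of administrative steps can run forever without either reducing $n_1 + n_2$ or triggering a value-forming rule. The cleanest justification is that every administrative rule strictly decreases the pair (number of eliminator/return nodes containing a bracket, size of $e$) in a lexicographic order; once this reaches zero, either $e$ is a value (namely a bracket of values) or one of the \ruleref{B-Fail*} rules fires to produce one. Once this is established, the two inductions compose cleanly to deliver the desired $v$ with $e \stepsto v$.
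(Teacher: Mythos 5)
Your proof is correct and follows essentially the same route as the paper's: both rest on Lemma~\ref{lemma:sound} (to rule out divergence of $e$) and Lemma~\ref{lemma:localstuck} (to rule out $e$ getting stuck), together with the observation that the bracket-distribution rules can only fire finitely often. You merely make that last termination claim explicit via a lexicographic measure and package the argument as an induction on $n_1+n_2$, where the paper argues by contradiction.
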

\begin{proof}
The rules in Figure~\ref{fig:brackets} move brackets out of subterms, and
therefore can only be applied a finite number of times.  Therefore, by
Lemma~\ref{lemma:sound}, 
if $e$ diverges, either $\bone{e}$ or $\btwo{e}$
diverge; this contradicts our assumption.

Furthermore, by Lemma~\ref{lemma:localstuck}, 
if the evaluation of $e$ gets stuck, either
$\bone{e}$ or $\btwo{e}$ gets stuck. 
Therefore, since we assumed $\bgeti{e} \stepsto v_i$, then 
$e$ must terminate. Thus, there exists 
some $v$ such that $e \stepsto v$.
\end{proof}

\begin{lemma}[Soundness for global bracketed semantics] \label{lemma:sounddist}
        If  $~~~~~~~~~~~~~~~~~~$ 
$\distcon{e}{c}{t} \Longrightarrow \distcon{e'}{c'}{t'}$ then
        $\outproj{\distcon{e}{c}{t}}{k} \Longrightarrow^{*} \outproj{\distcon{e'}{c'}{t'}}{k}$ for $k \in \{1,2\}$.
\end{lemma}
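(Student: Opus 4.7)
The plan is to proceed by case analysis on the global step derivation $\distcon{e}{c}{t} \Longrightarrow \distcon{e'}{c'}{t'}$, dispatching each rule to an appropriate number (zero or one) of steps on each projection. Since the rules of Figure~\ref{fig:globsem} come in two flavours---the ``plain'' rules \ruleref{E-DStep}, \ruleref{E-Run}, \ruleref{E-RetV}, \ruleref{E-RetFail}, and the bracketed rules \ruleref{B-RunLeft}, \ruleref{B-RunRight}, \ruleref{B-RetLeft}, \ruleref{B-RetRight}, and \ruleref{B-RetV}---the analysis has two essentially different patterns, and I handle them in that order.

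For the plain rules, the projection commutes with the reduction: each rule operates on terms that do not start with a bracket at the top position being rewritten, so a single global step of the projected configuration suffices on both sides. The case \ruleref{E-DStep} reduces to Lemma~\ref{lemma:sound} (local soundness) after observing that evaluation contexts satisfy $\outproj{E[e]}{k} = \outproj{E}{k}[\outproj{e}{k}]$ by a quick induction on $E$, so a local step on the whole configuration yields a local step on each projection, which is then re-lifted by \ruleref{E-DStep}. The \ruleref{E-Run}, \ruleref{E-RetV}, and \ruleref{E-RetFail} cases follow by inspection, using the commuting properties of $\outproj{\cdot}{k}$ with stacks and with "run"/"expect" frames already given in Figure~\ref{fig:bracketProj}.

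The bracketed rules are the interesting cases. Consider \ruleref{B-RunLeft}: only side~$1$ is actually performing the "run", while side~$2$ carries the payload $e_2$ untouched and a $\bullet$ placeholder arises in the head of the resulting configuration. Projecting both sides of the conclusion gives, on side~$1$, a configuration obtained by applying \ruleref{E-Run} to the projected premise---so $\outproj{\cdot}{1}$ takes exactly one step; on side~$2$, the head expression $e_2$ is unchanged, the $\bullet$ contributes nothing observable, and the stack layer $\outproj{\bracket{E[\expecta{\tau}]}{e_2}}{2} = e_2$ exactly matches the pre-step projection, so $\outproj{\cdot}{2}$ takes zero steps. The mirror case \ruleref{B-RunRight} is symmetric. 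The return rules \ruleref{B-RetLeft} and \ruleref{B-RetRight} are analogous: the ``active'' side performs an \ruleref{E-RetV} or \ruleref{E-RetFail} (depending on whether $f$ is a value or a fail term, as the side condition in Figure~\ref{fig:brackets} makes explicit), and the other side stays put. Finally, \ruleref{B-RetV} pops a non-bracketed "expect" frame with a bracketed value on top: each projection then independently takes one \ruleref{E-RetV}/\ruleref{E-RetFail} step, giving two separate one-step reductions (one per side). In every case the number of steps is $\le 1$, which $\Longrightarrow^{*}$ accommodates.

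The main obstacle will be keeping the bookkeeping straight around the $\bullet$ placeholders and ensuring that $\outproj{\cdot}{k}$ commutes appropriately with both evaluation contexts and configuration stacks---in particular that projecting a stack entry $\bracket{E[\expecta{\tau}]}{e'}$ really does coincide with the stack entry we would get from projecting the pre-step configuration plus the run/return bookkeeping. This is essentially a syntactic commutation lemma for $\outproj{\cdot}{k}$ that I would prove once as a preliminary by induction on evaluation contexts and stacks, and then reuse uniformly across all bracketed cases.
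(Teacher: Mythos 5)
Your proposal is correct and follows essentially the same route as the paper's proof: a case analysis on the global step, discharging the stack-preserving case via Lemma~\ref{lemma:sound}, and handling the bracketed run/return rules by observing that the active projection takes one step while the inactive one (masked by $\bullet$) takes zero, with \ruleref{B-RetV} forcing a step on both projections. The paper's version is just a terser sketch of the same argument; your added commutation lemma for $\outproj{\cdot}{k}$ over contexts and stacks is implicit in the paper's Figure~\ref{fig:bracketProj}.
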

\begin{proof}
If $c=c'$ and $t=t'$ then the lemma statement holds
following lemma \ref{lemma:sound}. 
The cases where the stack and the head of the configuration 
changes are the interesting ones. 
\begin{description}
\item[Case \ruleref{B-RetV}:]  Straightforward from [B-RetV] evaluation rule. 
The premise of the rule says $\forall k \in \{1,2\}$,
the $k$th projection of the expression should take a step.
\item[Case \ruleref{B-Ret*}:] Same as the above case.
\item[Case \ruleref{B-RunLeft}:]  The lemma trivially holds for $k=1$.
\item[Case \ruleref{B-RunRight}:] The lemma trivially holds for $k=2$. 
\end{description}
\end{proof}

\begin{lemma}[Dist Stuck expressions] \label{stuckdist}
\label{lemma:stuck}
	If  $\distcon{e}{c}{t}$ gets stuck 
then $\outproj{\distcon{e}{c}{t}}{k}$ 
is stuck for some $k \in \{1,2\}$.
\end{lemma}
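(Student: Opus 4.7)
The plan is to mirror the case analysis used in Lemma~\ref{lemma:localstuck}, but extended to the global configuration by also considering the rules in Figure~\ref{fig:globsem} and the global bracketed rules from Figure~\ref{fig:brackets} (B-RunLeft, B-RunRight, B-RetLeft, B-RetRight, B-RetV). For $\distcon{e}{c}{t}$ to be stuck, none of E-DStep, E-Run, E-RetV, E-RetFail, nor any bracketed global rule may fire. I will proceed by case analysis on the decomposition $e = E[e_0]$ (including the possibility that $e$ is itself of the form $\bracket{e_1}{e_2}$ at the top level).

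First, consider the case where the active redex $e_0$ is ``purely local'', i.e., not of the form $\runa{\tau}{\cdot}{c'}$ and $e$ is not a returning value/fail sitting atop a matching expect frame. Then the only candidate rule is E-DStep (or B-Step lifted by E-DStep), which requires $e \stepsone e'$ locally. Stuckness of the global configuration therefore forces $e$ to be locally stuck, and Lemma~\ref{lemma:localstuck} gives $k \in \{1,2\}$ with $\outproj{e}{k}$ locally stuck. Since $\outproj{\distcon{e}{c}{t}}{k} = \distcon{\outproj{e}{k}}{c}{\outproj{t}{k}}$ and projection neither introduces nor removes run/ret redex patterns in the head, the projection is also globally stuck.

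Second, if $e_0$ is a $\runa{\tau}{\cdot}{c'}$ in non-bracketed evaluation context, E-Run fires unconditionally, contradicting stuckness. The analogous observation disposes of the cases where $e$ is $\bracket{E[\runa{\tau}{e_1}{c'}]}{e_2}$ or its mirror image (via B-RunLeft/Right), and the cases where $e$ is a value/fail with a matching expect frame on top of $t$ (via E-RetV, E-RetFail, B-RetLeft, B-RetRight, or B-RetV). These cases are vacuous. The remaining genuinely stuck cases arise when the syntactic pattern of a global rule almost matches but fails because of a form mismatch—typically a bracket/$\bullet$ placeholder misaligned with the stack's top expect frame, or a top expect slot whose type does not match the returning value's type. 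In each such mismatch, projecting onto the side where the mismatch is visible yields a projected configuration that cannot take any step, giving the required $k$.

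The main obstacle will be the careful bookkeeping for the bracket-with-$\bullet$ head configurations produced by B-RunLeft/Right, since these create an asymmetry between sides: one side holds $\ret{\cdot}{c}$ while the other holds $\bullet$, and the stack frame is itself bracketed. A clean way to handle this is to characterize exactly when $\bracket{\ret{f}{c}}{\bullet}$ (resp. $\bracket{\bullet}{\ret{f}{c}}$) paired with a given top stack frame permits a B-Ret step, and observe that whenever it does not, the projection onto side $1$ (resp. side $2$) is a classical global configuration with a ret that cannot pop the stack, hence stuck. With this lemma in hand, all remaining cases collapse either to a direct application of Lemma~\ref{lemma:localstuck} or to the observation that a projection cannot reduce, completing the argument.
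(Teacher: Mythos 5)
Your proposal is correct and follows essentially the same route as the paper's proof: a structural case analysis in which run-redexes (bracketed or not) can always step and are thus vacuous, purely local stuckness is discharged by appeal to Lemma~\ref{lemma:localstuck} together with the fact that projection commutes with the configuration structure, and the remaining ret/expect and bracket-with-$\bullet$ mismatches project to a stuck side. If anything, your explicit characterization of when $\bracket{\ret{f}{c}}{\bullet}$ against a bracketed expect frame can fire a B-Ret step is more careful than the paper's rather terse treatment of those cases.
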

\begin{proof}
Induction over structure of $e$.
\begin{description}
\item [Case $\distcon{\ret{e}{c'}}{c}{\stackapp{E[\expecta{\tau}]}{c'}{t}}$:]
This means [E-RetV] and [B-RET*] steps can not be taken. 
So, [E-RetV] and [B-RetV] steps can not be taken for \\
$\outproj{\distcon{\ret{e}{c'}}{c}{\stackapp{E[\expecta{\tau}]}{c'}{t}}}{i}$.
This means $e$ is not of the form $v$ or $\bracket{f_1}{f_2}$.
From I.H. it is clear that $\outproj{v}{i}$ or $f_i$ is stuck 
for $i \in \{1,2\}$.

\item [Case $\distcon{E[\runa{\tau}{e}{c'}]}{c}{t}$] 
This can always take
a \ruleref{E-Run} step and then can get stuck. 
In that case the argument is 
same as case $\ret{e}{c}$ as in lemma \ref{lemma:localstuck}.

\item [Case $\distcon{\bracket{E[\runa{\tau}{e}{c'}]}{e'}}{c}{t}$] 
Can not get stuck as $E[\runa{\tau}{e}{c'}]$ run 
can always take a step.

\item [Case $\distcon{\bracket{e'}{E[\runa{\tau}{e}{c'}]}}{c}{t}$] 
Can not get stuck as $E[\runa{\tau}{e}{c'}]$ run 
can always take a step.

\item [Other Cases:] In all other cases the active configuration
and the stack does not change. So lemma statement 
holds following lemma \ref{lemma:localstuck}. 

\end{description}
\end{proof}

\begin{lemma}[Completeness]
  \label{lemma:completedist}
  If $\bone{\distcon{e}{c}{t}} \stepsto \distcon{v_1}{c}{\emptystack} $ 
and $\btwo{\distcon{e}{c}{t}} \stepsto \distcon{v_2}{c}{\emptystack}$, 
then there exists some $v$ such that $\distcon{e}{c}{t} \stepsto \distcon{v}{c}{\emptystack}$.
\end{lemma}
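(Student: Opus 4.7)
The plan is to mirror the structure of the local completeness proof (Lemma~\ref{lemma:complete}), but replacing the local soundness and stuck-expression lemmas with their distributed counterparts (Lemma~\ref{lemma:sounddist} and Lemma~\ref{stuckdist}). Since we assume both projections terminate at the empty stack with a value, the goal is to rule out the two possible bad behaviors of $\distcon{e}{c}{t}$: divergence and getting stuck.

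First I would argue termination. Observe that the rules in Figure~\ref{fig:brackets} that move brackets outward through terms only apply finitely often, since each strictly decreases the depth of bracket-nested subterms. So any infinite reduction sequence of $\distcon{e}{c}{t}$ must eventually take infinitely many non-bracket-pushing global steps. By Lemma~\ref{lemma:sounddist}, each such global step of $\distcon{e}{c}{t}$ is simulated by zero or more steps of each projection $\outproj{\distcon{e}{c}{t}}{k}$, and at least one projection must make progress (otherwise the step would be purely bracket-reorganizing on a non-bracket redex, which the bracketed rules forbid). Hence if $\distcon{e}{c}{t}$ diverges, at least one of $\bone{\distcon{e}{c}{t}}$ or $\btwo{\distcon{e}{c}{t}}$ diverges, contradicting the hypothesis that both reach values.

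Next I would rule out getting stuck. Suppose toward a contradiction that $\distcon{e}{c}{t}$ reduces to some configuration $\distcon{e'}{c'}{t'}$ that is stuck and is not of the form $\distcon{v}{c}{\emptystack}$. By Lemma~\ref{stuckdist}, $\outproj{\distcon{e'}{c'}{t'}}{k}$ is stuck for some $k \in \{1,2\}$. But by Lemma~\ref{lemma:sounddist}, $\outproj{\distcon{e}{c}{t}}{k} \Longrightarrow^{*} \outproj{\distcon{e'}{c'}{t'}}{k}$, so the projection reaches a stuck non-value configuration, contradicting $\bgeti{\distcon{e}{c}{t}} \stepsto \distcon{v_k}{c}{\emptystack}$. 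Combining the two cases, $\distcon{e}{c}{t}$ must reduce to a final configuration of the form $\distcon{v}{c}{\emptystack}$ for some value $v$.

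The main obstacle I anticipate is justifying that every infinite execution of $\distcon{e}{c}{t}$ eventually performs infinitely many non-bracket-shuffling steps (so that Lemma~\ref{lemma:sounddist} can be invoked to force divergence on a projection). One also has to be slightly careful with the stack: the hypothesis fixes the final configuration at $\distcon{v_k}{c}{\emptystack}$, so we must check that the stack of $\distcon{e}{c}{t}$ is actually fully consumed along the unbracketed reduction — but this follows from the fact that projections and the distributed semantics push and pop configuration stack frames in lockstep (rules \ruleref{B-RunLeft}, \ruleref{B-RunRight}, \ruleref{B-RetV}, and \ruleref{B-Ret*}), so once both projections terminate with an empty stack, the unbracketed run must too.
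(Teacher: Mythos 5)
Your proposal is correct and takes essentially the same route as the paper, whose proof of Lemma~\ref{lemma:completedist} is simply ``similar argument as Lemma~\ref{lemma:complete}'': rule out divergence using the finiteness of bracket-pushing steps together with the soundness lemma for the global bracketed semantics (Lemma~\ref{lemma:sounddist}), and rule out stuckness using the distributed stuck-expression lemma (Lemma~\ref{stuckdist}). Your extra observation that the stack frames are pushed and popped in lockstep, so the unbracketed run also ends with an empty stack, is a reasonable elaboration the paper leaves implicit.
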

\begin{proof}
Similar argument as \ref{lemma:complete}
\end{proof}

\begin{lemma}[Label Flowsto SelCmp] \label{lemma:lflowstoselcmp}
If $\drflowjudge{\Pi}{\ell}{\ell_1}$ and 
$\drflowjudge{\Pi}{\ell}{\ell_2}$ then 
$\drflowjudge{\Pi}{\ell}{(\txsel{\ell_1}{\ell_2})}$ and
$\drflowjudge{\Pi}{\ell}{(\txcmp{\ell_1}{\ell_2})}$
\end{lemma}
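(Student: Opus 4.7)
The plan is to unfold the flows-to judgment $\ell \sqsubseteq \ell'$ via the definition given in Section~\ref{sec:FLAMalgebra}, which reduces each conclusion to three acts-for obligations on the $\confid$, $\integ$, and $\avail$ projections. Substituting Definitions~\ref{def:compare} and~\ref{def:select} for $\txcmp{\ell_1}{\ell_2}$ and $\txsel{\ell_1}{\ell_2}$ turns these six obligations into concrete inequalities over $\ell^{\pi}, \ell_1^{\pi}, \ell_2^{\pi}$. The two hypotheses $\drflowjudge{\Pi}{\ell}{\ell_j}$ unfold, for $j \in \{1,2\}$, into $\rafjudge{\Pi}{\ell_j^{\confid}}{\ell^{\confid}}$, $\rafjudge{\Pi}{\ell^{\integ}}{\ell_j^{\integ}}$, and $\rafjudge{\Pi}{\ell^{\avail}}{\ell_j^{\avail}}$; these are the only facts I will use.

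For confidentiality, both conclusions require $\rafjudge{\Pi}{\ell_1^{\confid} \wedge \ell_2^{\confid}}{\ell^{\confid}}$, which follows directly from \ruleref{ConjR}. For the compare side, the integrity goal $\rafjudge{\Pi}{\ell^{\integ}}{\comor{\ell_1^{\integ}}{\ell_2^{\integ}}}$ is exactly the shape of \ruleref{PAndR} applied to the two integrity hypotheses, and the availability goal $\rafjudge{\Pi}{\ell^{\avail}}{\ell_1^{\avail} \vee \ell_2^{\avail}}$ follows from \ruleref{DisjR} applied to either availability hypothesis. For the select side, the availability goal $\rafjudge{\Pi}{\ell^{\avail}}{\ell_1^{\avail} \wedge \ell_2^{\avail}}$ is another instance of \ruleref{ConjR}.

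The only non-immediate step is the integrity obligation for select, $\rafjudge{\Pi}{\ell^{\integ}}{\selor{\ell_1^{\integ}}{\ell_2^{\integ}}}$, since Figure~\ref{fig:partialAF} provides no direct right-introduction rule for $\selor{}{}$ analogous to \ruleref{PAndR}. I will route through the partial conjunction instead: first apply \ruleref{ConjR} to obtain $\rafjudge{\Pi}{\ell^{\integ}}{\ell_1^{\integ} \wedge \ell_2^{\integ}}$, then chain \ruleref{AndPAnd} ($\rafjudge{\Pi}{\ell_1^{\integ} \wedge \ell_2^{\integ}}{\comor{\ell_1^{\integ}}{\ell_2^{\integ}}}$) and \ruleref{PAndPOr} ($\rafjudge{\Pi}{\comor{\ell_1^{\integ}}{\ell_2^{\integ}}}{\selor{\ell_1^{\integ}}{\ell_2^{\integ}}}$), and close with two applications of \ruleref{Trans}. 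This descent through the partial-conjunction level is the only place where an auxiliary construction is needed; everything else is a single rule application.

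Combining the three projection-level facts on each side with the definition of $\sqsubseteq$ yields the two desired flows-to judgments. Since all reasoning is within the static acts-for fragment, no assumptions about the ambient $\pc$ or host are required.
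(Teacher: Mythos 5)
Your proof is correct and follows essentially the same route as the paper: unfold the flows-to judgment into per-projection acts-for obligations, discharge the compare-integrity goal with \ruleref{PAndR}, the availability goals with \ruleref{DisjR} and \ruleref{ConjR}, and descend from partial conjunction to partial disjunction via \ruleref{PAndPOr} and \ruleref{Trans} for the select-integrity goal. One citation needs fixing: the confidentiality obligation $\rafjudge{\Pi}{\ell_1^{\confid} \wedge \ell_2^{\confid}}{\ell^{\confid}}$ follows from \ruleref{ConjL} (weakening the conjunction on the left, using either hypothesis $\rafjudge{\Pi}{\ell_j^{\confid}}{\ell^{\confid}}$), not from \ruleref{ConjR}, which introduces a conjunction on the right and cannot produce a judgment of this shape; the paper uses \ruleref{ConjL} here.
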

\begin{proof}
Given,
\begin{align}
\drflowjudge{\Pi}{\ell}{\ell_1} \\
\drflowjudge{\Pi}{\ell}{\ell_2} \\
\intertext{which implies}
\rafjudge{\Pi}{\ell^{\integ}}{\ell_1^{\integ}} \label{scmp1} \\
\rafjudge{\Pi}{\ell^{\integ}}{\ell_2^{\integ}} \label{scmp2} \\
\rafjudge{\Pi}{\ell^{\avail}}{\ell_1^{\avail}} \label{scmp3} \\
\rafjudge{\Pi}{\ell^{\avail}}{\ell_2^{\avail}} \label{scmp4} \\
\rafjudge{\Pi}{\ell_1^{\confid}}{\ell^{\confid}} \label{scmp5} \\
\rafjudge{\Pi}{\ell_2^{\confid}}{\ell^{\confid}} \label{scmp6} \\
\intertext{\ref{scmp1}, \ref{scmp2} and \ruleref{R-ConjR} implies}
\rafjudge{\Pi}{\ell^{\integ}}
{\ell_1^{\integ} \wedge \ell_2^{\integ}} \label{scmp7} \\
\intertext{from \ref{scmp1}, \ref{scmp2} and \ruleref{PAandR}}
\rafjudge{\Pi}{\ell^{\integ}}{\comor{\ell_1^{\integ}}{\ell_2^{\integ}}}
\intertext{\ref{scmp5} and \ruleref{ConjL} implies}
\rafjudge{\Pi}{\ell_1^{\confid} \wedge \ell_2^{\confid}}
{\ell^{\confid}} \label{scmp8} 
\\
\intertext{\ref{scmp3} and \ruleref{DisjR} implies}
\rafjudge{\Pi}{\ell^{\avail}}{\ell_1^{\avail} \vee \ell_2^{\avail}} \label{scmp9}
\end{align}
\ref{scmp7}, \ref{scmp8} and \ref{scmp9}
together proves 
$\drflowjudge{\Pi}{\ell}{(\txcmp{\ell_1}{\ell_2})}$\\
Similarly we can prove 
$\drflowjudge{\Pi}{\ell}{(\txsel{\ell_1}{\ell_2})}$\\
\end{proof}

\begin{lemma}[Projection Preserves Types]\label{lemma:dproj}
	If \TValGpc{e}{\tau}, then \TValGpc{\outproj{e}{i}}{\tau} for $i = \{1, 2\}$.
\end{lemma}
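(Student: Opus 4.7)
The plan is to proceed by induction on the typing derivation of $\TValGpc{e}{\tau}$. For every non-bracket typing rule, the projection function $\outproj{\cdot}{i}$ commutes with the syntactic constructor, as per the definition in Figure~\ref{fig:bracketProj}: e.g.\ $\outproj{\comparea{\tau'}{e_1}{e_2}}{i} = \comparea{\tau'}{\outproj{e_1}{i}}{\outproj{e_2}{i}}$, and analogously for \texttt{select}, \texttt{bind}, \texttt{run}, \texttt{ret}, function application, pair/injection/case, and so on. In each of these cases the induction hypothesis rewrites the premises after projection, and reapplying the same typing rule reassembles the required judgment. The \ruleref{Fail}, \ruleref{Expect}, \ruleref{Unit}, and \ruleref{Var} cases are immediate since projection is the identity on them.

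The bracket typing rules require slightly more care but are mostly direct. For \ruleref{Bracket-Values} and \ruleref{Bracket-Same}, each premise already contains $\TValGpc{v_j}{\tau}$, and $\outproj{\bracket{v_1}{v_2}}{i}$ reduces to $v_i$, so the premise itself is the result. For \ruleref{Bracket-Fail-L} and \ruleref{Bracket-Fail-R}, one projection is the premise and the other is $\faila{\tau}$, which is well-typed at $\tau$ by \ruleref{Fail} using the clearance $\rafjudge{\Pi}{c}{pc}$ inherited from the bracket rule. The \ruleref{Bracket-Fail-A} case is analogous: it only fires when $\pi = \mathord{"a"}$ and both sides carry the premise's type. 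The \ruleref{BullR}/\ruleref{BullL} cases project either to the typed subterm $e$ (direct) or to the placeholder $\bullet$, which only arises as a pending stack slot inserted by \ruleref{B-RunLeft}/\ruleref{B-RunRight}, and is covered by the conventional typing for $\bullet$ used in those cases.

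The main obstacle is the general \ruleref{Bracket} rule itself, whose premises type $e_1$ and $e_2$ not at the ambient $pc$, but at an elevated program counter $pc'$ satisfying $\rflowjudge{\Pi}{H^\pi \sqcup pc}{pc'}$. After invoking the induction hypothesis I obtain $\TValP{\Gamma;pc';c}{e_i}{\tau}$, yet the goal demands typing at $pc$. Because $pc \flowsto H^\pi \sqcup pc \flowsto pc'$ by transitivity, and the clearance premise $\rafjudge{\Pi}{c}{pc}$ is already available from the \ruleref{Bracket} rule, Lemma~\ref{lemma:pcreduction} applies and lowers the program counter from $pc'$ to $pc$, yielding $\TValGpc{\outproj{\bracket{e_1}{e_2}}{i}}{\tau}$ and closing the case. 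Handling this $pc$-mismatch is the only step that is not purely syntactic bookkeeping; all remaining obligations reduce to reapplying the original rule after the induction hypothesis.
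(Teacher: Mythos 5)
Your proposal is correct and follows essentially the same route as the paper's proof: induction on the typing derivation, with the only non-trivial case being \ruleref{Bracket}, resolved by observing $pc \sqsubseteq H^\pi \sqcup pc \sqsubseteq pc'$ and invoking Lemma~\ref{lemma:pcreduction} together with the clearance premise to lower the program counter back to $pc$. The paper's proof states only this key case; your treatment of the remaining cases is routine bookkeeping that matches what the paper leaves implicit.
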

\begin{proof}
	Proof is by induction on the typing derivation of \TValGpc{e}{\tau}. The interesting case is $e = \bracket{e₁}{e₂}$.  By \ruleref{Bracket}, we have  
\TValP{\Gamma;pc'}{eᵢ}{\tau} for some $\pc'$ such that \rflowjudge{\delegcontext}{(H^\pi \sqcup \pc^\pi)}{{\pc'^\pi}}.
      Therefore, by Lemma~\ref{lemma:pcreduction}($pc$ reduction), 
we have \TValP{\Gamma;pc}{eᵢ}{\tau}. 
\end{proof}

\begin{lemma}[Subject Reduction(within a host)] \label{subjRedhost}
Let  \TValGpcw{e}{τ} and \rafjudge{\Pi}{c}{\UB{\tau}}.  
If $e \stepsone e'$
then \TValGpcw{e'}{τ}.
\end{lemma}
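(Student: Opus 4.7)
The plan is to proceed by induction on the derivation of $e \stepsone e'$, with case analysis on the last evaluation rule used. For each case, I would invert the typing derivation of $e$, derive typings for the subterms, and then reassemble a typing derivation for $e'$. The hypothesis $\rafjudge{\Pi}{c}{\UB{\tau}}$ is preserved or strengthened as needed when recursing into subexpressions, and will be important for applying Lemma~\ref{lemma:valuespc} ($\pc$ of values) and Lemma~\ref{lemma:pcreduction} ($\pc$ reduction) when the program counter changes.

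The routine cases are the lambda/type reductions (\ruleref{E-App}, \ruleref{E-TApp}), \ruleref{E-UnPair}, \ruleref{E-BindM}, \ruleref{E-Case}, and \ruleref{E-Sealed}. These follow by inverting the typing derivation, applying the variable or type substitution lemma (Lemma~\ref{lemma:tsubst}), and possibly invoking $\pc$ reduction to retype the resulting expression under the enclosing $\pc$. Failure-propagation cases (\ruleref{E-AppFail}, \ruleref{E-SealedFail}, \ruleref{E-InjFail}, \ruleref{E-ProjFail}, etc.) all reduce to a $\faila{\tau'}$ term that can be retyped directly by \ruleref{Fail}, since this rule assigns any annotated type as long as $\rafjudge{\Pi}{c}{pc}$ holds (which we inherit from the premise). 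The evaluation-under-context rule \ruleref{E-Step} is handled by first applying Lemma~\ref{lemma:CTXiff} to isolate the redex, applying the induction hypothesis, and then reassembling with the same lemma.

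The interesting local cases are \ruleref{E-Compare} and \ruleref{E-Select} (plus their failure variants). For \ruleref{E-Compare}, inverting \ruleref{Compare} and then the two \ruleref{Sealed} premises yields $v : \tau$ at the current host; reapplying \ruleref{Sealed} with the combined label $\txcmp{\ell_1}{\ell_2}$ gives the required type $\says{(\txcmp{\ell_1}{\ell_2})}{\tau}$. For \ruleref{E-CompareFail} and the $L^{*}/R^{*}$ variants the result is a $\faila{\says{(\txcmp{\ell_1}{\ell_2})}{\tau}}$, immediately typable by \ruleref{Fail}. The \ruleref{Select} cases are analogous, using the $\txsel{\ell_1}{\ell_2}$ action and Lemma~\ref{lemma:lflowstoselcmp} where needed to re-establish any flows-to obligations.

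The main obstacle will be the bracketed evaluation rules in Figure~\ref{fig:brackets}, because a step may change which of the four bracket typing rules (\ruleref{Bracket}, \ruleref{Bracket-Values}, \ruleref{Bracket-Fail-*}, \ruleref{Bracket-Same}) applies to the result. For example, after \ruleref{B-Fail1} a bracket $\bracket{v}{\faila{\tau}}$ steps to $\bracket{\returnv{\ell}{v}}{\faila{\says{\ell}{\tau}}}$, which must be retyped using \ruleref{Bracket-Fail-R} rather than the rule used originally. In each such case I would invert the bracket typing rule, apply Lemma~\ref{lemma:dproj} (projection preserves types) and Lemma~\ref{lemma:sound} to get typings for each projection, apply the induction hypothesis on the side that steps (via \ruleref{B-Step}), and then select the appropriate bracket typing rule for the result---taking care that the ``less restrictive type'' exceptions summarized in the table for integrity/availability are respected. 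The \ruleref{B-CompareCommon} and \ruleref{B-SelectCommon} cases (and their one-sided variants) require additionally verifying that the projected step relation given in the premise is consistent with the original type, which again reduces to the local \ruleref{Compare}/\ruleref{Select} argument above applied to each projection.
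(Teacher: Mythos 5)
Your plan matches the paper's proof essentially step for step: case analysis on the evaluation rule, inversion of the typing derivation, the $\pc$-reduction and substitution lemmas for the $\beta$/bind cases, typing all failure results directly via \ruleref{Fail}, re-sealing with the $\txcmp{}{}$/$\txsel{}{}$ label for the compare/select cases (using Lemma~\ref{lemma:lflowstoselcmp}), and handling the bracketed rules by inverting the bracket typing rule, applying Lemmata~\ref{lemma:valuespc} and~\ref{lemma:dproj}, and reselecting the appropriate bracket typing rule for the result. This is the same decomposition and uses the same key lemmas as the paper's proof, so no substantive differences to report.
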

\begin{proof} 
\textbf{Case \ruleref{E-App}}
 Given $e = (\lamc{x}{\tau_1}{\pc'}{e})~v$ and 
$e' = e[x \mapsto v]$. Also $\fexp{\delegcontext}
{\varcontext}{\pc;c}{\lamc{x}{\tau_1}{\pc'}{e}~v}{\tau_2}$.
			From the premises of \ruleref{App}, we have:
			\begin{align}
				\fexp{\delegcontext}{\varcontext}{\pc;c}{\lamc{x}{\tau_1}{\pc'}{e}}{\func{τ_1}{\pc'}{τ_2}} \label{eq:subapp1}\\
	\fexp{\delegcontext}{\varcontext}{\pc;c}{v}{τ_1} \label{eq:subapp2} \\
	\rflowjudge{\delegcontext}{\pc}{\pc'} \label{eq:subapp3}
			\end{align}
			From \eqref{eq:subapp1}, we further have that $\fexp{\delegcontext}{\varcontext, x:\tau_1}{\pc';c}{e}{\tau_2}$.
			Since \eqref{eq:subapp3} holds, 
we can now apply PC reduction to get 
$\fexp{\delegcontext}{\varcontext, x:\tau_1}{\pc}{e}{\tau_2}$.
Applying substitution preservation 
using \eqref{eq:subapp2},  we have 
$\fexp{\delegcontext}{\varcontext}{\pc;c}{e[x \mapsto v]}{\tau_2}$.\\

\textbf{Case \ruleref{E-BindM}.} Given $e={\bind{x}{\returnv{ℓ}{v}}{e_1}}$ and 
$e'= \subst{e_1}{x}{v}$ and also
\begin{equation}\label{bind1}
{\TValGpcw{\bind{x}{\returnv{ℓ}{v}}{e_1}}{τ}}
\end{equation}
From the premises of (\ref{bind1}) we have
\begin{align}
  \TValGpcw{\returnv{ℓ}{v}}{\says{\ell}{\tau'}} \label{eq:sbindm0} \\
  \TValGpcw{v}{\tau'} \label{eq:sbindm1} \\
  \TVal{\Pi;\Gamma,x:\tau';\pc \sqcup \ell ;c}
    {e_1}{\tau} \label{sbindm2} \\
  \protjudge{\delegcontext}{\pc \sqcup \ell}{\tau} \label{eq:sbindm4} \\
  \rafjudge{\delegcontext}{c}{\pc}
\end{align}
(Since $\rafjudge{\delegcontext}{c}{\pc}$) applying $\pc$ reduction lemma in 
(\ref{sbindm2}) we get $\TValP{\Gamma, x:\tau'; \pc}{e_1}{\tau} \label{eq:sbindm2}$.
Invoking variable substitution lemma, we thus have 
${\TValGpcw{\subst{e_1}{x}{v}}{\tau}}$.
\\ \\
\textbf{Case \ruleref{E-RetStep}}
\begin{equation}\label{retcon}
{\ret{e_1}{c'}} \stepsone {\ret{e_1'}{c'}}
\end{equation}
Given, $e=\ret{e_1}{c'}$ and $e'=\ret{e_1'}{c'}$ and also
\begin{equation}\label{ret1}
{\TValGpcw{\ret{e_1}{c'}}{\tau}}
\end{equation} 
From the premises of (\ref{ret1}) we get
\begin{align}
{\TValGpcw{e_1}{\tau'}}  \label{ret1pr1} \\
\intertext{where $\tau = \says{pc^{{\integ}{\avail}}}{\tau'}$}
\rafjudge{\uppi}{c}{pc}  \label{ret1pr2} \\
\rafjudge{\uppi}{c'}{\UB{\tau'}} \label{ret1pr3}
\end{align}
and applying induction hypothesis on the premise of (\ref{retcon})
we get 
\begin{equation}\label{ret1IH}
{\TValGpcw{e_1'}{\tau'}}
\end{equation}
From (\ref{ret1IH}), (\ref{ret1pr2}) and (\ref{ret1pr3}) 
we have ${\TValGpcw{\ret{e_1'}{c'}}{\tau}}$ \\ \\
\textbf{\ruleref{E-Compare}} Given,
$e=(\comparea{\says{\txcmp{\ell_1}{\ell_2}}{\tau}}
{\returnv{\ell_1}{v}}{\returnv{\ell_2}{v}})$ and  
$e'=\returnv{\txcmp{\ell_1}{\ell_2}}{v}$
and also, 
\begin{equation}\label{ce1}
\TValGpcw{\comparea{\says{(\txcmp{\ell_1}{\ell_2})}{\tau}}
{\returnv{\ell_1}{v}}{\returnv{\ell_2}{v}}}
{\says{(\txcmp{\ell_1}{\ell_2})}{\tau}} 
\end{equation}
Inverting (\ref{ce1})
\begin{align}
\TValGpcw{\returnv{\ell_1}{v}}{\says{\ell_1}{τ}} \label{ce2}\\
\TValGpcw{\returnv{\ell_2}{v}}{\says{\ell_2}{τ}} \label{ce3}\\
c^{\confid} \rhd \says{\ell_1}{\tau} \label{ce4}\\
c^{\confid} \rhd \says{\ell_2}{\tau}  \label{ce5}\\
\rafjudge{\Pi}{c}{pc} \label{ce6}\\
\end{align}
Inverting (\ref{ce2}) or (\ref{ce3}) further, we get
\begin{equation}\label{ce8}
\TValGpcw{v}{\tau}
\end{equation}
From rule \ruleref{Sealed}, (\ref{ce8}) and (\ref{ce6}) we can say
$\TValGpcw{\returnv{\txcmp{\ell_1}{\ell_2}}{v}}
{\says{(\txcmp{\ell_1}{\ell_2})}{\tau}}$ \\ \\
\textbf{Case \ruleref{E-CompareFail}*.} Trivial, as $\faila{\tau}$ typechecks
with any protected type, and based on our type-system $\tau$ is always 
a protected type.\\ \\
\textbf{Case \ruleref{E-Select}.}
Given,
$e = \selecta{\returnv{\ell_1}{v_1}}{\returnv{\ell_2}{v_2}}
{\says{\ell}{\tau}}$ 
and 
$e'= \returnv{\ell}{v_1}$ 
where $\ell = \txsel{\ell_1}{\ell_2}$.
The following is also given. 
\begin{equation}\label{sel1}
\TValGpcw{\selecta{\returnv{\ell_1}{v_1}}{\returnv{\ell_2}{v_2}}
{\says{\ell}{\tau}}}{\says{\ell}{τ}}
\end{equation}
Inverting (\ref{sel1}) we get the following,
\begin{align}
\TValGpcw{\returnv{\ell_1}{v_1}}{\says{\ell_1}{τ}} \label{se2}\\
\TValGpcw{\returnv{\ell_2}{v_2}}{\says{\ell_2}{τ}} \label{se3}\\
\rafjudge{Π}{c}{pc} \label{se4}\\
\end{align}
Further inverting (\ref{se2}) we get,
\begin{equation}\label{se7}
\TValGpcw{v_1}{\tau}  
\end{equation} 
Thus from rule \ruleref{Sealed}, (\ref{se7}) and (\ref{se4}) we can argue,
$\TValGpcw{\returnv{\ell}{v_1}}{\says{\ell}{τ}}$ \\ \\

\textbf{Case \ruleref{E-SelectL},\ruleref{E-SelectR}.} 
Similar to above. The only diffrence is we need to
invert both (\ref{se2}) and (\ref{se3}) 
and argue both $\TValGpcw{\returnv{\ell}{v_1}}{\tau}$
and $\TValGpcw{\returnv{\ell}{v_2}}{\tau}$ holds.\\ \\                    

\textbf{Case \ruleref{E-SelectFail}.} Trivial, as $\faila{\tau}$ typechecks for any
protected type, and based on our type-system $\tau$ is always
a protected type. \\

\textbf{Case \ruleref{B-SelectCommon}.}
This case has a number of sub-cases based on whether $f_{ij}$ 
is a value or a $\fail{}$ term. We will show few cases.
Proof of the rest of the combinations will be
similar. \\
Given,
$e = \select{\bracket{\returnv{\ell_1}{v_1}}
{\returnv{\ell_1}{v_1}}}
{\bracket{\returnv{\ell_2}{v_1'}}{\returnv{\ell_2}{v_2'}}} $\\
and 
$e' = \bracket{\returnv{(\txsel{\ell_1}{\ell_2})}{v_1}}
{\returnv{(\txsel{\ell_1}{\ell_2})}{v_2}}$ \\

$\TValGpcw{e}{\says{(\txsel{\ell_1}{\ell_2})}{\tau}}$ \label{bselc1} \\
Inverting the above we get,
\begin{align}
\TValGpcw{\bracket{\returnv{\ell_1}{v_1}}{\returnv{\ell_1}{v_2}}}
{\says{\ell_1}{τ}} \label{bselc2}\\
\TValGpcw{\bracket{\returnv{\ell_2}{v_1'}}{\returnv{\ell_2}{v_2'}}}{\says{\ell_2}{τ}} \label{bselc3}\\
\rafjudge{Π}{c}{pc} \label{se4}\\
\intertext{inverting \ref{bselc2} and \ref{bselc3} we get}
\flowjudge{\Pi}{H^\pi}{\cfun{\says{\ell_1}{τ}}} \label{bsel4} \\
\flowjudge{\Pi}{H^\pi}{\cfun{\says{\ell_2}{τ}}} \label{bsel5} \\
\drflowjudge{\Pi}{(H^{\pi} \join pc)}{pc'} \label{bsel6} \\
\TValGpcc{\returnv{\ell_1}{v_1}}{\says{\ell_1}{τ}} \label{bsel6.5} \\
\TValGpcc{\returnv{\ell_1}{v_2}}{\says{\ell_1}{τ}} \label{bsel6.6} \\
\TValGpcc{\returnv{\ell_2}{v_1'}}{\says{\ell_2}{τ}} \label{bsel6.7} \\
\TValGpcc{\returnv{\ell_2}{v_2'}}{\says{\ell_2}{τ}} \label{bsel6.8} \\
\intertext{From \ref{bsel4} and \ref{bsel5} and lemma 
\ref{lemma:lflowstoselcmp} we get}
\flowjudge{\Pi}{H^\pi}{\cfun{\says{(\txsel{\ell_1}{\ell_2})}{τ}}} \label{bsel7} \\ 
\intertext{applying \ruleref{UnitM} in \ref{bsel6.6}
\ref{bsel6.6}, \ref{bsel6.7} , 
\ref{bsel6.8} we get:}
\TValGpcc{v_1}{\tau} \label{bsel8} \\
\TValGpcc{v_2}{\tau} \label{bsel9} \\
\intertext{applying pc-reduction lemma , \ruleref{Sealed} in \ref{bsel8} and 
\ref{bsel9} we get}
\TValGpcc{\returnv{(\txsel{\ell_1}{\ell_2})}{v_1}}
{\says{(\txsel{\ell_1}{\ell_2})}{\tau}} \label{bsel10} \\
\TValGpcc{\returnv{(\txsel{\ell_1}{\ell_2})}{v_2}}
{\says{(\txsel{\ell_1}{\ell_2})}{\tau}} \label{bsel11} 
\end{align} 
{from \ref{bsel10}, \ref{bsel11} ,
\ref{bsel7} and \ruleref{Bracket-Values} we get}
$\TValGpcw{\bracket{\returnv{\txsel{\ell_1}{\ell_2}}{v_1}}
{\returnv{(\txsel{\ell_1}{\ell_2})}{v_2}}}
{\says{(\txsel{\ell_1}{\ell_2})}{\tau}}$
Let us do another case, where\\
$e= \select{\bracket{\faila{\says{\ell_1}{\tau}}}
{\returnv{\ell_1}{v'}}}
{\bracket{\returnv{\ell_2}{v}}
{\faila{\says{\ell_2}{\tau}}}}$\\
and \\
$e' = \bracket{\returnv{\txsel{\ell_1}{\ell_2}}{v}}
{\returnv{\txsel{\ell_1}{\ell_2}}{v'}}$\\
Similar proof as above by inverting using \ruleref{Bracket} rule
on \\ 
$\bracket{\faila{\says{\ell_1}{\tau}}}
{\returnv{\ell_1}{v'}}$ and 
$\bracket{\returnv{\ell_2}{v}}
{\faila{\says{\ell_2}{\tau}}}$\\

\textbf{Case \ruleref{B-SelectCommonLeft}.}\label{case:bselectcomright}\\
$e= \select{\bracket{\faila{\says{\ell_1}{\tau}}}
{\returnv{\ell_1}{v}}}
{\faila{\says{\ell_2}{\tau}}}$\\
and \\
$e' = \bracket{\faila{\says{\txsel{\ell_1}{\ell_2}}{\tau}}}
{\returnv{\txsel{\ell_1}{\ell_2}}{v}}$\\
Proof is straightforward using \ruleref{Bracket} to invert 
$\bracket{\faila{\says{\ell_1}{\tau}}}
{\returnv{\ell_1}{v}}$ and then using 
\ruleref{Bracket-Fail-L} rule to prove the conclusion. \\
Let us prove another case where \\
$e= \select{\bracket{\returnv{\ell_1}{v_1}}
{\returnv{\ell_1}{v_2}}}
{\returnv{\ell_2}{v_3}}$\\
and \\
$e' = \bracket{\returnv{\txsel{\ell_1}{\ell_2}}{v_1}}
{\returnv{\txsel{\ell_1}{\ell_2}}{v_2}}$\\
and $\pi = "a"$ \\
Proof is straightforward using \ruleref{Bracket-Values} to invert \\
$\bracket{\returnv{\ell_1}{v_1}}
{\returnv{\ell_1}{v_2}}$ and then using 
\ruleref{Bracket-Fail-A} rule to prove the conclusion. \\

\textbf{Case \ruleref{B-CompareCommon}:}\\
Similar to \ruleref{B-SelectCommon}. \\

\textbf{Case \ruleref{B-CompareCommonLeft}}\\
Similar to \ruleref{B-SelectCommonLeft}. \\

\textbf{Case \ruleref{B-Fail1} and \ruleref{B-Fail2}}\\
Straightforward using \ruleref{Bracket-Fail-L} 
\ruleref{Bracket-Fail-R} rules.

\textbf{Case \ruleref{B-Step}}
Straightforward using Induction Hypothesis.\\

\textbf{Case \ruleref{B-App}:} 
         Given $e = \bracket{v_1}{v_2}~v'$ and $e'= \bracket{v_1~\outproj{v'}{1}}{v_2~\outproj{v}{2}}$. Also given that $\TValGpc{ \bracket{v_1}{v_2}~v'}{\tau_2}$ is well-typed, from \ruleref{App}, we have the following:
         \begin{align}
                 & \TValGpc{ \bracket{v_1}{v_2}}{\func{\tau_1}{\pc''}{\tau_2}}  \label{eq:bappsubred1} \\
                & \TValGpc{v'}{\tau_1}  \label{eq:bappsubred2}\\
                & \rflowjudge{\delegcontext}{\pc}{\pc''}  \label{eq:bappsubred3}
         \end{align}
         Thus from \ruleref{Bracket-Values}, we have $\rflowjudge{\delegcontext}{H^\pi}{\cfun{(\func{\tau_1}{\pc''}{\tau_2})^\pi}}$. 
That is, from the definition of type protection 
(Figure~\ref{fig:protect}), 
we have $\rflowjudge{\delegcontext}{H^\pi}{\cfun{\func{\tau_1}{{\pc''}^\pi}{\tau_2^\pi}}}$.  From \ruleref{P-Fun}, we thus have
         \begin{align}
                & \rflowjudge{\delegcontext}{H^\pi}{\tau_2^\pi} \label{eq:bappsubred4} \\ 
                & \rflowjudge{\delegcontext}{H^\pi}{{\pc''}^\pi}  \label{eq:bappsubred5}
         \end{align}
         We need to prove
         \[
          \TValGpc{\bracket{v_1~\outproj{v'}{1}}{v_2~\outproj{v'}{2}}}{\tau_2}
         \]
         That is we need the following premises of \ruleref{Bracket}.
         \begin{align}
                \TValP{\Gamma;\pc'}{v_1~\outproj{v'}{1}}{\tau_2}  \label{eq:bappsubred6} \\
                \TValP{\Gamma;\pc'}{v_2~\outproj{v'}{2}}{\tau_2}   \label{eq:bappsubred7}\\
                \rflowjudge{\delegcontext}{H^\pi \sqcup \pc^\pi}{{\pc'}^\pi}  \label{eq:bappsubred8} \\
                \rflowjudge{\delegcontext}{H^\pi }{\cfun{\tau_2^\pi}}  \label{eq:bappsubred9}
         \end{align}

         Let $\pc' = \pc''$.
         We have  \eqref{eq:bappsubred8} from   \eqref{eq:bappsubred3} and \eqref{eq:bappsubred5}.
         We already have  \eqref{eq:bappsubred9} from  \eqref{eq:bappsubred4}.
         To prove \eqref{eq:bappsubred6}, we need the following premises:
            \begin{align}
                \TValP{\Gamma;\pc'}{v_1}{\func{\tau_1}{\pc''}{\tau_2}}  \label{eq:bappsubred10} \\
                \TValP{\Gamma;\pc'}{\outproj{v'}{1}}{\tau_2}  \label{eq:bappsubred11} \\
                \rflowjudge{\delegcontext}{\pc'}{\pc''} \label{eq:bappsubred12}
            \end{align}
            The last premise \ \eqref{eq:bappsubred12} holds trivially (from reflexivity).
            Applying Lemma~\ref{lemma:valuespc} (values can be typed under any \pc) to  \eqref{eq:bappsubred1} we have \eqref{eq:bappsubred10}.
            Applying Lemma~\ref{lemma:valuespc} (values can be typed under any \pc) and  Lemma~\eqref{lemma:dproj} (projection preserves typing) to \eqref{eq:bappsubred2} we have \eqref{eq:bappsubred11}.
            Thus from \ruleref{App}, we have  \eqref{eq:bappsubred6}. Similarly,  \eqref{eq:bappsubred7} holds.
            Hence proved.

\textbf{Case \ruleref{B-TApp}} Similar to \ruleref{B-App}.\\
\if 0
\textbf{Case {B-Unpair}}
		Given $e =\proj{i}{\bracket{\pair{v_{11}}{v_{12}}}{\pair{v_{21}}{v_{22}}}}$ and $e'  = \bracket{v_{1i}}{v_{2i}}$.
		Also 
			\TValGpc{\proj{i}{\bracket{\pair{v_{11}}{v_{12}}}{\pair{v_{21}}{v_{22}}}}}{\tau_i} 
		We have to prove
		\[
			\TValGpc{\bracket{v_{1i}}{v_{2i}}}{\tau_i} 
		\]
		From \ruleref{UnPair}, we have:
		\begin{align}
			\TValGpc{\bracket{\pair{v_{11}}{v_{12}}}{\pair{v_{21}}{v_{22}}}}{\prodtype{\tau_1}{\tau_2}} \label{eq:dbunpair1}
		\end{align}
                Since they are already values, they can be inverted using \ruleref{Bracket-Values}. This approach is more conservative.
		\begin{align}
			\TValGpc{\pair{v_{11}}{v_{12}}}{\prodtype{\tau_1}{\tau_2}} \label{eq:dbunpair2}\\
			\TValGpc{\pair{v_{21}}{v_{22}}}{\prodtype{\tau_1}{\tau_2}} \label{eq:dbunpair3}\\
			\protjudge{\delegcontext}{H^{\pi}}{(\prodtype{\tau_1}{\tau_2})^\pi} \label{eq:dbunpair5}
                        \end{align}	
		From \eqref{eq:dbunpair2}, \eqref{eq:dbunpair3} and \ruleref{UnPair}, we have
			$\TValGpc{v_{1i}}{\tau_i}$ and $\TValGpc{v_{2i}}{\tau_i}$  for $i = \{1, 2\}$. 
		From \eqref{eq:dbunpair5},  type projection (Figure~\ref{fig:protect}) and \ruleref{P-Pair}, we have
		$\protjudge{\delegcontext}{H^\pi}{\tau_i^\pi}$.
		Combining with other premises, 
                $\TValGpc{\bracket{v_{1i}}{v_{2i}}}{\tau_i}$ follows from \ruleref{Bracket-Values}.
\fi
\textbf{Case \ruleref{B-BindM}}
		Given $e = {\bind{x}{\bracket{\returng{\ell}{v_1}}{\returng{\ell}{v_2}}}{e}}$.
		We have that:
		$$e' = \bracket{\bind{x}{\returng{\ell}{v_1}}{\outproj{e}{1}}}{\bind{x}{\returng{\ell}{v_2}}{\outproj{e}{2}}}$$
		Also $\TValGpc{\bind{x}{\bracket{\returng{\ell}{v_1}}{\returng{\ell}{v_2}}}{e}}{\tau}$. 
		From \ruleref{BindM}, we have
		\begin{align}
			&\TValGpc{\bracket{\returng{\ell}{v_1}}{\returng{\ell}{v_2}}}{\says{\ell}{\tau'}} \label{eq:db1bindm1} \\
			&\TValP{\Gamma, x:\tau'; \pc \sqcup \ell}{e}{\tau} \label{eq:db1bindm2} \\
                      	& \protjudge{\delegcontext}{\pc \sqcup \ell}{\tau} \label{eq:db1bindm4}
		\end{align}
		From \eqref{eq:db1bindm1} and \ruleref{Bracket-Values}, we have
		\begin{align}
			\TValGpc{\returng{\ell}{v_1}}{\says{\ell}{\tau'}} \label{eq:db1bracket1}\\
                        \TValGpc{\returng{\ell}{v_2}}{\says{\ell}{\tau'}} \label{eq:db1bracket2}\\
	                \rflowjudge{\delegcontext}{H^\pi}{\cfun{\says{\ell}{\tau'}}} \label{eq:db1bindm3}
		\end{align}
		We have to prove that
		\[
                \TValGpc{\bracket{\bind{x}{\returng{\ell}{v_1}}{\outproj{e}{1}}}{\bind{x}{\returng{\ell}{v_2}}{\outproj{e}{2}}}}{\tau}
	        \]
		For some $\widehat{\pc}$ we need the following premises to satisfy \ruleref{Bracket}:
		\begin{align}
			\TValP{\Gamma;\widehat{\pc}}{\bind{x}{\returng{\ell}{v_1}}{\outproj{e}{1}}}{\tau} \label{:eq:db1bracket8}\\
			\TValP{\Gamma; \widehat{\pc}}{\bind{x}{\returng{\ell}{v_2}}{\outproj{e}{2}}}{\tau} \label{:eq:db1bracket9}\\
			\rflowjudge{\delegcontext}{(H^\pi \sqcup \pc^\pi)}{{\widehat{\pc}^\pi}} \label{:eq:db1bracket10}\\
			\protjudge{\delegcontext}{H^\pi}{\cfun{\tau^\pi}} \label{:eq:db1bracket11}
		\end{align}
		A natual choice for $\widehat{pc}$ is $\pc \sqcup \ell$. 
                From Lemma~\ref{lemma:valuespc} (values  can be typed under any $\pc$), we have
                \[
        	\TValP{\Gamma;\widehat{\pc}}{\returng{\ell}{v_i}}{\tau'}
	        \]
               Applying Lemma~\ref{lemma:dproj} (bracket projection preserves typing) to \eqref{eq:db1bindm2}, we have 
                \[
        	\TValP{\Gamma, x:\tau';\widehat{\pc}}{\outproj{e}{i}}{\tau}
	        \]
                From \ruleref{BindM}, we therefore have \eqref{:eq:db1bracket8} and \eqref{:eq:db1bracket9}.
		Applying \ruleref{Trans} to \eqref{eq:db1bindm3} and \eqref{eq:db1bindm4}, we have \eqref{:eq:db1bracket11}.
                Thus we have all required premises.
\end{proof}

\begin{lemma}[Subject Reduction(inter-host)] \label{subjRedinter}
If {\TValGpcS{\distcon{e}{c}{t}}{\tau}} and 
$\distcon{e}{c}{t} \Longrightarrow \distcon{e'}{c'}{t'}$ hold, then
{\TValGpcS{\distcon{e'}{c'}{t'}}{\tau}}.
\end{lemma}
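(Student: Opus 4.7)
The proof plan is induction on the derivation of $\distcon{e}{c}{t} \Longrightarrow \distcon{e'}{c'}{t'}$, with case analysis on the global evaluation rule applied. In each case I would first invert the global typing judgment $\TValGpcS{\distcon{e}{c}{t}}{\tau}$ via \ruleref{Head} (or \ruleref{Bracket-Head} / \ruleref{Bracket-Stack} for bracketed heads) to extract some local type $\tau'$ with $\TVal{\Pi;\Gamma;pc';c}{e}{\tau'}$ and $\TValGpcS{t}{[\tau']\tau}$, plus the usual $\drflowjudge{\Pi}{pc}{pc'}$ and $\rafjudge{\Pi}{c}{pc}$.

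The case \ruleref{E-DStep} reduces immediately to Lemma~\ref{subjRedhost}: from the local step $e \stepsone e'$ I obtain $\TVal{\Pi;\Gamma;pc';c}{e'}{\tau'}$ and reassemble via \ruleref{Head} with the unchanged tail. For \ruleref{E-Run}, where $e = E[\runa{\hat\tau}{e_0}{c'}]$ with $\hat\tau = \says{pc''^{{\integ}{\avail}}}{\tau''}$, Lemma~\ref{lemma:CTXiff} supplies $\TVal{\Pi;\Gamma;pc';c}{\runa{\hat\tau}{e_0}{c'}}{\hat\tau}$; inverting \ruleref{Run} yields $\TVal{\Pi;\Gamma;pc'';c'}{e_0}{\tau''}$ together with $\drflowjudge{\Pi}{pc'}{pc''}$ and $\rafjudge{\Pi}{c}{\UB{\tau''}}$, so \ruleref{Ret} types the new head $\ret{e_0}{c}$ at $c'$ with type $\hat\tau$. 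For the newly pushed frame, Lemma~\ref{lemma:expecta} lifts $\TVal{\Pi;\Gamma;pc';c}{E[\expecta{\hat\tau}]}{\tau'}$, and \ruleref{Tail} combined with the original $\TValGpcS{t}{[\tau']\tau}$ gives the tail type $[\hat\tau]\tau$. \ruleref{Head} then reassembles.

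For \ruleref{E-RetV} (and \ruleref{E-RetFail}, symmetrically), the head $\ret{f}{c}$ at $c'$ and top frame $E[\expecta{\hat\tau}]$ at $c$, with $\hat\tau = \says{pc^{{\integ}{\avail}}}{\tau''}$, collapse to $E[f']$ at $c$ with stack $t$, where $f'$ is either $\returnv{pc^{{\integ}{\avail}}}{v}$ or $\faila{\hat\tau}$. Inverting \ruleref{Ret} on the head gives $\TVal{\Pi;\Gamma;pc';c}{f}{\tau''}$, so \ruleref{Sealed} (respectively \ruleref{Fail}) produces the type $\hat\tau$ for $f'$. Inverting \ruleref{Tail} on the top stack frame gives $\TVal{\Pi;\Gamma;pc''';c}{E[\expecta{\hat\tau}]}{\tilde\tau}$ together with $\TValGpcS{t}{[\tilde\tau]\tau}$, and Lemma~\ref{lemma:rexpect} substitutes $f'$ for the expect hole, producing $\TVal{\Pi;\Gamma;pc''';c}{E[f']}{\tilde\tau}$; \ruleref{Head} then reassembles.

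The bracketed global rules (\ruleref{B-RunLeft}, \ruleref{B-RunRight}, \ruleref{B-RetLeft}, \ruleref{B-RetRight}, \ruleref{B-RetV}) follow the same skeleton but also need Lemma~\ref{lemma:dproj} (projection preserves typing), Lemma~\ref{lemma:valuespc}, and Lemma~\ref{lemma:pcreduction} to manipulate each side of a bracket independently. The main obstacle will be \ruleref{B-RunLeft}/\ruleref{B-RunRight}, where only one projection fires the remote call while the other side is frozen as $\bullet$: I would use \ruleref{BullL}/\ruleref{BullR} to type the mixed bracket at the new head, then check that the frozen side's original typing lifts through \ruleref{Bracket-Head}/\ruleref{Bracket-Stack} without disturbing the invariants $H^\pi \sqcup pc \sqsubseteq pc'$ and $H^\pi \sqsubseteq \cfun{\tau}$ that \ruleref{Bracket} imposes. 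Once that bookkeeping is handled, the remaining reasoning is the same as in the unbracketed \ruleref{E-Run} and \ruleref{E-Ret*} cases, with every typing step applied to each projection in turn.
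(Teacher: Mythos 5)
Your proposal is correct and follows essentially the same route as the paper's proof: case analysis on the global rule, inversion of \ruleref{Head}/\ruleref{Tail}, the CTX/Expect/RExpect lemmas for \ruleref{E-Run} and \ruleref{E-RetV}, local subject reduction for \ruleref{E-DStep}, and the $\bullet$-typing rules for the bracketed cases. The only slip is in the \ruleref{E-RetV} case: inverting \ruleref{Ret} yields $f$ typed at the \emph{remote} host $c'$, not at $c$, so you still need Lemma~\ref{lemma:valuespc} (justified by the \ruleref{Ret} premise $\rafjudge{\Pi}{c}{\UB{\tau''}}$ and clearance) to re-type the returned value at $c$ before applying \ruleref{Sealed} and Lemma~\ref{lemma:rexpect} — exactly the step the paper performs there.
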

\begin{proof}Induction over typing derivation of $\distcon{e}{c}{t}$.\\
\textbf{Case \ruleref{E-Run}}.
Given,
${\distcon{E[\runa{\hat{\tau}}{e}{c'}]}{c}{t}}$ $\Longrightarrow$ \\  
${\distcon{\ret{e}{c}}{c'}{\stackapp{E[\expecta{\hat{\tau}}]}{c}{t}}}$ 
\begin{align}
\TValGpcS{\distcon{E[\runa{\hat{\tau}}{e}{c'}]}{c}{t}}{\tau} \label{sri1} \\
\intertext{need to prove}
\TValGpcS{\distcon{\ret{e}{c}}{c'}{\stackapp{E[\expecta{\hat{\tau}}]}{c}{t}}}{\tau} \label{sri2} \\
\intertext{Inverting \ref{sri1} we get}
\TValGpcc{{E[\runa{\hat{\tau}}{e}{c'}]}}{\tau'}\label{sri3} \\
\TValGpcS{t}{[\tau']\tau} \label{sri4} \\
\rafjudge{\Pi}{c}{pc} \label{sri5} \\
\stflowjudge*{pc}{pc'} \label{sri6} \\
\intertext{applying lemma CTX to \ref{sri3} we get} 
\TVal{{{\Pi}};\Gamma;pc';c}{\runa{\hat{\tau}}{e}{c'}}{\hat{\tau}}\label{sri7} \\
\intertext{inverting \ref{sri7} we get}
\TVal{{{\Pi}};\Gamma,\hat{\pc};c'}{e}{\hat{\tau'}} \label{sri9} \\
\intertext{where}
\hat{\tau'} = \says{\hat{pc}^{{\integ}{\avail}}}{\hat{\tau}}\\
\rafjudge{{\Pi}}{c}{pc'} \label{sri9.5} \\
\stflowjudge*{pc'}{\hat{pc}} \label{sri10} \\
\rafjudge{\Pi}{c}{\UB{\hat{\tau}}} \label{sri10.5} \\
\intertext{applying pc-reduction on \ref{sri9} with $\pc$ we get }
\TVal{{\Pi};\Gamma,pc;c'}{e}{\hat{\tau'}} \label{sri9.6} \\
\intertext{applying clearance (Lemma \ref{lemma:clearance}) in \ref{sri9.6}}
\rafjudge{{\Pi}}{c'}{pc}  \label{sri9.7 we get} \\
\intertext{applying clearance lemma in \ref{sri9}}
\rafjudge{{\Pi}}{c'}{\hat{pc}} \label{sri11} \\ 
\intertext{from \ref{sri6} and \ref{sri10} we get}
\stflowjudge*{pc}{\hat{pc}} \label{sri12} \\
\intertext{from \ref{sri11} and \ref{sri9} \ref{sri10.5} 
and \ref{sri11} and RET we get}
\TVal{{\Pi};\Gamma,\hat{\pc};c'}{\ret{e}{c}}{\hat{\tau}} \label{sri13} \\
\intertext{from \ref{sri3} and \ref{sri7} and Expect \ref{lemma:expecta} lemma we get}
\TValGpcc{E[\expecta{\hat{\tau}}]}{\tau'}\label{sri13.5} \\
\intertext{from \ref{sri4}, \ref{sri5}, \ref{sri6},\ref{sri13.5} and \ruleref{Tail} rule we get}
\TValGpcS{\stackapp{E[\expecta{\hat{\tau}}]}{c}{t}}{[\hat{\tau}]\tau} \label{sri14} \\
\intertext{from \ref{sri12},\ref{sri13},\ref{sri14} \ref{sri5} and \ruleref{Head} rule we get}
\TValGpcS{\distcon{\ret{e}{c}}{c'}{\stackapp{E[\expecta{\hat{\tau}}]}{c}{t}}}{\tau} \label{sri2}
\end{align}
\textbf{Case \ruleref{E-RetV}}. Given, \\
${\distcon{\ret{v}{c}}{c'}{\stackapp{E[\expecta{\tau'}]}{c}{t}}}$ 
$\Longrightarrow$ \\ 
${\distcon{E[\returnv{pc'^{{\integ}{\avail}}}{w}]}{c}{t}}$ \\
\begin{align}
\intertext{(where $\tau'=\says{pc'^{{\integ}{\avail}}}{\tau''}$)}
\TValGpcS{\distcon{\ret{v}{c}}{c'}
{\stackapp{E[\expecta{\tau'}]}{c}{t}}}{\tau} \label{retv1} \\
\rafjudge{\Pi}{c}{pc}  \label{retv1.1} \\
\intertext{need to prove}
\TValGpcS{\distcon{E[\returnv{pc'^{{\integ}{\avail}}}{w}]}{c}{t}}{\tau} \label{retv2} \\
\intertext{inverting \ref{retv1} we get}
\TValP{\Gamma;pc';c'}{\ret{v}{c}}{\tau'} \label{retv3} \\
\TValGpcS{\stackapp{E[\expecta{\tau'}]}{c}{t}}{[\tau']\tau} \label{retv4} \\
\stflowjudge*{pc}{pc'} \label{retv5} \\
\rafjudge{\Pi}{c'}{pc} \label{retv6} \\
\intertext{inverting \ref{retv3} we get} 
\TValP{\Gamma;pc';c'}{v}{\tau''} \label{retv7} \\ 
\rafjudge{\Pi}{c}{\UB{\tau''}} \label{retv7.5} \\
\rafjudge{\Pi}{c'}{pc'} \label{retv8} \\
\intertext{inverting \ref{retv4}}
\TValP{\Gamma;\hat{pc};c}{E[\expecta{\tau'}]}{\hat{\tau}} \label{retv9} \\
\stflowjudge*{pc}{\hat{pc}} \label{retv10} \\ 
\rafjudge{\Pi}{c}{pc} \label{retv11} \\
\TValGpcS{t}{[\hat{\tau}]\tau} \label{retv12} \\
\intertext{applying clearance lemma on \ref{retv9}}
\rafjudge{\Pi}{c}{\hat{pc}} \label{retv13} \\
\intertext{from \ref{retv13}, \ref{retv7}, \ref{retv7.5} and ValuesHost lemma}
\TValP{\Gamma;\hat{pc};c}{v}{\tau''} \label{retv14} \\
\intertext{pc reduction Lemma \ref{lemma:pcreduction} in \ref{retv14}}
\TValP{\Gamma;{pc};c}{v}{\tau''} \label{retv14.1} \\
\intertext{applying UNITM rule in \ref{retv14.1}}
\TValP{\Gamma;{pc};c}{\returnv{pc'^{{\integ}{\avail}}}{v}}{\tau'} \label{retv14.15} \\
\intertext{clearance lemma on \ref{retv14.15}}
\rafjudge{{\Pi}}{c}{pc} \label{retv14.2}\\
\intertext{from \ref{retv14.15}, \ref{retv9} and RExpect \ref{lemma:rexpect} lemma }
\TValP{\Gamma;\hat{pc};c}{E[\returnv{pc'^{{\integ}{\avail}}}{v}]}{\hat{\tau}} \label{retv15} \\  
\intertext{from \ref{retv15}, \ref{retv10}, \ref{retv12}, \ref{retv14.2} 
and HEAD rule we get}
\TValGpcS{\distcon{E[\returnv{pc'^{{\integ}{\avail}}}{v}]}{c}{t}}{\tau} 
\end{align}

\textbf{Case \ruleref{E-DStep}.} Straightforward using Induction hypothesis\\
\textbf{Case \ruleref{B-RunLeft}.} Given,
${\distcon{\bracket{E[\runa{\hat{\tau}}{e}{c'}]}{e_2}}{c}{t}}$ \\
$\Longrightarrow {\distcon{\bracket{\ret{e}{c}}{\bullet}}
{c'}{\stackapp{\bracket{E[\expecta{\hat{\tau}}]}{e_2}}{c}{t}}}$ \\
and
\begin{align}
\TValGpcS{\distcon{\bracket{E[\runa{\hat{\tau}}{e}{c'}]}{e_2}}
{c}{t}}{\tau} \label{sr1} 
\end{align}
need to prove \\
\TValGpcS{\distcon{\bracket{\ret{e}{c}}{\bullet}}
{c'}{\stackapp{\bracket{E[\expecta{\hat{\tau}}]}{e_2}}
{c}{t}}}{\tau} 
\begin{align}
\intertext{Inverting \ref{sr1} we get}
\TValGpcc{\bracket{E[\runa{\hat{\tau}}{e}{c'}]}{e_2}}
{\tau'}\label{sr3} \\
\TValGpcS{t}{[\tau']\tau} \label{sr4} \\
\rafjudge{\Pi}{c}{pc} \label{sr5} \\
\stflowjudge*{pc}{pc'} \label{sr6} \\
\intertext{inverting \ref{sr3} we get}
\TValP{\Gamma;pc'',c}{E[\runa{\hat{\tau}}{e}{c'}]}{\tau'} \label{sr3.1} \\
\TValP{\Gamma;pc'',c}{e_2}{\tau'} \label{sr3.2} \\
\drflowjudge{\Pi}{H^{\pi}}{\cfun{\tau'}} \label{sr3.3} \\
\drflowjudge{\Pi}{H^{\pi} \join pc'}{pc''} \label{sr3.4} \\
\intertext{applying lemma CTX to \ref{sr3.1} we get} 
\TVal{{{\Pi}};\Gamma;pc'';c}
{\runa{\hat{\tau}}{e}{c'}}{\hat{\tau}}\label{sr7} \\
\intertext{inverting \ref{sr7} we get}
\TVal{{{\Pi}};\Gamma,\hat{\pc};c'}{e}{\hat{\tau'}} \label{sr9} \\
\intertext{where}
\hat{\tau'} = \says{\hat{pc}^{{\integ}{\avail}}}{\hat{\tau}}\\
\rafjudge{{\Pi}}{c}{pc''} \label{sr9.5} \\
\stflowjudge*{pc''}{\hat{pc}} \label{sr10} \\
\rafjudge{\Pi}{c}{\UB{\hat{\tau}}} \label{sr10.5} \\
\intertext{applying pc-reduction on \ref{sr9} with $\pc$ we get }
\TVal{{\Pi};\Gamma,pc';c'}{e}{\hat{\tau'}} \label{sr9.6} \\
\intertext{applying clearance in \ref{sr9.6}}
\rafjudge{{\Pi}}{c'}{pc'}  \label{sr9.7 we get} \\
\intertext{applying clearance lemma in \ref{sr9}}
\rafjudge{{\Pi}}{c'}{\hat{pc}} \label{sr11} \\ 
\intertext{from \ref{sr6}, \ref{sr3.1} and \ref{sr10} we get}
\stflowjudge*{pc}{\hat{pc}} \label{sr12} \\
\intertext{from \ref{sr11} and \ref{sr9} \ref{sr10.5} 
and \ref{sr11} and RET we get}
\TVal{{\Pi};\Gamma,\hat{\pc};c'}{\ret{e}{c}}{\hat{\tau}} \label{sr13} \\
\intertext{from \ref{sr3.1} and \ref{sr7}   
and Expect \ref{lemma:expecta} lemma we get}
\TValP{\Gamma;pc'';c}{E[\expecta{\hat{\tau}}]}{\tau'}\label{sr13.5} \\
\intertext{from \ref{sr3.2}, \ref{sr3.3}, \ref{sr3.4}, 
\ref{sr13.5} we get}
\TValP{\Gamma;pc'';c}{\bracket{E[\expecta{\hat{\tau}}]}
{e_2}}{\tau'}\label{sr13.5} \\
\intertext{from \ref{sr13} and \ruleref{BullL} rule }
\TVal{{\Pi};\Gamma,\hat{\pc};c'}{\bracket{\ret{e}{c}}
{\bullet}}{\hat{\tau}} \label{sr13} \\
\intertext{from \ref{sr4}, \ref{sr5}, \ref{sr6},\ref{sr13.5} and \ruleref{Tail} rule we get}
\TValGpcS{\stackapp{\bracket{E[\expecta{\hat{\tau}}]}
{e_2}}{c}{t}}{[\hat{\tau}]\tau} \label{sr14}
\end{align}
from \ref{sr12},\ref{sr13},\ref{sr14} \ref{sr5} and 
\ruleref{Head} rule we get \\
$\TValGpcS{\distcon{\bracket{\ret{e}{c}}{\bullet}}
{c'}{\stackapp{\bracket{E[\expecta{\hat{\tau}}]}{e_2}}{c}{t}}}{\tau}$

\begin{figure*}
  {\small
\[
\begin{array}{l l  l}
\observefc{\faila{\tau}}{\Pi}{p} & = &  \circ \\
\observefc{x}{\delegcontext}{p} & = &   x \\
\observefc{()}{\delegcontext}{p} & = &   \circ \\
\observefc{\return{\ell}{e}}{\delegcontext}{p} & = & 
           \return{\ell}{\observefc{e}{\delegcontext}{p}} \\
\observefc{\returng{\ell}{v}}{\delegcontext}{p} & = & 
    \begin{cases}
   \returng{\ell}{\observefc{v}{\delegcontext}{p}} &  \drflowjudge{\delegcontext}{\ell^{π}}{p^{\pi}} \\
     \circ &   \text{otherwise}
    \end{cases}\\[1.25em]
\observefc{\lamc{x}{\tau}{\pc}{e}}{\delegcontext}{p} & = &
    \begin{cases}
  \lamc{x}{\tau}{\pc}{\observefc{e}{\delegcontext}{p}} &  \drflowjudge{\delegcontext}{\pc^{π}}{p^{\pi}} \\
    \circ &   \text{otherwise}
   \end{cases}\\[1.25em]
\observefc{\tlam{X}{\pc}{e}}{\delegcontext}{p} & = & 
   \begin{cases}
   \tlam{X}{\pc}{\observefc{e}{\delegcontext}{p}} &  \drflowjudge{\delegcontext}{\pc^{π}}{p^{\pi}} \\
   \circ  &  \text{otherwise}
 \end{cases}\\[1.25em]
\observefc{e~e'}{\delegcontext}{p} & = & \observefc{e}{\delegcontext}{p}~ \observefc{e'}{\delegcontext}{p} \\
\observefc{\pair{e_1}{e_2}}{\delegcontext}{p} & = & 
  \begin{cases}
   \circ & \observefc{e_i}{\delegcontext}{p} = \circ \\
  \pair{\observefc{e_1}{\delegcontext}{p}}{\observefc{e_2}{\delegcontext}{p}} & \text{otherwise}
    \end{cases} \\
   \observefc{\proji{e}}{\delegcontext}{p} & = & \proji{\observefc{e}{\delegcontext}{p}} \\
\observefc{\inji{e}}{\delegcontext}{p} & = &  
   \begin{cases}
    \circ & \observefc{e}{\delegcontext}{p} = \circ \\
   \inji{\observefc{e}{\delegcontext}{p}} & \text{otherwise}
   \end{cases} \\
   \observefc{\casexp{e}{x}{e_1}{e_2}}{\delegcontext}{p} & = & \casexpa{\observefc{e}{\delegcontext}{p}} \\ 
    && \quad \phantom{\mid}~\casexpb{x}{\observefc{e_1}{\delegcontext}{p}} \\ 
    && \quad \casexpc{x}{\observefc{e_2}{\delegcontext}{p}} \\ 
    \observefc{\bind{x}{e}{e'}}{\delegcontext}{p} & = & \bind{x}{\observefc{e}{\delegcontext}{p}}{\observefc{e'}{\delegcontext}{p}} \\
\observefc{\select{e_1}{e_2}}{\delegcontext}{p} & = &
      \select{\observefc{e_1}{\Pi}{p}}{\observefc{e_2}{\Pi}{p}} \\
\observefc{\compare{e_1}{e_2}}{\delegcontext}{p} & = &
\compare{\observefc{e_1}{\Pi}{p}}{\observefc{e_2}{\Pi}{p}} \\ 
\observefc{\distcon{e}{c}{s}}{\Pi}{\ell} & = &
      \begin{cases}
        \observefc{e}{\Pi}{\ell} & s=\emptystack \\
        \observefc{e}{\Pi}{\ell}\& \observefc{s}{\Pi}{\ell} \\
      \end{cases} \\[1.25em] 
\observefc{\stackapp{e}{c}{s}}{\Pi}{\ell} & = &
     \begin{cases}
       \observefc{e}{\Pi}{\ell}  & s=\emptystack \\
       \observefc{e}{\Pi}{\ell} :: \observefc{s}{\Pi}{\ell} \\
     \end{cases}\\ [1.25em]
\observefc{E[\runa{\tau}{e}{c}]}{\Pi}{\ell} & = & \observefc{E[e]}{\Pi}{\ell}\\
\observefc{\ret{e}{c}}{\Pi}{\ell} & = & \observefc{e}{\Pi}{\ell} \\ 
\end{array}
\]
}
\caption{Observation function for intermediate FLAQR terms.}
\label{fig:observe}
\end{figure*}

\textbf{Case \ruleref{B-RunRight}.} \\
Same as above.\\

\textbf{Case \ruleref{B-RetLeft}} \\
Same as above.\\

\textbf{Case \ruleref{B-RetRight}} \\
Same as above.\\

\textbf{Case \ruleref{B-RetV}} \\
Same as above.\\

\end{proof}

\ciNonInterference*
\begin{proof}
From Subject reduction of bracketed 
FLAQR constructs we can write
$$\TValP{\Gamma;pc}{\distcon{v}{c}{\emptystack}}{\says{\ell}{\tau}}$$
We will write $\outproj{v}{i}$ as $v_i$.
We need to show 
$\observet{{v}_{1}}{\Pi}{\ell}{\pi} =
\observet{{v}_{2}}{\Pi}{\ell}{\pi}$.
Since $v_i$ has protected type, we know that
it is of form $\returnv{\ell}{v_i'}$
Since $\drflowjudge{\Pi}{\ell^{\pi}}{\ell^{\pi}}$
for $\pi \in \{"c", "i"\}$, 
we just have to show\\
$\returnv{\ell}{\observet{v_1'}{\Pi}{\ell}{\pi}} = 
\returnv{\ell}{\observet{v_2'}{\Pi}{\ell}{\pi}}$\\
Which is true if we can show \\
 ${\observet{v_1'}{\Pi}{\ell}{\pi}}
= {\observet{v_2'}{\Pi}{\ell}{\pi}}$.\\
Which can be easily shown by induction
over structure of $v_i$s. 
\end{proof}

\availNonInterference*
\begin{proof}
From subject reduction ( \ref{subjRedhost} and \ref{subjRedinter} ) we know, 
$\TValGpcw{\outproj{f}{i}}{\says{\ell_{\quo}}{\tau}}$.
Because ${\protsA{\quo}{(\says{\ell_{\quo}}{\tau})}}$ and
$\reachn{H} \in \A_{\trans{\quo}}~$ we can write
$\notrecrafjudge{\Pi}{\reachn{H}}{\says{\ell_{\quo}}{\tau}}$
from rule \ruleref{Q-Guard}.
This ensures if
$\outproj{f}{1} \neq \faila{\says{\ell_{\quo}}{\tau}}$,
then $\outproj{f}{2} \neq \faila{\says{\ell_{\quo}}{\tau}}$, 
and vice-versa.
\end{proof}

\section{Correctness of Blame Semantics.}
\label{sec:blameproof}
\begin{figure}
{\footnotesize
\begin{flalign*}
& \boxed{\entails{\mathcal{C}_1}{\mathcal{C}_2}} &
\end{flalign*}
\begin{mathpar}
        \Rule{C-IN}{\rafjudge{\Pi}{\ell'}{\ell}}
        {\entails{\inF{\ell'}{\FN}}{\inF{\ell}{\FN}}}

	\Rule{C-OR}{\entails{\blame_1}{\inF{\ell}{\FN}} \quad \quad \entails{\blame_2}{\inF{\ell}{\FN}}}
	{\entails{\blame_1 \OR \blame_2}{\inF{\ell}{\FN}}}

        \Rule{C-ANDL}{\exists i \in \{1,2\}. ~\entails{\blame_i}{\inF{\ell}{\FN}}}
        {\entails{\blame_1 \AND \blame_2}{\inF{\ell}{\FN}}} 
\end{mathpar}
}
\caption{Blame Membership.}
\label{fig:blame member}
\end{figure}

\begin{figure*}
\derule{C-CompareFail}{v_1 \neq v_2 ~~ \blame':= \last(v_1,
v_2, \blame,\ell_1,\ell_2)}
{\concon{\comparea{\says{\txcmp{ℓ₁}{ℓ₂}}{τ}}
{\returnv{ℓ₁}{v_1}}{\returnv{ℓ₂}{v_2}}}{c}{s}{\blame}}
{\concon{\faila{\says{\txcmp{ℓ₁}{ℓ₂}}{τ}}}{c}{s}{\blame'}}
\begin{flalign*}
\intertext{Change this algo to over approximate the attacker}
\last(x,y,\blame,\ell_1,\ell_2) =~& \MATCH~ (x,y)~ \WITH \\
|& (\returnv{\ell}{v_1},\returnv{\ell}{v_2}) =  
                           \IF~ \entails{\blame}{\inF{\ell_1}{\FN}}~ \THEN~ \blame \\
&~~~~~~~~~~~~~~~~~~~~~~~~~~~~~~~~~\ELSE~\IF~ \entails{\blame}{\inF{\ell_2}{\FN}}~\THEN~ \blame\\
&~~~~~~~~~~~~~~~~~~~~~~~~~~~~~~~~~\ELSE~\IF~ \entails{\blame}{\inF{\ell}{\FN}}~\THEN~ \blame\\
&~~~~~~~~~~~~~~~~~~~~~~~~~~~~~~~~~\ELSE ~\last(v_1,v_2,\blame,\ell,\ell) \\
|& (\return{\ell}{e_1},\return{\ell}{e_2}) = \last(e_1,e_2,\blame,\ell_1,\ell_2) \\ 
|& (\injia{\tau}{e_1},\injia{\tau}{e_2}) = \last(e_1,e_2,\blame,\ell_1,\ell_2) \\
|& (\paira{e_{11}}{e_{12}}{\tau},\paira{e_{21}}{e_{22}}{\tau}) = 
                                     \last(e_{11},e_{21},(\last(e_{12},e_{22},\blame,\ell_1,\ell_2)),\ell_1,\ell_2) ~ \\
|& (\runa{\tau}{e_1}{p},\runa{\tau}{e_2}{p}) = \last(e_1,e_2,\blame,\ell_1,\ell_2) \\
|& (\selecta{e_1}{e_2}{\tau},\selecta{e_1'}{e_2'}{\tau}) = \last(e_1,e_1',
(\last(e_2,e_2',\blame,\ell_1,\ell_2)),\ell_1,\ell_2) \\ 
|& (\comparea{\tau}{e_1}{e_2},\comparea{\tau}{e_1'}{e_2'}) = \last(e_1,e_1', 
(\last(e_2,e_2',\blame,\ell_1,\ell_2)),\ell_1,\ell_2) \\
|& (\lamc{x}{\tau}{pc}{e_1},\lamc{x}{\tau}{pc}{e_2}) = \last(e_1,e_2,\blame,\ell_1,\ell_2) \\
|& (\tlam{X}{pc}{e_1},\tlam{X}{pc}{e_2}) = \last(e_1,e_2,\blame,\ell_1,\ell_2) \\
|& (\proji{e_1},\proji{e_2}) = \last(e_1,e_2,\blame,\ell_1,\ell_2) \\ 
|& (\bind{x_1}{e_1}{e_1'},\bind{x_2}{e_2}{e_2'}) = \last(e_1,e_2,\last(e_1',e_2',\blame,\ell_1,\ell_2)
,\blame,\ell_1,\ell_2) \\
|& (\casexpan{e_1}{z}{e_2}{e_3}{\tau},\casexpan{e_1'}{z}{e_2'}{e_3'}{\tau}) = \\
&~~~~~~~~~~~~~~~~~~~~~~~~~~~~~~~~~\last(e_1,e_1',\last(e_2,e_2',\last(e_3,e_3',\blame,\ell_1,\ell_2),\ell_1,\ell_2),\ell_1,\ell_2)\\
|& (f_1,f_2) =  \IF~ f_1 = f_2 \THEN ~ \blame \\
 &~~~~~~~~~~~~~~~~~~~  \ELSE ~\IF~ \entails{\blame}{\inF{\ell_1}{\FN}}~ \THEN~ \blame \\
 &~~~~~~~~~~~~~~~~~~~  \ELSE ~\IF~ \entails{\blame}{\inF{\ell_2}{\FN}} ~\THEN~ \blame~ \\
 &~~~~~~~~~~~~~~~~~~~  \ELSE ~ ((\inF{\ell_1}{\FN}) \AND \blame) \OR 
                 ((\inF{\ell_2}{\FN}) \AND \blame)
\end{flalign*}
\caption{Function to construct blame constraint $\blame$.}
\label{fig:Blameconst}
\end{figure*}

\begin{figure*}
\begin{flalign*}
\inF{\ell}{\FN} \AND \blame~ =>~ & \MATCH~ \blame~ \WITH \\
              |& \FN=\emptyset => \inF{\ell}{\FN} \\
              |& \inF{\ell'}{\FN} => \inF{\ell}{\FN} \AND \inF{\ell'}{\FN} \\
              |& \blame_1 \OR \blame_2 => (\inF{\ell}{\FN} \AND \blame_1) \OR (\inF{\ell}{\FN} \AND \blame_2) \\
              |& \blame_1 \AND \blame_2 => \blame_1 \AND \blame_2  \AND \inF{\ell}{\FN} \\
\end{flalign*}
\caption{AND Function.}
\label{fig:ANDfunction}
\end{figure*}

\textbf{In the following proofs, a possible faulty set 
$\FN$ is referred as a faulty set that 
satisfies the blame constraint, or $\FN$ implied by $\blame$. 
And $\recrafjudge{\Pi}{b}{\tau}$ 
is equivalent to saying $\recrafjudge{\Pi}{b^{{\integ}{\avail}}}{\tau}$.
}

\begin{lemma}[Reach ConjL]\label{lemma:reachLemma1}
If $\rafjudge{\Pi}{\reachn{b}}{t}$ then 
$\rafjudge{\Pi}{\reachn{b \wedge p}}{t}$
\end{lemma}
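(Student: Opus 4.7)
The plan is to reduce this to transitivity of the acts-for relation together with monotonicity of the integrity/availability projection. Recall that $\reachn{b}$ denotes the maximal attacker authority derivable from $b$, which (per the discussion preceding the attacker definition in Section~\ref{sec:availAttacks}) is built from the $\integ$ and $\avail$ projections of $b$. So it suffices to show $\rafjudge{\Pi}{\reachn{b\wedge p}}{\reachn{b}}$ and then chain with the hypothesis using \ruleref{Trans}.

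First I would apply \ruleref{ConjL} to obtain $\rafjudge{\Pi}{b\wedge p}{b}$. Next I would lift this under each projection $\pi\in\{\integ,\avail\}$ using \ruleref{Proj}, yielding $\rafjudge{\Pi}{(b\wedge p)^{\pi}}{b^{\pi}}$ for each such $\pi$. Combining these two with \ruleref{ConjR} (or, more directly, by taking the conjunction of projections that appears in the definition of $\reachn{\cdot}$) gives $\rafjudge{\Pi}{\reachn{b\wedge p}}{\reachn{b}}$.

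Finally, I would apply \ruleref{Trans} to this derivation and the hypothesis $\rafjudge{\Pi}{\reachn{b}}{t}$ to conclude $\rafjudge{\Pi}{\reachn{b\wedge p}}{t}$.

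The only non-routine point is making sure the $\reachn{\cdot}$ notation is unfolded in a way that matches how it is used elsewhere (in particular in the definition of the toleration set and in Theorem~\ref{th:failresult0}); once that unfolding is fixed, the derivation is a direct application of standard FLAM lattice rules from Figure~\ref{fig:partialactsforfull}, and there is no interaction with the partial conjunction/disjunction rules needed.
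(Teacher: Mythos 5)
Your proof is correct and amounts to essentially the same argument as the paper's: the paper applies \ruleref{ConjL} directly to the hypothesis to get $\rafjudge{\Pi}{\reachn{b}\wedge\reachn{p}}{t}$ and then identifies $\reachn{b}\wedge\reachn{p}$ with $\reachn{b\wedge p}$, whereas you derive $\rafjudge{\Pi}{\reachn{b\wedge p}}{\reachn{b}}$ first and chain with \ruleref{Trans}. Both routes rest on the same two facts — conjunction weakening on the left of $\succeq$ and the commutation of the $\integ$/$\avail$ projections with $\wedge$ — so the difference is immaterial.
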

\begin{proof}
Given, $\rafjudge{\Pi}{\reachn{b}}{t}$.
So from \ruleref{ConjL} we can write:
$\rafjudge{\Pi}{\reachn{b}\wedge \reachn{p}}{t}$,
which is same as writing
$\rafjudge{\Pi}{\reachn{b \wedge p}}{t}$.
\end{proof}

\begin{lemma}[Reach ConjR Type]\label{lemma:reachLemma2}
If $\recrafjudge{\Pi}{{b}}{\says{\ell_1}{\tau}}$  and 
$\recrafjudge{\Pi}{{b}}{\says{\ell_2}{\tau}}$ then 
$\recrafjudge{\Pi}{{b}}
{\says{(\txsel{\ell_1}{\ell_2})}{\tau}}$
\end{lemma}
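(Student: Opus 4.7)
The plan is to proceed by case analysis on the rule used at the root of each of the two derivations $\recrafjudge{\Pi}{b}{\says{\ell_1}{\tau}}$ and $\recrafjudge{\Pi}{b}{\says{\ell_2}{\tau}}$. Since the goal is a fails judgment on a $\says{}{}$ type, only three rules from Figure~\ref{fig:avail-actsfor} can apply at the root of each premise: \ruleref{A-Type}, \ruleref{A-Avail}, or \ruleref{A-IntegCom}. This gives nine combinations, but several collapse.

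First, I would dispatch the \ruleref{A-Type} branches: whenever either hypothesis is derived by \ruleref{A-Type}, inversion yields $\recrafjudge{\Pi}{b}{\tau}$, and a single application of \ruleref{A-Type} to the outer $\says{(\txsel{\ell_1}{\ell_2})}{\tau}$ delivers the conclusion. So it suffices to handle the cases in which both derivations end in \ruleref{A-Avail} or \ruleref{A-IntegCom}.

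The key computation is the availability projection of the select label. Using \ruleref{ProjDistConj}, \ruleref{ProjBasis}, \ruleref{ProjIdemp}, \ruleref{ProjPOrR}, and \ruleref{POrOr}, one can show
\[
(\txsel{\ell_1}{\ell_2})^{\avail} \;\equiv\; \ell_1^{\avail} \wedge \ell_2^{\avail},
\]
which I would state as a small auxiliary observation. Then in the (A-Avail, A-Avail) case, inversion gives $\rafjudge{\Pi}{b^{\avail}}{\ell_i^{\avail}}$ for $i=1,2$, so \ruleref{ConjR} yields $\rafjudge{\Pi}{b^{\avail}}{\ell_1^{\avail} \wedge \ell_2^{\avail}}$, and \ruleref{A-Avail} closes the goal.

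The main obstacle is the mixed cases, particularly (A-Avail, A-IntegCom) and (A-IntegCom, A-IntegCom): there, the two derivations expose $b$'s power over $\ell_i$ in different coordinates (availability on one side, compare-integrity on the other), and there is no single rule for $\says{(\txsel{\cdot}{\cdot})}{\tau}$ that takes such a heterogeneous input. My plan here is to normalise each \ruleref{A-IntegCom} premise into an availability premise on the corresponding outer label, by showing $\rafjudge{\Pi}{b^{\integ}}{\ell_{ij}^{\integ}}$ (together with the shape $\ell_i = \txcmp{\ell_{i1}}{\ell_{i2}}$) implies $\rafjudge{\Pi}{b^{\avail}}{\ell_i^{\avail}}$ via the compare availability identity $(\txcmp{\ell_{i1}}{\ell_{i2}})^{\avail} \equiv \ell_{i1}^{\avail} \vee \ell_{i2}^{\avail}$ combined with the built-in relationship between integrity and availability authority within a principal (mirroring the one-sided availability loss captured by \ruleref{AndPAnd}/\ruleref{POrOr} at the integrity layer). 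Once every premise is reduced to an availability acts-for on $\ell_i^{\avail}$, the same \ruleref{ConjR}/\ruleref{A-Avail} argument as in the symmetric case applies uniformly, which is what I would carry out last to finish the proof.
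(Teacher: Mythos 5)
Your overall skeleton matches the paper's: the paper also first discharges the case where $\recrafjudge{\Pi}{b}{\tau}$ holds via \ruleref{A-Type}, and otherwise reduces the goal to $\rafjudge{\Pi}{b^{\avail}}{\ell_1^{\avail}}$ and $\rafjudge{\Pi}{b^{\avail}}{\ell_2^{\avail}}$, combines them with \ruleref{ConjR}, and closes with \ruleref{A-Avail} (your computation of $(\txsel{\ell_1}{\ell_2})^{\avail} \equiv \ell_1^{\avail}\wedge\ell_2^{\avail}$ is exactly the implicit step there). You are also right that the genuinely delicate point is the \ruleref{A-IntegCom} branch, which the paper's own proof passes over with ``it is obvious that $\rafjudge{\Pi}{b^{\avail}}{\ell_i^{\avail}}$.''

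However, your proposed resolution of that branch does not go through. To conclude via \ruleref{A-Avail} you need $\rafjudge{\Pi}{b^{\avail}}{(\txcmp{\ell_{i1}}{\ell_{i2}})^{\avail}}$, i.e.\ (by \ruleref{DisjR}) $\rafjudge{\Pi}{b^{\avail}}{\ell_{ij}^{\avail}}$ for some $j$; but \ruleref{A-IntegCom} only hands you $\rafjudge{\Pi}{b^{\integ}}{\ell_{ij}^{\integ}}$. There is no ``built-in relationship between integrity and availability authority within a principal'' in Figure~\ref{fig:partialactsforfull} that converts one projection into the other --- on the contrary, \ruleref{ProjBasis} makes distinct projections essentially orthogonal, and even when $b$ is of the form $\reachn{b'}=b'^{\integ}\wedge b'^{\avail}$, the premise of \ruleref{A-IntegCom} constrains only its integrity component. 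Nor can you route around this by applying \ruleref{A-IntegCom} to the conclusion, since the outer label there is $\txsel{\ell_1}{\ell_2}$, not a compare form. So the step you describe as the finishing move is an appeal to a rule that does not exist; as written, the mixed and doubly-\ruleref{A-IntegCom} cases remain open (and would need either an additional hypothesis tying $b$'s integrity authority over $\ell_{ij}$ to availability authority, or a strengthened induction elsewhere).
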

\begin{proof}
If $\recrafjudge{\Pi}{{b}}{\tau}$ then
$\recrafjudge{\Pi}{{b}}
{\says{(\txsel{\ell_1}{\ell_2})}{\tau}}$ is true as well 
from \ruleref{A-Type}.
If $\notrecrafjudge{\Pi}{{b}}{\tau}$
then it is obvious that
$\rafjudge{\Pi}{b^{\avail}}{\ell_1^{\avail}}$ and
$\rafjudge{\Pi}{b^{\avail}}{\ell_2^{\avail}}$.
Which along with \ruleref{R-ConjR} implies,
$\rafjudge{\Pi}{b^{\avail}}{\ell_1^{\avail} 
\wedge \ell_2^{\avail}}$.
Thus from \ruleref{A-Avail} we can write
$\recrafjudge{\Pi}{{b}}
{\says{(\txsel{\ell_1}{\ell_2})}{\tau}}$
\end{proof}

\begin{lemma}[FAIL RESULT ONESTEP TO FAIL]\label{FN1}
If $\concon{e}{c}{s}{\blame}$ $\stepsone 
\concon{\faila{\tau}}{c'}{s'}{\blame'}$ and $e$ is a source level 
term, then $e$ must be of the 
form $\comparea{\tau}{v_1}{v_2}$
and the only evaluation step that has been taken
to transition from $e$ to $\faila{\tau}$ is
\ruleref{C-CompareFail}.
\end{lemma}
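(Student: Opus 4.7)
The plan is to proceed by case analysis on the evaluation rule used for the single-step reduction. The central observation is that, by inspection of the evaluation context grammar in Figure~\ref{fig:semantics}, the term $\faila{\tau}$ is not of the form $E[e']$ for any non-trivial evaluation context $E$: each production of $E$ yields a term headed by an application, a pair, a projection, a bind, a return, a compare, a select, or a \texttt{ret}, none of which match $\faila{\tau}$. Hence whenever \ruleref{E-Step} is applied in a derivation that produces $\faila{\tau}$ at the top, the context must be the hole $[\cdot]$, so the fail-introducing reduction happens at the root of the active expression.

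First I would dispatch the global rules in Figure~\ref{fig:globsem}. Rule \ruleref{E-Run} produces the active expression $\ret{e_0}{c}$ and \ruleref{E-RetV} produces $E[\returnv{\ell}{v}]$; neither of these is $\faila{\tau}$, so both are ruled out. Rule \ruleref{E-RetFail} requires $\ret{(\faila{\tau'})}{c}$ to already sit in the active expression, which is impossible since $e$ is a source-level term and therefore contains no fail subterm. The only remaining possibility is \ruleref{E-DStep}, which lifts a local reduction, and by the previous paragraph that local reduction must be applied at the top of $e$.

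Next I would case-split on the local rule used. Every rule in Figure~\ref{fig:Fullfailprop} carries a fail subterm on its left-hand side, so all are ruled out by source-levelness of $e$. Among the remaining local rules in Figure~\ref{fig:annotatedlocsem}, the only rules whose right-hand side is a bare fail term are \ruleref{E-CompareFail}, \ruleref{E-CompareFailL}, \ruleref{E-CompareFailR}, and \ruleref{E-SelectFail}; the last three also require a fail as an immediate subterm of the redex, which again contradicts source-levelness. That leaves \ruleref{E-CompareFail}, whose blame-annotated form in Figure~\ref{fig:ccomparefail} is \ruleref{C-CompareFail}, forcing $e = \comparea{\tau}{\returnv{\ell_1}{v_1}}{\returnv{\ell_2}{v_2}}$ with $v_1 \neq v_2$, which is exactly the conclusion. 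The main obstacle is being exhaustive across the many local, failure-propagation, and global rules; the underlying insight—that a fail cannot be produced deep inside an evaluation context in one step because $E[\cdot]$ cannot be the outermost $\faila{\tau}$ unless $E$ is the hole—is short, so the proof is essentially just careful bookkeeping.
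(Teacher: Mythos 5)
Your proof is correct and is exactly the argument the paper intends: the paper's own proof is just ``Trivial (by inspection on evaluation rules),'' and you have carried out that inspection explicitly, correctly using source-levelness to rule out every fail-propagation rule and the key observation that $\faila{\tau}$ cannot sit under a non-trivial evaluation context to force the redex to the root.
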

\begin{proof} 
Trivial(by inspection on evaluation rules).
\end{proof}

\begin{lemma}[FAIL SUBEXP ONESTEP]\label{FN2}
If $\concon{e}{c}{s}{\blame} \stepsone
\concon{e'}{c'}{s'}{\blame'}$ and $e$ is a 
source level term but $e'$ has a $\faila{}$ term in it
then the only evaluation step that has been taken is
\ruleref{C-CompareFail}.                                                  \end{lemma}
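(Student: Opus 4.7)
The plan is to proceed by induction on the derivation of the step $\concon{e}{c}{s}{\blame} \stepsone \concon{e'}{c'}{s'}{\blame'}$, with the essential work carried out by case analysis on the evaluation rule at the root. The hypothesis that $e$ is source-level (contains no $\faila{}$ subterm) propagates through every contextual premise, so in each case the specific subterm being reduced is itself fail-free.

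For the two contextual rules \ruleref{E-Step} and \ruleref{E-RetStep}, I would appeal to the induction hypothesis directly: the enclosing context (an evaluation context $E$ or a $\ret{\cdot}{c}$ wrapper) is itself source-level and cannot introduce any new $\fail$ syntax, so any $\faila{}$ subterm appearing in $e'$ must have been produced by the inner step, which by IH is \ruleref{C-CompareFail}. For the base rules, I would sort them into three groups. First, every failure-propagation rule in Figure \ref{fig:Fullfailprop} (\ruleref{E-AppFail}, \ruleref{E-SealedFail}, \ruleref{E-InjFail}, \ruleref{E-ProjFail}, \ruleref{E-PairFailL}, \ruleref{E-PairFailR}, \ruleref{E-CaseFail}, \ruleref{E-TAppFail}), together with \ruleref{E-CompareFailL}, \ruleref{E-CompareFailR}, \ruleref{E-SelectFail}, and the global rule \ruleref{E-RetFail}, requires a $\faila{}$ term in a syntactic premise, so none of them can fire on a fail-free redex. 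Second, the ordinary reduction rules (\ruleref{E-App}, \ruleref{E-TApp}, \ruleref{E-UnPair}, \ruleref{E-Sealed}, \ruleref{E-BindM}, \ruleref{E-Case}, \ruleref{E-Compare}, \ruleref{E-Select}) and the global rules \ruleref{E-Run}, \ruleref{E-RetV} each produce a right-hand side that is, by inspection, built from fail-free subterms of $e$ via substitution, wrapping in $\returnv{\ell}{\cdot}$, or reconfiguration of the stack; none of them introduce $\faila{}$ syntax. Third and finally, \ruleref{C-CompareFail} produces a $\faila{\says{(\txcmp{\ell_1}{\ell_2})}{\tau}}$ term and has no $\fail$-bearing premises, so it is the unique rule that can introduce a $\fail$ term from a source-level $e$.

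The main obstacle is simply being exhaustive across the roughly dozen reduction rules spread over Figures \ref{fig:annotatedlocsem}, \ref{fig:Fullfailprop}, and \ref{fig:globsem}; there is no conceptual difficulty once the rules are tabulated, since in each case the conclusion is read off the shape of the right-hand side of the rule. The bracketed rules of Figure \ref{fig:brackets} can be ignored here, as source-level terms contain no bracketed subexpressions and the semantic rules cannot introduce bracket syntax either.
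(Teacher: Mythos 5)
Your proposal is correct and matches the paper's own argument, which simply states the result is ``trivial by inspection on the evaluation rules''; you have carried out that inspection explicitly, correctly identifying \ruleref{C-CompareFail} as the only rule that introduces a $\faila{}$ term without requiring one in its premises. No gaps.
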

\begin{proof}
Trivial(by inspection on  evaluation rules). 
\end{proof}

\begin{lemma}[INTRO GTR] \label{gtrdotAND}
If $\recrafjudge{\Pi}{\ell_1}{\tau}$ then for any $\ell_2$
$\recrafjudge{\Pi}{\ell_1 \wedge \ell_2}{\tau}$
\end{lemma}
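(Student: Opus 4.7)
The plan is to proceed by induction on the derivation of $\recrafjudge{\Pi}{\ell_1}{\tau}$, using as the central observation that $\ell_1 \wedge \ell_2$ acts for $\ell_1$ (by \ruleref{ConjL}), and consequently $(\ell_1 \wedge \ell_2)^{\pi}$ acts for $\ell_1^{\pi}$ for every projection $\pi$ (by \ruleref{Proj} followed by \ruleref{Trans}). This lets us ``lift'' any acts-for premise involving $\ell_1$ into an acts-for premise involving $\ell_1 \wedge \ell_2$ by a single application of transitivity.

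For the base cases \ruleref{A-Avail} and \ruleref{A-IntegCom}, the derivation has a premise of the form $\rafjudge{\Pi}{\ell_1^{\avail}}{\ell'^{\avail}}$ or $\rafjudge{\Pi}{\ell_1^{\integ}}{\ell_j^{\integ}}$ respectively. In each subcase, we compose the lifted acts-for judgment $\rafjudge{\Pi}{(\ell_1 \wedge \ell_2)^{\pi}}{\ell_1^{\pi}}$ with the hypothesis by \ruleref{Trans}, then reapply the same fails rule to obtain $\recrafjudge{\Pi}{\ell_1 \wedge \ell_2}{\tau}$.

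For the inductive cases \ruleref{A-Pair}, \ruleref{A-Sum}, \ruleref{A-Fun}, and \ruleref{A-Type}, the derivation's premise is itself a fails judgment on a subcomponent of $\tau$ (e.g., $\recrafjudge{\Pi}{\ell_1}{\tau_i}$ for \ruleref{A-Pair}, or $\recrafjudge{\Pi}{\ell_1}{\tau}$ for \ruleref{A-Type}). The induction hypothesis yields the corresponding fails judgment with $\ell_1 \wedge \ell_2$ in place of $\ell_1$, and reapplying the same structural rule discharges the case.

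The main obstacle is essentially bookkeeping: verifying that the transitivity step works uniformly for every projection $\pi \in \{\confid, \integ, \avail\}$ under \ruleref{Proj}, and noting there is nothing subtle to check when $\ell_2$ is chosen adversarially (for instance $\ell_2 = \top$ still gives $\ell_1 \wedge \top \succeq \ell_1$, and likewise for any other choice). Since every rule of the fails judgment either threads $\ell_1$ unchanged into a recursive premise or uses it only on the left of an acts-for premise, the same lift argument applies in every case with no additional side conditions.
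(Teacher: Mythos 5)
Your proof is correct and matches the paper's intent: the paper's own proof is just the one-line remark that the result is ``straightforward from the fails judgment rules and \ruleref{ConjL},'' which is precisely the induction you carry out, lifting each acts-for premise via \ruleref{ConjL}, \ruleref{Proj}, and \ruleref{Trans} in the base cases and applying the induction hypothesis in the structural cases. You have simply filled in the details the paper elides.
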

\begin{proof}
Proof is straightforward from $fail$ judgements $\gtrdot$ 
(Figure \ref{fig:avail-actsfor}) rules and \ruleref{ConjL}
rule.
\end{proof}

\begin{lemma}[FAIL ONESTEP NON-EMPTY BLAME SET] \label{FN3}
If  $\concon{e}{c}{s}{\blame}$ $\stepsone$ 
$\concon{e'}{c'}{s'}{\blame'}$, $e$ is a source level term, 
and $e'$ contains $\faila{\tau}$, then 
for any $\FN$ that satisfies $\blame'$ the
following condition holds:\\
$~~~~~~~~~~~~~\recrafjudge{\Pi}{{b'}}{\tau}$, where
$b'=\bigwedge_{\forall f \in \FN} f$
\end{lemma}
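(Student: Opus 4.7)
The plan is to first reduce the problem to a single evaluation case and then carry out an induction on the structure of the compared values that mirrors the recursive structure of the $\last$ function.

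First, I would apply Lemma~\ref{FN2} (FAIL SUBEXP ONESTEP). Since $e$ is source-level and $e'$ contains a $\faila{}$ subterm, the step taken must be \ruleref{C-CompareFail}. Hence $e$ has the form
$\comparea{\says{(\txcmp{\ell_1}{\ell_2})}{\tau'}}{\returnv{\ell_1}{v_1}}{\returnv{\ell_2}{v_2}}$ with $v_1 \neq v_2$, the resulting fail term has type $\tau = \says{(\txcmp{\ell_1}{\ell_2})}{\tau'}$, and $\blame' = \last(v_1,v_2,\blame,\ell_1,\ell_2)$. By well-typedness of the \ruleref{Compare} term, $v_1$ and $v_2$ have the same underlying type $\tau'$, so they share the same top-level constructor at each layer (up to nested label annotations).

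Next, I would prove the following strengthened statement by induction on the structure of $v_1$ (equivalently, on $\tau'$): for any $\FN$ satisfying $\last(v_1,v_2,\blame,\ell_1,\ell_2)$, letting $b' = \bigwedge_{f\in\FN}f$, we have $\recrafjudge{\Pi}{b'}{\says{(\txcmp{\ell_1}{\ell_2})}{\tau'}}$. The base cases cover the "flat" clause of $\last$, namely $(f_1,f_2)$ with $f_1\neq f_2$ and neither entailment holding: here $\blame'$ is $(\inF{\ell_1}{\FN}\AND\blame)\OR(\inF{\ell_2}{\FN}\AND\blame)$, so every satisfying $\FN$ contains some $p$ with $\rafjudge{\Pi}{p}{\ell_j}$ for $j\in\{1,2\}$. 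Thus by \ruleref{ConjL} and Lemma~\ref{lemma:reachLemma1}, $\rafjudge{\Pi}{b'^{\integ}}{\ell_j^{\integ}}$, and \ruleref{A-IntegCom} yields the goal. The "already entailed" base cases are handled the same way, since the entailment of $\inF{\ell_1}{\FN}$ or $\inF{\ell_2}{\FN}$ in $\blame$ (and hence $\blame'$) forces $b'$ to act for $\ell_1$ or $\ell_2$ by rule \ruleref{C-In}.

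The inductive case that requires care is the nested $(\returnv{\ell}{v_1'},\returnv{\ell}{v_2'})$ clause, where $\last$ recurses with the \emph{inner} label $\ell$ replacing $\ell_1,\ell_2$. Here I would apply the induction hypothesis at the inner type $\says{\ell}{\tau''}$, using the relabelled call $\last(v_1',v_2',\blame,\ell,\ell)$, to obtain $\recrafjudge{\Pi}{b'}{\says{\ell}{\tau''}}$ for every satisfying $\FN$. Then \ruleref{A-Type} lifts this up through the outer $\says{(\txcmp{\ell_1}{\ell_2})}{\cdot}$ wrapper, giving the goal. For the remaining structural cases ($\return{}{}$, $\injia{}{}$, $\paira{}{}{}$, $\lamc{}{}{}{}$, $\bind{}{}{}$, $\casexpan{}{}{}{}{}$, $\selecta{}{}{}$, $\comparea{}{}{}$, $\runa{}{}{}$, etc.), $\last$ simply walks into subterms without changing $\ell_1,\ell_2$, and inductive hypothesis together with Lemma~\ref{gtrdotAND} (to "absorb" the extra conjuncts added by nested $\last$ calls into $b'$) immediately give the result via \ruleref{A-IntegCom}.

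The main obstacle is managing the bookkeeping where nested $\last$ calls update $\blame$ in a pipelined fashion (e.g., the pair case threads $\last(e_{12},e_{22},\cdot,\ldots)$ into the outer call). To handle this cleanly I would prove an auxiliary monotonicity lemma stating that if $\blame_2 = \last(x,y,\blame_1,\ell_1,\ell_2)$ then $\entails{\blame_2}{\inF{\ell}{\FN}}$ whenever $\entails{\blame_1}{\inF{\ell}{\FN}}$, so that the constraints accumulated by inner $\last$ calls are not lost when later ones are applied. Once this monotonicity and the case analysis above are in place, the result follows uniformly.
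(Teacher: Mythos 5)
Your proof is correct and follows essentially the same route as the paper's: both reduce via Lemma~\ref{FN2} to the \ruleref{C-CompareFail} case and then argue that every possible faulty set implied by $\last(v_1,v_2,\blame,\ell_1,\ell_2)$ must contain a principal acting for $\ell_1$, $\ell_2$, or an inner label $\ell$, concluding via \ruleref{A-IntegCom} (or \ruleref{A-Avail}/\ruleref{A-Type} for the inner-label case) together with Lemma~\ref{gtrdotAND}. The only difference is organizational: the paper performs a direct three-way case analysis on the possible output shapes of $\last$ (the $\ell_1/\ell_2$ disjunction, the inner-label conjunct, or $\blame$ unchanged), whereas you make the recursion of $\last$ explicit as a structural induction on the compared values plus a monotonicity lemma for the threaded blame updates --- a somewhat more rigorous rendering of the same argument.
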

\begin{proof}
Since transition from $e$ to $e'$ introduces a
$\faila{\tau}$ term, from lemma \ref{FN2}, we 
know that the only evaluation step that was taken is \ruleref{C-CompareFail}.
Without loss of generality we can state that 
$e= E[\comparea{\says{\txcmp{\ell_1}{\ell_2}}{\tau'}}
{\returnv{\ell_1}{v_1}}{\returnv{\ell_2}{v_2}}]$
and $\blame' = \last(v_1,v_2,\blame,\ell_1,\ell_2)$, for some
$\ell_1,\ell_2$ and $\tau'$. In the following 
proof for $\FN$ and 
$\FN'$ that satisfies $\blame$ and $\blame'$ respectively,
$b$ and $b'$ will mean $\bigwedge_{\forall f \in \FN} f$ and 
$\bigwedge_{\forall f \in \FN'} f$ respectively.
There are three possibilities for $\blame'$.
\begin{enumerate}
\item $\boldsymbol{\blame' = ((\inF{\ell_1}{\FN}) \AND \blame) \OR 
((\inF{\ell_2}{\FN}) \AND \blame):}$
It is straightforward to say that,
for any $\FN$ that satisfies 
$\blame$, $\FN'=\FN \cup \{\ell_i\}$ 
satisfies $\blame'$, for $i \in \{1,2\}$.
If $b$ is conjunction of labels in $\FN$ then 
$b \wedge \ell_i$, $i \in \{1,2\}$ are 
conjunction of labels in $\FN'$.
We know, $\recrafjudge{\Pi}{\ell_i}
{\says{(\txcmp{\ell_1}{\ell_2})}{\tau'}}$.
From lemma ITRO GTR \ref{gtrdotAND} 
we can write for label $b$,
$\recrafjudge{\Pi}{(\ell_i \wedge b)}
{\says{(\txcmp{\ell_1}{\ell_2})}{\tau'}}$.
Beacuse conjunctions of labels in $\FN'$ 
are of form $b'=(\ell_i \wedge b)$,
we can say that for any $\FN'$ that 
satisfies $\blame'$ the following condition holds\\
$~~~~~~~~~~\recrafjudge{\Pi}{{b'}}{\says{(\txcmp{\ell_1}{\ell_2})}{\tau'}}$, where
$b'=\bigwedge_{\forall f \in \FN'} f$.\\

\item $\boldsymbol{\blame' = (\inF{\ell}{\FN}) \AND \blame:}$ 
Here $\ell$ is some inner layer in $\tau'$,
i.e. $\tau = \says{(\txcmp{\ell_1}{\ell_2})}{\tau'} = 
\says{(\txcmp{\ell_1}{\ell_2})}{(...(\says{\ell}{\tau''})...)}$.\\
It is straightforward to say that,
for any $\FN$ that satisfies $\blame$, $\FN'= \{\ell\} \cup \FN$ 
satisfies $\blame'$.
If $b$ is conjunction of labels in $\FN$ then 
$\ell \wedge b$ is conjunction of labels in $\FN'$.
We know $\recrafjudge{\Pi}{(\ell)}{\tau}$.
From lemma ITRO GTR \ref{gtrdotAND} we can write for any label $b$,
$\recrafjudge{\Pi}{(\ell \wedge b)}{\tau}$.
Beacuse conjunctions of labels in 
$\FN'$ are of the form $b'=(\ell \wedge b)$,
we say that for any $\FN'$ that satisfies 
$\blame'$ the following condition holds\\
$~~~~~~~~~~\recrafjudge{\Pi}{{b'}}{\tau}$, where
$b'=\bigwedge_{\forall f \in \FN'} f$.\\

\item $\boldsymbol{\blame' = \blame:}$ 
This case occurs when \ruleref{C-CompareFail} rule does not 
update $\blame$ beacuse the label $\ell$ that is responsible for 
generating $\faila{}$ is already included in all possible
$\FN$ in $\blame$. That means  
$\entails{\blame}{\inF{\ell}{\FN}}$ and
$\recrafjudge{\Pi}{\ell}{\tau}$ 
(since the end result is fail and $\ell$ is the responsible label). 
Thus it is straightforward from lemma \ref{gtrdotAND} that
for label $b$, where $b=b_1 \wedge ... \ell ... \wedge b_k$ 
$\recrafjudge{\Pi}{{b}}{\tau}$.
Thus we have,
$~~~~~~~~~~\recrafjudge{\Pi}{{b}}{\tau}$, where
$b=\bigwedge_{\forall f \in \FN} f$.\\

\end{enumerate}
\end{proof}

\begin{lemma}[GTRDOT OR] \label{gtrdottrans}
If $\recrafjudge{\Pi}{\ell}{\says{\ell_1}{\tau}}$ then 
$\recrafjudge{\Pi}{\ell}{\says{(\ell_1 \vee \ell_2)}{\tau}}$.
\end{lemma}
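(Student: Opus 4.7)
The plan is to proceed by induction on (equivalently, case analysis on the last rule of) the derivation of $\recrafjudge{\Pi}{\ell}{\says{\ell_1}{\tau}}$. From Figure~\ref{fig:avail-actsfor}, only three rules can conclude a judgment with an outer $\says{\ell'}{\cdot}$ constructor: \ruleref{A-Type}, \ruleref{A-Avail}, and \ruleref{A-IntegCom}. In each case I will construct a derivation of $\recrafjudge{\Pi}{\ell}{\says{(\ell_1 \vee \ell_2)}{\tau}}$.

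The \ruleref{A-Type} case is immediate: the premise $\recrafjudge{\Pi}{\ell}{\tau}$ does not mention the outer label, so applying \ruleref{A-Type} again with outer label $\ell_1 \vee \ell_2$ discharges the goal. The \ruleref{A-Avail} case proceeds by acts-for reasoning on availability projections: from $\rafjudge{\Pi}{\ell^{\avail}}{\ell_1^{\avail}}$, \ruleref{DisjR} gives $\rafjudge{\Pi}{\ell^{\avail}}{\ell_1^{\avail} \vee \ell_2^{\avail}}$, and \ruleref{ProjDistDisj} rewrites the right-hand side as $(\ell_1 \vee \ell_2)^{\avail}$; then \ruleref{A-Avail} concludes. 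Both steps appeal only to the standard lattice rules of Figure~\ref{fig:partialactsforfull}.

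The main obstacle is the \ruleref{A-IntegCom} case: here $\ell_1$ must be of the form $\txcmp{\ell_1'}{\ell_2'}$ and we only have $\rafjudge{\Pi}{\ell^{\integ}}{\ell_j'^{\integ}}$ for some $j \in \{1,2\}$. The goal's outer label $(\txcmp{\ell_1'}{\ell_2'}) \vee \ell_2$ is not itself a partial conjunction, so \ruleref{A-IntegCom} does not apply directly, and we have no availability information to feed into \ruleref{A-Avail}. My plan is to observe that the underlying semantic intuition is monotonicity of fails under weakening of the outer label (since $\ell_1 \succeq \ell_1 \vee \ell_2$ by \ruleref{DisjR}, the outer protection becomes strictly weaker), and to discharge this case by reconstructing the \ruleref{A-IntegCom} derivation against the weaker label after distributing $\vee$ through the compare; concretely, I will use \ruleref{OrDistPOrR}/\ruleref{POrDistOrR} style distributivity together with a derived integrity inequality $\rafjudge{\Pi}{\ell^{\integ}}{((\txcmp{\ell_1'}{\ell_2'}) \vee \ell_2)^{\integ}}$ obtained by projecting $\ell_1 \succeq \ell_1 \vee \ell_2$. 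If this distributive reconstruction is blocked by the rule set as presented, I will strengthen the fails judgment with the monotonicity rule ``$\rafjudge{\Pi}{\ell_1}{\ell_1'}$ and $\recrafjudge{\Pi}{\ell}{\says{\ell_1}{\tau}}$ imply $\recrafjudge{\Pi}{\ell}{\says{\ell_1'}{\tau}}$,'' which subsumes all three cases at once using $\ell_1 \succeq \ell_1 \vee \ell_2$.
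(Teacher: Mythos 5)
Your overall strategy---inversion on the last rule of the $\gtrdot$ derivation, which must be one of \ruleref{A-Type}, \ruleref{A-Avail}, or \ruleref{A-IntegCom}---is the right one, and it is in fact more careful than the paper's own proof, which is a one-line appeal to ``inspecting the $\gtrdot$ rules and using \ruleref{DisjR}'' and implicitly treats only the \ruleref{A-Avail} case. Your \ruleref{A-Type} case is correct. In the \ruleref{A-Avail} case the conclusion is right but the cited rule is backwards: \ruleref{ProjDistDisj} gives $\rafjudge{\Pi}{(\ell_1 \vee \ell_2)^{\avail}}{\ell_1^{\avail} \vee \ell_2^{\avail}}$, which is the wrong direction for rewriting the right-hand side of an acts-for judgment. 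The clean derivation is: \ruleref{DisjR} gives $\rafjudge{\Pi}{\ell_1}{\ell_1 \vee \ell_2}$, \ruleref{Proj} lifts this to $\rafjudge{\Pi}{\ell_1^{\avail}}{(\ell_1 \vee \ell_2)^{\avail}}$, and \ruleref{Trans} with the premise $\rafjudge{\Pi}{\ell^{\avail}}{\ell_1^{\avail}}$ closes the case. This is a local fix, not a real gap.

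The genuine gap is the \ruleref{A-IntegCom} case, and your proposed escapes do not close it. When $\ell_1 = \txcmp{p}{q}$ and the premise was derived from $\rafjudge{\Pi}{\ell^{\integ}}{p^{\integ}}$ (say), you have no availability fact about $\ell$ at all, so \ruleref{A-Avail} is unavailable; and \ruleref{A-IntegCom} is keyed on the \emph{syntactic} form of the outer label being a partial conjunction, whereas $(\txcmp{p}{q}) \vee \ell_2$ is a disjunction. The distributivity rules you invoke (\ruleref{OrDistPAndR} etc.) are acts-for facts about principals; nothing in Figure~\ref{fig:avail-actsfor} closes $\gtrdot$ under replacing the outer label by an acts-for-equivalent one, so ``reconstructing the derivation against the distributed form'' has no rule to hang on. Your fallback---adding a monotonicity rule ``$\rafjudge{\Pi}{\ell_1}{\ell_1'}$ and $\recrafjudge{\Pi}{\ell}{\says{\ell_1}{\tau}}$ imply $\recrafjudge{\Pi}{\ell}{\says{\ell_1'}{\tau}}$''---would indeed subsume all three cases, but it changes the definition of the judgment rather than proving the lemma as stated, and one would then have to re-examine every theorem that quantifies negatively over $\gtrdot$ (e.g.\ \ruleref{Q-Guard} and Theorem~\ref{th:majorityLive}). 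So: two of three cases are essentially right, but the third is open as written; to your credit, you have isolated exactly the case that the paper's proof (and the similar Lemma on select types) silently skips.
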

\begin{proof}
Proof is straightforward inspecting 
$\gtrdot$ rule (Figure \ref{fig:avail-actsfor}) 
and using \ruleref{DisjR}.
\end{proof}

\begin{lemma} [FAIL RESULT ONESTEP] \label{FN4}
Given,
\begin{enumerate}
\item $\TValGpcw{\concon{e}{c}{s}{\blame}}{\tau'}$
\item $\concon{e}{c}{s}{\blame} \stepsone 
\concon{\faila{\tau'}}{c}{s}{\blame'}$ 
\end{enumerate} then for every ~$\FN$~ that satisfies ~$\blame'$~ 
it must be the case that 
$\recrafjudge{\Pi}{b'}{\tau'}$ , where  $b'=\bigwedge_{f \in \FN} f$
\end{lemma}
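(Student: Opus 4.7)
The plan is to prove \textsc{Fail Result Onestep} by case analysis on the single evaluation rule that produces the top-level $\faila{\tau'}$. Because the result of the step is literally $\faila{\tau'}$ rather than a term with fail embedded strictly inside an evaluation context, the rule \ruleref{E-Step} cannot wrap the reduction in a non-trivial context $E$; the applicable rules therefore split into two groups: (i) the single blame-updating rule \ruleref{C-CompareFail} from Figure~\ref{fig:ccomparefail}, which actually produces the failure at the top level and updates $\blame$ to $\blame'$, and (ii) the propagation rules of Figure~\ref{fig:Fullfailprop} (\ruleref{E-SealedFail}, \ruleref{E-AppFail*}, \ruleref{E-TAppFail}, \ruleref{E-InjFail}, \ruleref{E-PairFail*}, \ruleref{E-ProjFail}, \ruleref{E-CaseFail}, \ruleref{E-CompareFailL*}, \ruleref{E-SelectFail}), all of which leave $\blame'=\blame$.

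In case (i) the conclusion is obtained immediately from Lemma~\ref{FN3}: after the step, $e' = \faila{\tau'}$ trivially contains a fail term, so for every $\FN$ satisfying $\blame'$ the conjunction $b' = \bigwedge_{f\in\FN} f$ already satisfies $\recrafjudge{\Pi}{b'}{\tau'}$. In case (ii), since $\blame' = \blame$, I need to show that any $\FN$ already satisfying $\blame$ yields $\recrafjudge{\Pi}{b'}{\tau'}$. Here the plan is to use the fails-judgment rules of Figure~\ref{fig:avail-actsfor} as structural monotonicity lemmas: for \ruleref{E-SealedFail} the inner fail has type $\tau''$ with $\tau' = \says{\ell}{\tau''}$, so $\recrafjudge{\Pi}{b'}{\tau''}$ lifts to $\recrafjudge{\Pi}{b'}{\tau'}$ by \ruleref{A-Type}; analogous one-step applications of \ruleref{A-Sum}, \ruleref{A-Pair}, \ruleref{A-Fun}, and \ruleref{A-IntegCom} handle the remaining propagation rules. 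Well-typedness (hypothesis~1) is used only to ensure that these types line up structurally in the way each propagation rule requires.

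The main obstacle is that case (ii) genuinely needs an invariant tying $\blame$ to the fail subterms already present in $e$, because a single step of propagation does not itself deposit any new information into $\blame$. I plan to discharge this either by strengthening the statement of \textsc{Fail Result Onestep} with a side hypothesis of the form ``for every $\faila{\tau''}$ subterm of $e$ and every $\FN$ satisfying $\blame$, $\recrafjudge{\Pi}{b'}{\tau''}$,'' which is preserved by every one-step reduction (the only case that is non-trivial to preserve is \ruleref{C-CompareFail}, and this is exactly Lemma~\ref{FN3}), or, equivalently, by reading \textsc{Fail Result Onestep} as the inductive step inside the proof of Theorem~\ref{th:failresult0}, where the multi-step reduction from a source-level $e$ together with the initial constraint $\blame_{init}$ makes this invariant hold at every intermediate configuration. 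Either way, the genuinely new reasoning is concentrated in case (i), and the remaining work is routine pattern-matching on the propagation rules and the corresponding $\gtrdot$ rules.
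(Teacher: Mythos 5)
Your proposal follows essentially the same route as the paper's proof: that proof is organized as an ``induction over the typing derivation,'' but since each case is selected by the evaluation rule that produced the top-level $\faila{\tau'}$, it amounts to exactly your case split --- \ruleref{C-CompareFail} is discharged by Lemma~\ref{FN3} and the definition of $\last$, and each propagation rule is discharged by the corresponding structural rule of the fails judgment with $\blame'=\blame$ (\ruleref{A-Type} for \ruleref{E-SealedFail} and \ruleref{E-RetFail}, \ruleref{A-Sum} for \ruleref{E-InjFail}, Lemma~\ref{lemma:reachLemma2} for \ruleref{E-SelectFail}, and so on). The one place you go beyond the paper is in flagging that the propagation cases need an invariant relating $\blame$ to the $\faila{}$ subterms already present in $e$: the paper glosses over this by invoking an ``induction hypothesis'' at precisely those points, even though the fail subterm in question does not itself take a step, so the IH as literally stated does not apply, and the lemma is indeed vacuously false for an arbitrary $e$ whose embedded fail terms bear no relation to $\blame$. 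Your second proposed repair --- reading the lemma as the inductive step of Theorem~\ref{th:failresult0}, where Lemma~\ref{FN3} establishes the invariant at the moment each fail term is created by \ruleref{C-CompareFail} and this lemma merely preserves it through propagation --- is exactly how the paper actually deploys the result, so your account is, if anything, a more careful version of the printed argument rather than a different one.
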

\begin{proof}
Let us proof this by induction over typing derivation $\TValGpcw{e}{\tau}$.
In the proof  $b$ is always $\bigwedge_{f \in \FN} f$.
\begin{itemize}
\item \textbf{Case \ruleref{E-AppFailL}:}
$e=(\lamc{x}{\tau_1}{\pc}{\faila{\func{\tau_1}{pc}{\tau}}})e_1$ and
$e' = \faila{\tau}$.
From IH we get, 
$\recrafjudge{\Pi}{b}{\tau}$,
which is what we wanted.
\item \textbf{Case \ruleref{E-SealedFail}:} 
$e=\return{\ell}{\faila{\tau}}$ and 
$e'=\faila{\says{\ell}{\tau}}$\\
$\concon{\return{\ell}{\faila{\tau}}}{c}{s}{\blame}
\stepsone \concon{\faila{\says{\ell}{\tau}}}{c}{s}{\blame} $\\
Given, $\TValGpcw{\return{\ell}{\faila{\tau}}}{\says{\ell}{\tau}}$.\\
Therefore from induction hypothesis we have,
$\recrafjudge{\Pi}{b}{\tau}$,
$\blame$ does not get updated 
during the evaluation step.
Thus from the rule \ruleref{A-Type} we have,
$\recrafjudge{\Pi}{b}{\says{\ell}{\tau}}$, 

\item \textbf{Case \ruleref{E-InjFail}:} 
$e = \injia{\sumtype{\tau_1}{\tau_2}}{\faila{\tau_i}}$ and 
$e'= \faila{\sumtype{\tau_1}{\tau_2}}$ \\
$\injia{\sumtype{\tau_1}{\tau_2}}{\faila{\tau_i}}
\stepsone \faila{\sumtype{\tau_1}{\tau_2}}$ \\
Given, $\TValGpcw{\injia{\sumtype{\tau_1}{\tau_2}}{\faila{\tau_i}}}
{\sumtype{\tau_1}{\tau_2}}$\\
From IH we can say for every $\FN$ that satisfies $\blame$,
$\recrafjudge{\Pi}{b}{\tau_i}$ \\
$\blame$ does not get updated 
because of the evaluation step.
Using rule \ruleref{A-Sum},
we can write the following,
$\recrafjudge{\Pi}{b}{\sumtype{\tau_1}{\tau_2}}$

\item \textbf{Case \ruleref{E-CaseFail}:} 
From \ruleref{Case} typing rule we know 
$\rafjudge{\Pi}{{\tau_i}^{\avail}}{{\tau}^{\avail}}$ for
$i \in \{1,2\}$.
IH gives us $\recrafjudge{\Pi}{b}{\tau_i}$.
Thus we can write
$\recrafjudge{\Pi}{b}{\tau}$.

\item \textbf{Case \ruleref{E-ProjFail}:} 
Same as case \ruleref{E-AppFail} as the type annotation 
of $\faila{}$ does not change.

\item \textbf{Case \ruleref{E-SelectFail}:} \\
$e=\selecta{\faila{\says{\ell_3}{\tau}}}
{\faila{\says{\ell_4}{\tau}}}{\says{(\ell_3 \oplus \ell_4)}{\tau}}$ \\
$e'=\faila{\says{(\txsel{\ell_3}{\ell_4})}{\tau}}$\\
$\concon{e}{c}{s}{\blame}$
$\stepsone \concon{\faila{\says{(\txsel{\ell_3}{\ell_4})}
{\tau}}}{c}{s}{\blame}$\\
Given,
\begin{align}
\TValGpcw{\selecta{f_1}{f_2}{\says{(\txsel{\ell_3}{\ell_4})}{\tau}}}
{\says{(\txsel{\ell_3}{\ell_4})}{\tau}}
\end{align}
From IH,
\begin{align}
\recrafjudge{\Pi}{b}{\says{\ell_3}{\tau}} \label{fos1} \\
\recrafjudge{\Pi}{b}{\says{\ell_4}{\tau}} \label{fos2} \\
\intertext{Applying Lemma \ref{lemma:reachLemma2} with \ref{fos1} and 
\ref{fos2}}
\recrafjudge{\Pi}{b}{\says{(\txsel{\ell_3}{\ell_4})}{\tau}}\\
\end{align}
\item \textbf{Case \ruleref{E-CompareFailL}:}\\
$e=\comparea{\says{\txcmp{\ell_3}{\ell_4}}{\tau}}
{(\faila{\says{\ell_3}{\tau}})}{\returnv{\ell_4}{v_2}}$\\
$\concon{e}{c}{s}{\blame}$
$\stepsone
\concon{\faila{\says{\txcmp{\ell_3}{\ell_4}}{\tau}}}{c}{s}{\blame}$ \\
From IH we get,
\begin{align}
\recrafjudge{\Pi}{b}{\says{\ell_3}{\tau}} \\
\recrafjudge{\Pi}{b}{\says{\ell_4}{\tau}} \\
\intertext{using \ruleref{A-Type}, or \ruleref{A-Avail} or \ruleref{A-IntegCom} we can write the following}
\recrafjudge{\Pi}{b}{\says{(\txcmp{\ell_3}{\ell_4})}{\tau}} \label{compfail2}
\end{align}
\item \textbf{Case \ruleref{E-CompareFailR}:} same as \ruleref{E-CompareFailL}
\item \textbf{Case \ruleref{C-CompareFail}:}  Trivially true based on definition of $\last$.
and lemma \ref{FN3}.

\item \textbf{Case \ruleref{E-TAppFail}:} \\ $e={\faila{\tfunc{X}{pc}{\tau}}}$ and 
$\concon{\faila{\tfunc{X}{pc}{\tau_1}}~\tau_2}{c}{s}{\blame}$ $\stepsone$
$\concon{\faila{\subst{\tau_1}{X}{\tau_2}}}{c}{s}{\blame}$ 
Given,  
$\recrafjudge{\Pi}{b}{\tfunc{X}{pc}{\tau}}$,
which is same as saying
$\recrafjudge{\Pi}{b}{\subst{\tau}{X}{\tau'}}$
where $b = \bigwedge_{f \in \FN} f$\\

\item \textbf{Case \ruleref{E-RetFail}:} IH gives us $\recrafjudge{\Pi}{b}{\tau}$,
Then from \ruleref{A-Type} $\recrafjudge{\Pi}{b}{\says{pc^{{\integ}{\avail}}}{\tau}}$.
\item \textbf{Other cases:} neither fail propagates nor any change in $\blame$.
\end{itemize}
\end{proof}

\failresult*
\begin{proof}
$e$ does not have any $\faila{}$ terms in it and 
it steps to a $\faila{}$ term. From lemma \ref{FN2} we know that 
there has to be at least one \ruleref{C-CompareFail} step taken during the evaluation.
Either $e$ takes single step or multiple steps to produce the $\faila{\tau}$ 
term as the end result.
\begin{enumerate}
\item $\boldsymbol{\faila{\tau}}$ \textbf{is produced via single step:}
If $e$ takes a single step ,
\begin{itemize}
\item From lemma \ref{FN2} we know $e$ is of the form \\
$\comparea{(\txcmp{\ell_3}{\ell_4})}{v_1}{v_2}$, 
and from lemma \ref{FN3} we know that 
$\recrafjudge{\Pi}{b_i}{\tau}$ \\
\item Another possibility is that,
the last evaluation step produces the $\faila{\tau}$ result.\\
$\concon{e}{c}{\emptystack}{\emptyset}$ $\stepsone^{*}$ 
$\concon{e_{n-1}}{c_{n-1}}{s_{n-1}}{\blame_{n-1}}$ $\stepsone$ 
$\concon{\faila{\tau}}{c}{\emptystack}{\blame}$
From lemma \ref{FN1} $e_{n-1} = \comparea{\tau}{v_1}{v_2}$ and 
from lemma \ref{FN3} we know \\
$\recrafjudge{\Pi}{b_i}{\tau}$
\end{itemize}
\item $\boldsymbol{\faila{\tau}}$ 
\textbf{is produced by propagation of a $\faila{}$ term:}
This means the evaluation takes more than one step.  
Let us prove it by induction over structure of $e$.\\
$\concon{e}{c}{\emptystack}{\emptyset} \stepsone^{*} 
\concon{e_i}{c_i}{s_i}{\blame_i} \stepsone^{*} 
\concon{\faila{\tau}}{c_n}{s_n}{\blame_n}$\\
Without loss of generality we can say that \\ 
the $\faila{}$ term that propagates till 
the end is introduced first in expression $e_i$. That means there 
exists an expression $e_{i-1}$ such that it takes 
\ruleref{C-CompareFail} evaluation rule to step to $e_i$.\\
$\concon{e}{c}{\emptystack}{\emptyset} \stepsone^{*} 
\concon{e_{i-1}}{c_{i-1}}{s_{i-1}}{\blame_{i-1}} \stepsone 
\concon{e_i}{c_i}{s_i}{\blame_i} $ \\
$\stepsone^{*} \concon{\faila{\tau}}{c}{\emptystack}{\blame}$ 
\\
($e_{i-1}$ can be $e$ itself). From lemma \ref{FN3}, we know  
that because the step that is taken is \ruleref{C-CompareFail} we have\\
for faulty set $\FN$ satisfying $\blame_i$
$\recrafjudge{\Pi}{b_i}{\tau_i}$ where 
$b_i = \bigwedge_{f \in \FN}f$\\
From $e_i$ onwards any step taken is either 
going to propagate $\faila{}$ or is going to 
not touch the $\faila{}$ term at all.
For every evaluation step the invariant 
$\recrafjudge{\Pi}{b_i}{\tau_i}$ holds following
lemma \ref{FN4}. 
\end{enumerate}
\end{proof}

\end{document}